\numberwithin{equation}{section}
\newtheorem{theorem}{Theorem}[section]
\newaliascnt{corollary}{theorem}
\newtheorem{corollary}[corollary]{Corollary}
\newaliascnt{lemma}{theorem}
\newtheorem{lemma}[lemma]{Lemma}
\newaliascnt{proposition}{theorem}
\theoremstyle{definition}
\newaliascnt{definition}{theorem}
\newtheorem*{definition*}{Definition}
\theoremstyle{remark}
\newaliascnt{remark}{theorem}
\newtheorem*{remark*}{Remark}
\theoremstyle{remark}
\newaliascnt{example}{theorem}
\newtheorem*{example*}{Example}
\title{Depth based inference on conditional distribution with infinite dimensional data}
\author{Joydeep Chowdhury and Probal Chaudhuri\\ Indian Statistical Institute, Kolkata}
\date{ }
\begin{document}
\maketitle

\begin{abstract}
We develop inference and testing procedures for conditional dispersion and skewness in a nonparametric regression setup based on statistical depth functions.
The methods developed can be applied in situations, where the response is multivariate and the covariate is a random element in a metric space. This includes regression with functional covariate as a special case.
We construct measures of the center, the spread and the skewness of the conditional distribution of the response given the covariate using depth based nonparametric regression procedures.
We establish the asymptotic consistency of those measures and develop a test for heteroscedasticity and a test for conditional skewness.
We present level and power study for the tests in several simulated models.
The usefulness of the methodology is also demonstrated in a real dataset. In that dataset, our responses are the nutritional contents of different meat samples measured by their protein, fat and moisture contents, and the functional covariate is the absorbance spectra of the meat samples.
\end{abstract}

\section{Introduction} \label{sec:1}
A statistical depth function provides an ordering of the points such that the points close to the center of the distribution have higher depth values than those that are away from the center. Several depth functions are defined in the literature, e.g., the halfspace depth (\cite{tukey1975mathematics}, \cite{donoho1992breakdown}), the simplicial depth (\cite{liu1990notion}), the spatial depth (\cite{vardi2000multivariate}, \cite{serfling2002depth}), the projection depth (\cite{zuo2000general,zuo2000structural}), and many others.
Depth functions have been used earlier for various purposes like detecting outliers (\cite{chen2009outlier}), clustering (\cite{jornsten2004clustering}) and classification (\cite{ghosh2005maximum}, \cite{dutta2012robust}, \cite{li2012dd}).
We develop nonparametric regression methods using statistical depth functions. The response in our setup is multivariate, and the covariate is a random element in a separable metric space, which includes the cases of finite dimensional as well as infinite dimensional or function-valued covariates.
Nonparametric regression with functional covariate and real valued response has been studied in \cite{ferraty2006nonparametric}, \cite{ferraty2007nonparametric}, \cite{rachdi2007nonparametric}, \cite{chagny2014adaptive,chagny2016adaptive}, etc. Most of the authors investigated the conditional mean of a real valued or multivariate response and a functional covariate.
Nonparametric quantile regression with real valued or multivariate responses and functional covariates was investigated in \cite{ferraty2006nonparametric}, \cite{gardes2010functional} and \cite{chaouch2013nonparametric,chaouch2015vector}.

The traditional nonparametric regression methods estimate the conditional mean or median (see, e.g., \cite{hardle1990applied}, \cite{fan1996local}, \cite{wand1994kernel}, etc.), and provide information about only the center of the conditional distribution of the response given the covariate. These methods are not useful if one is interested in other features of the conditional distribution, like the conditional spread and heteroscedasticity in the sample. The methods we develop will provide simultaneous information about the center as well as other parts of the conditional distribution of the response.
In particular, we construct measures of conditional spread and conditional skewness of the response, and develop tests for heteroscedasticity and variation of conditional skewness over the covariate.

Tests for heteroscedasticity were studied in \cite{gupta1984distribution}, \cite{dette1998testing}, \cite{dette2002consistent}, \cite{holgersson2004testing} and \cite{dette2010robust}. These authors considered either parametric tests or nonparametric tests for real valued response.
In \cite{godfrey1991testing}, a test for symmetry of the distribution of the regression error was developed in the context of linear regression with univariate response.
Procedures to test for conditional symmetry in the context of regression in time series data and in autoregressive setup with econometric and financial data were developed in \cite{harvey1999autoregressive,harvey2000conditional}, \cite{bai2001consistent}, \cite{charoenrook2004conditional}, \cite{lanne2007modeling}, \cite{delgado2007nonparametric}, \cite{smith2007conditional}, \cite{bali2008role}, \cite{kuosmanen2009neoclassical}, \cite{grigoletto2009looking} and \cite{ghysels2011conditional}.
However, these authors too restricted themselves to mostly parametric tests, or nonparametric tests for real valued response in time series regression.

In this article, we develop methodology based on statistical depths, which can be used to investigate conditional spread and skewness in a nonparametric regression setup. Our proposed measures of conditional spread and skewness yield statistical tests for heteroscedasticity and conditional skewness when the response is multivariate and the covariate may be infinite dimensional.
To the best of our knowledge, there is no other nonparametric procedure to investigate heteroscedasticity and conditional skewness in such a general setup.

In \autoref{sec:2}, conditional depth based central regions are defined and several conditional depths are described. The asymptotic consistency of the estimates of the conditional depths and the conditional central regions are established in \autoref{sec:3}. In \autoref{sec:4}, measures of conditional spread are defined based on the conditional central regions, the consistency of their estimates is established, and a test of heteroscedasticity is proposed. In \autoref{sec:5}, depth based conditional median and conditional trimmed means are defined and the consistency of their estimates is discussed. Based on the conditional median, conditional trimmed means and the measure of conditional spread, measures of conditional skewness are defined in \autoref{sec:5}, and it is established that their estimates are asymptotically consistent. A test of conditional skewness is also proposed in \autoref{sec:5}. The methodology and the tests developed in this article are demonstrated using a real data in \autoref{sec:6}. Some concluding remarks are made in \autoref{sec:7}. The proofs of the theorems are provided in \autoref{sec:8}. In \autoref{supsec:1}, the performance of the tests is demonstrated in several simulated models, and in \autoref{supsec:2}, some additional mathematical details are provided.

\section{Conditional depth and central regions} \label{sec:2}
Let $ \mathbf{Y} $ be a random vector in $ \mathbb{R}^p $ and $ \mathbf{X} $ be a random element in a complete separable metric space $ ( \mathcal{C}, d ) $. Let $ \mu( \cdot \,|\, \mathbf{z} ) $ be the conditional probability measure of $ \mathbf{Y} $ given $ \mathbf{X} = \mathbf{z} $, and $ \rho( \cdot \,|\, \mathbf{z} ) $ be a conditional depth function on the response space $ \mathbb{R}^p $ related to $ \mu( \cdot \,|\, \mathbf{z} ) $. Values of the depth functions usually lie between 0 and 1, and consequently they are uniformly bounded.

Let $ \mathbf{x} $ be a fixed element in $ \mathcal{C} $. Define $ D( \alpha \,|\, \mathbf{x} ) = \{ \mathbf{y} \in \mathbb{R}^p \,|\, \rho( \mathbf{y} \,|\, \mathbf{x} ) \ge \alpha \} $ for $ \alpha \in \mathbb{R} $. $ D( \alpha \,|\, \mathbf{x} ) $ is called the conditional $ \alpha $-trimmed region \index{conditional $ \alpha $-trimmed region} of $ \mathbf{Y} $ given $ \mathbf{X} = \mathbf{x} $ corresponding to the conditional depth $ \rho( \cdot \,|\, \mathbf{x} ) $ (cf. \cite{zuo2000general,zuo2000structural}).
For $ 0 \le r < 1 $, let $ \alpha( r ) = \sup\{ \alpha \,|\, \mu( D( \alpha \,|\, \mathbf{x} ) \,|\, \mathbf{x} ) \ge r \} $, which is finite when $ \rho( \mathbf{y} \,|\, \mathbf{x} ) $ is uniformly bounded over $ \mathbf{y} $. We define the set $ D( \alpha( r ) \,|\, \mathbf{x} ) $ as the conditional depth based $ 100 r $\% central region \index{conditional central region} of $ \mathbf{Y} $ given $ \mathbf{X} = \mathbf{x} $ with respect to the conditional depth $ \rho( \cdot \,|\, \mathbf{x} ) $.
Clearly, $ \mu( D( \alpha( r ) \,|\, \mathbf{x} ) \,|\, \mathbf{x} ) \ge r $, and $ \mu( D( \alpha( r ) \,|\, \mathbf{x} ) \,|\, \mathbf{x} ) = r $ whenever $ \mu( \{ \mathbf{y} \in \mathbb{R}^p \,|\, \rho( \mathbf{y} \,|\, \mathbf{x} ) = \alpha( r ) \} \,|\, \mathbf{x} ) = 0 $. Further, for any $ \mathbf{y}_1 \in D( \alpha( r ) \,|\, \mathbf{x} ) $ and $ \mathbf{y}_2 \not\in D( \alpha( r ) \,|\, \mathbf{x} ) $, $ \rho( \mathbf{y}_1 \,|\, \mathbf{x} ) \ge \rho( \mathbf{y}_2 \,|\, \mathbf{x} ) $.

The conditional $ \alpha $-depth contour \index{conditional $ \alpha $-depth contour} $ \delta( \alpha \,|\, \mathbf{x} ) $ of $ \mathbf{Y} $ given $ \mathbf{X} = \mathbf{x} $ is defined as the boundary of $ D( \alpha \,|\, \mathbf{x} ) $, and for $ 0 < r < 1 $, the conditional $ 100 r $\% central region contour \index{conditional central region contour} of $ \mathbf{Y} $ given $ \mathbf{X} = \mathbf{x} $ is defined as $ \delta( \alpha( r ) \,|\, \mathbf{x} ) $. The conditional central region contour determine the shape of the conditional central region.

To estimate $ \rho( \cdot \,|\, \mathbf{x} ) $, $ D( \alpha \,|\, \mathbf{x} ) $ and $ D( \alpha( r ) \,|\, \mathbf{x} ) $ based on a random sample $ ( \mathbf{X}_1, \mathbf{Y}_1 ), \allowbreak \ldots, ( \mathbf{X}_n, \mathbf{Y}_n ) $, we adopt a nonparametric regression procedure. Let $ W_{i,n}( \mathbf{x} ) $ be the weight on the observation pair $ ( \mathbf{X}_i, \mathbf{Y}_i ) $, $ i = 1, \ldots, n $, $ W_{i,n}( \mathbf{x} ) \ge 0 $ for each $ i $ and $ \sum_{i=1}^{n} W_{i,n}( \mathbf{x} ) = 1 $. The sample conditional probability measure of $ \mathbf{Y} $ given $ \mathbf{X} = \mathbf{x} $ is defined as
\begin{align*}
\mu_n( B \,|\, \mathbf{x} ) = \sum_{i=1}^{n} \mathbb{I}( \mathbf{Y}_i \in B ) W_{i,n}( \mathbf{x} ) ,
\end{align*}
where $ B $ is any Borel set. The conditional sample depth function $ \rho_n( \cdot \,|\, \mathbf{x} ) $ is related to $ \mu_n( \cdot \,|\, \mathbf{x} ) $ in the same way as the conditional population depth function $ \rho( \cdot \,|\, \mathbf{x} ) $ is related to $ \mu( \cdot \,|\, \mathbf{x} ) $. The conditional sample $ \alpha $-trimmed region of $ \mathbf{Y} $ given $ \mathbf{X} = \mathbf{x} $ is defined as $ D_n( \alpha \,|\, \mathbf{x} ) = \{ \mathbf{y} \in \mathbb{R}^p \,|\, \rho_n( \mathbf{y} \,|\, \mathbf{x} ) \ge \alpha \} $. The conditional sample $ 100 r $\% central region of $ \mathbf{Y} $ given $ \mathbf{X} = \mathbf{x} $ is $ D_n( \alpha_n( r ) \,|\, \mathbf{x} ) $, where $ \alpha_n( r ) = \sup\{ \alpha \,|\, \mu_n( D_n( \alpha \,|\, \mathbf{x} ) \,|\, \mathbf{x} ) \ge r \} $.
The estimates of $ \delta( \alpha \,|\, \mathbf{x} ) $ and $ \delta( \alpha( r ) \,|\, \mathbf{x} ) $ are denoted as $ \delta_n( \alpha \,|\, \mathbf{x} ) $ and $ \delta_n( \alpha_n( r ) \,|\, \mathbf{x} ) $, respectively, and they are defined as the boundaries of $ D_n( \alpha \,|\, \mathbf{x} ) $ and $ D_n( \alpha_n( r ) \,|\, \mathbf{x} ) $, respectively. 

The weights $ \{  W_{i,n}( \mathbf{x} ) \} $ are constructed based on the covariate values $ \mathbf{X}_1, \ldots, \mathbf{X}_n $. There are several methods of selecting such weights. In the kernel regression method, we choose a kernel function $ K( \cdot ) $ and a bandwidth $ h $, and the weight $ W_{i,n}( \mathbf{x} ) $ is
\begin{align*}
W_{i,n}( \mathbf{x} ) = \frac{K( h^{-1} d( \mathbf{x}, \mathbf{X}_i ) )}{\sum_{i=1}^{n} K( h^{-1} d( \mathbf{x}, \mathbf{X}_i ) )} ,
\end{align*}
where $ d( \cdot, \cdot ) $ is the metric in the covariate space. This leads to a Nadaraya-Watson type kernel estimate (\cite{nadaraya1964estimating}, \cite{watson1964smooth}). In the nearest neighbor method, we choose a positive integer $ k $ for the number of nearest neighbors to be considered, and define
\begin{align*}
h( \mathbf{x}, k, n ) = \min\left\{ h \middle\arrowvert \sum_{i=1}^{n} \mathbb{I}( d( \mathbf{x}, \mathbf{X}_i ) \le h ) \ge k \right\} .
\end{align*}
The weight $ W_{i,n}( \mathbf{x} ) $ in this case is
\begin{align*}
W_{i,n}( \mathbf{x} ) = \frac{\mathbb{I}( d( \mathbf{x}, \mathbf{X}_i ) \le h( \mathbf{x}, k, n ) )}{\sum_{i=1}^{n} \mathbb{I}( d( \mathbf{x}, \mathbf{X}_i ) \le h( \mathbf{x}, k, n ) )} .
\end{align*}

We now describe the conditional versions of three well-known depth functions. The conditional halfspace depth \index{conditional halfspace depth} (\cite{tukey1975mathematics}, \cite{donoho1992breakdown}) of $ \mathbf{Y} $ given $ \mathbf{X} = \mathbf{x} $ is defined as $ \rho( \mathbf{y} \,|\, \mathbf{x} ) = \inf\{ \mu( \{ \mathbf{v} \in \mathbb{R}^p \,|\, \mathbf{u}^t \mathbf{v} \ge \mathbf{u}^t \mathbf{y} \} \,|\, \mathbf{x} ) \,|\, \mathbf{u} \in \mathbb{R}^p \} $. Its estimate is $ \rho_n( \mathbf{y} \,|\, \mathbf{x} ) = \inf\{ \mu_n( \{ \mathbf{v} \in \mathbb{R}^p \,|\, \mathbf{u}^t \mathbf{v} \ge \mathbf{u}^t \mathbf{y} \} \,|\, \mathbf{x} ) \,|\, \mathbf{u} \in \mathbb{R}^p \} $.
The conditional projection depth \index{conditional projection depth} (\cite{zuo2000general,zuo2000structural}) is defined as
\begin{align*}
\rho( \mathbf{y} \,|\, \mathbf{x} ) = \left[ 1 + \sup_{\| \mathbf{u} \| = 1} \frac{| \mathbf{u}^t \mathbf{y} - m( \mathbf{u}^t \mathbf{Y} \,|\, \mathbf{x} ) |}{m( | \mathbf{u}^t \mathbf{Y} - m( \mathbf{u}^t \mathbf{Y} \,|\, \mathbf{x} ) | \,|\, \mathbf{x} )} \right]^{-1} ,
\end{align*}
where $ m( \mathbf{u}^t \mathbf{Y} \,|\, \mathbf{x} ) $ and $ m( | \mathbf{u}^t \mathbf{Y} - m( \mathbf{u}^t \mathbf{Y} \,|\, \mathbf{x} ) | \,|\, \mathbf{x} ) $ are the conditional medians of $ \mathbf{u}^t \mathbf{Y} $ and $ | \mathbf{u}^t \mathbf{Y} - m( \mathbf{u}^t \mathbf{Y} \,|\, \mathbf{x} ) | $ given $ \mathbf{X} = \mathbf{x} $, respectively.
The conditional sample projection depth is
\begin{align*}
\rho_n( \mathbf{y} \,|\, \mathbf{x} ) = \left[ 1 + \sup_{\| \mathbf{u} \| = 1} \frac{| \mathbf{u}^t \mathbf{y} - m_n( \mathbf{u}^t \mathbf{Y} \,|\, \mathbf{x} ) |}{m_n( | \mathbf{u}^t \mathbf{Y} - m_n( \mathbf{u}^t \mathbf{Y} \,|\, \mathbf{x} ) | \,|\, \mathbf{x} )} \right]^{-1} ,
\end{align*}
where $ m_n( \mathbf{u}^t \mathbf{Y} \,|\, \mathbf{x} ) $ and $ m_n( | \mathbf{u}^t \mathbf{Y} - m_n( \mathbf{u}^t \mathbf{Y} \,|\, \mathbf{x} ) | \,|\, \mathbf{x} ) $ are the sample analogues of $ m( \mathbf{u}^t \mathbf{Y} \,|\, \mathbf{x} ) $ and $ m( | \mathbf{u}^t \mathbf{Y} - m( \mathbf{u}^t \mathbf{Y} \,|\, \mathbf{x} ) | \,|\, \mathbf{x} ) $, respectively.
The conditional spatial depth \index{conditional spatial depth} (\cite{vardi2000multivariate}, \cite{serfling2002depth}) is defined as $ \rho( \mathbf{y} \,|\, \mathbf{x} ) = 1 - \| \mathbb{E}[ \| \mathbf{y} - \mathbf{Y} \|^{-1} ( \mathbf{y} - \mathbf{Y} ) \,|\, \mathbf{X} = \mathbf{x} ] \| $.
Here, we adopt the convention of defining $ \left\| \mathbf{v} \right\|^{-1} \mathbf{v} = \mathbf{0} $ when $ \mathbf{v} = \mathbf{0} $.
The estimate $ \rho_n( \mathbf{y} \,|\, \mathbf{x} ) $ is given by
\begin{align*}
\rho_n( \mathbf{y} \,|\, \mathbf{x} ) = 1 - \left\| \sum_{i=1}^{n} \| \mathbf{y} - \mathbf{Y}_i \|^{-1} ( \mathbf{y} - \mathbf{Y}_i ) W_{i,n}( \mathbf{x}  ) \right\| .
\end{align*}
There are good algorithms available for computing these depth functions for multivariate data, and we use them in our numerical investigation later.
When the response is univariate, the conditional 50\% central regions for the depth functions described above correspond to the box in the conditional boxplot of the response.

\section{Asymptotic consistency of conditional depth and central regions} \label{sec:3}
Recall the definitions of $ \mu( \cdot \,|\, \mathbf{z} ) $, $ \mu_n( \cdot \,|\, \mathbf{x} ) $, $ \rho( \cdot \,|\, \mathbf{x} ) $ and $ \rho_n( \cdot \,|\, \mathbf{x} ) $ from \autoref{sec:2}.
We shall first establish the uniform asymptotic consistency for some standard conditional depth functions.
We need the following conditions on the weights $ W_{i,n}( \mathbf{x} ) $ and the conditional probability $ \mu( \cdot \,|\, \mathbf{z} ) $ of $ \mathbf{Y} $ given $ \mathbf{X} = \mathbf{z} $.
\begin{enumerate}[label=(A\arabic*), ref=(A\arabic*)]
\item \label{a1}
$ ( \log n ) \sum_{i=1}^{n} W_{i,n}^2( \mathbf{x} ) \stackrel{a.s.}{\longrightarrow} 0 $ and $ ( \log n ) \max_{1 \le i \le n} W_{i,n}( \mathbf{x} ) \stackrel{a.s.}{\longrightarrow} 0 $ as $ n \to \infty $,
and for any $ \delta > 0 $, $ \sum_{i=1}^{n} W_{i,n}( \mathbf{x} ) \mathbb{I}( d( \mathbf{x}, \mathbf{X}_i ) \ge \delta ) \allowbreak \stackrel{a.s.}{\longrightarrow} 0 $ as $ n \to \infty $.

\item \label{a2}
There is a collection of indices $ \mathbf{S}_n \subset \{ 1, 2, \ldots, n \} $ with cardinality $ k_n $ such that for any $ \delta > 0 $, $ \mathbf{S}_n \subset \{ i \,|\, d( \mathbf{x}, \mathbf{X}_i ) < \delta , \, i = 1, \ldots, n \} $ for all sufficiently large $ n $. Also,
\begin{align*}
& \frac{\log n}{k_n} \to 0
\text{ as } n \to \infty , \\
& \frac{\min_{i \in \mathbf{S}_n} W_{i,n}( \mathbf{x} )}{\max_{i \in \mathbf{S}_n} W_{i,n}( \mathbf{x} )}
\text{ is \emph{almost surely} bounded away from 0 as } n \to \infty , \\
& \text{and } \frac{\sum_{\{ i_1, \ldots, i_{p+1} \} \not\subset \mathbf{S}_n} W_{i_1,n}( \mathbf{x} ) \ldots W_{i_{p+1},n}( \mathbf{x} )}{\sum_{\{ j_1, \ldots, j_{p+1} \} \subset \mathbf{S}_n} W_{j_1,n}( \mathbf{x} ) \ldots W_{j_{p+1},n}( \mathbf{x} )}
\stackrel{a.s.}{\longrightarrow} 0
\text{ as }
n \to \infty .
\end{align*}

\item \label{a3}
$ \mu( \cdot \,|\, \mathbf{z} ) \stackrel{w}{\longrightarrow} \mu( \cdot \,|\, \mathbf{x} ) $ as $ \mathbf{z} \to \mathbf{x} $.
\end{enumerate}
Suppose that $ \mathbb{P}[ d( \mathbf{x}, \mathbf{X} ) \le h ] > 0 $ for all $ h > 0 $.
Then, in the nearest neighbor method of constructing weights $ W_{i,n}( \mathbf{x} ) $, conditions \ref{a1} and \ref{a2} hold when $ k_n = \lfloor ( \log n )^2 \rfloor + 1 $.
In the kernel method, suppose that the kernel function $ K( \cdot ) $ satisfies $ l \mathbb{I}( 0 \le s \le 1 ) \le K( s ) \le u \mathbb{I}( 0 \le s \le 1 ) $ for some constants $ 0 < l, u < \infty $.
Depending on the distribution of the covariate $ \mathbf{X} $, we can choose a sequence of bandwidths $ \{ h_n \} $ such that $ h_n \to 0 $ and $ [ n \mathbb{P}[ d( \mathbf{x}, \mathbf{X} ) \le h_n ] ]^{-1} ( \log n ) \to 0 $ as $ n \to \infty $.
For this choice of $ \{ h_n \} $ and the kernel function, it can be shown using the Bernstein inequality (see \citet[p.~95, Lemma A]{serfling2009approximation}) and the Borel-Cantelli Lemma that conditions \ref{a1} and \ref{a2} are satisfied.

Assumption \ref{a3} holds in many common situations that we encounter. As an example, consider the location-scale model defined by $ \mathbf{Y} = l( \mathbf{X} ) + s( \mathbf{X} ) \mathbf{G} $. Here, $ \mathbf{G} $ is a random vector independent of $ \mathbf{X} $, and the functions $ l( \cdot ) : \mathcal{C} \to \mathbb{R}^p $ and $ s( \cdot ) : \mathcal{C} \to \mathbb{R} $ are both continuous at $ \mathbf{x} $. In such a setup, $ \mu( \cdot \,|\, \mathbf{z} ) \stackrel{w}{\longrightarrow} \mu( \cdot \,|\, \mathbf{x} ) $ as $ \mathbf{z} \to \mathbf{x} $.
Under \ref{a1} and \ref{a3}, it follows that
\begin{align}
\mu_n( \cdot \,|\, \mathbf{x} ) \stackrel{w}{\longrightarrow} \mu( \cdot \,|\, \mathbf{x} )
\quad
\text{\emph{almost surely} as }
n \to \infty .
\label{eq1}
\end{align}
See \autoref{lemma1} for the proof of \eqref{eq1}.
The following theorem states the uniform consistency for several depth functions.
\begin{theorem} \label{thm1}
Let $ \rho( \mathbf{y} \,|\, \mathbf{x} ) $ be any of the three conditional depth functions described in \autoref{sec:2}. Suppose that there is a neighborhood of $ \mathbf{x} $ such that for all $ \mathbf{z} $ in that neighborhood, the conditional distribution of $ \mathbf{Y} $ given $ \mathbf{X} = \mathbf{z} $ has a continuous positive density $ f( \cdot \,|\, \mathbf{z} ) $ that is continuous in $ \mathbf{z} $.
Then, under conditions \ref{a1}, \ref{a2} and \ref{a3},
\begin{align}
\sup_{ \mathbf{y} \in \mathbb{R}^p } | \rho_n( \mathbf{y} \,|\, \mathbf{x} ) - \rho( \mathbf{y} \,|\, \mathbf{x} ) | \stackrel{P}{\longrightarrow} 0
\quad
\text{as }
n \to \infty .
\label{eq2}
\end{align}
\end{theorem}
Examples of continuous positive conditional densities $ f( \cdot \,|\, \mathbf{z} ) $ that are continuous in $ \mathbf{z} $ include the location-scale model: $ \mathbf{Y} = l( \mathbf{X} ) + s( \mathbf{X} ) \mathbf{G} $, where $ \mathbf{G} $ is a random vector independent of $ \mathbf{X} $ and has a continuous positive density on $ \mathbb{R}^p $, and the functions $ l( \cdot ) : \mathcal{C} \to \mathbb{R}^p $ and $ s( \cdot ) : \mathcal{C} \to \mathbb{R} $ are both continuous in a neighborhood of $ \mathbf{x} $.

We now proceed to state the asymptotic consistency of the conditional central regions. Recall the definitions of $ D( \cdot \,|\, \mathbf{x} ) $, $ D_n( \cdot \,|\, \mathbf{x} ) $, $ \alpha( r ) $ and $ \alpha_n( r ) $ from \autoref{sec:2}. We need the following conditions.
\begin{enumerate}[label=(A\arabic*), ref=(A\arabic*)]
\setcounter{enumi}{3}
\item \label{a4}
$ \mu( \{ \mathbf{y} \,|\, \rho( \mathbf{y} \,|\, \mathbf{x} ) = \alpha \} \,|\, \mathbf{x} ) = 0 $ for all $ \alpha $.

\item \label{a5}
For any $ 0 \le \gamma_1 < \gamma_2 \le \max_{\mathbf{y}} \rho( \mathbf{y} \,|\, \mathbf{x} ) $, $ \mu( \{ \mathbf{y} \,|\, \gamma_1 \le \rho( \mathbf{y} \,|\, \mathbf{x} ) \le \gamma_2 \} \,|\, \mathbf{x} ) > 0 $.

\item \label{a6}
$ \rho( \mathbf{y} \,|\, \mathbf{x} ) $ is a continuous function of $ \mathbf{y} $, and $ \rho( \mathbf{y} \,|\, \mathbf{x} ) \to 0 $ as $ \| \mathbf{y} \| \to \infty $.

\item \label{a7}
Define $ D_0( \alpha \,|\, \mathbf{x} ) = \{ \mathbf{y} \,|\, \rho( \mathbf{y} \,|\, \mathbf{x} ) > \alpha \} $. Then, the closure of $ D_0( \alpha \,|\, \mathbf{x} ) $ is $ D( \alpha \,|\, \mathbf{x} ) $ for any $ 0 < \alpha < \rho( \mathbf{m}( \mathbf{x} ) \,|\, \mathbf{x} ) $.
\end{enumerate}
Assumptions \ref{a4} and \ref{a5} imply that the distribution function of the random variable $ \rho( \mathbf{Y} \,|\, \mathbf{x} ) $ is continuous and strictly increasing. Assumptions \ref{a6} and \ref{a7} are related to the smoothness of the conditional depth and the corresponding central regions.

Suppose $ \mu( \cdot \,|\, \mathbf{x} ) $ has a probability density with a convex support. Then, from the proof of Lemma 6.1 in \cite{donoho1992breakdown}, we get that \ref{a6} is satisfied for the conditional halfspace depth. From the arguments in the proof of Lemma 6.3 in \cite{donoho1992breakdown}, we get that $ D( \alpha \,|\, \mathbf{x} ) $ is convex for all $ 0 < \alpha < \rho( \mathbf{m}( \mathbf{x} ) \,|\, \mathbf{x} ) $, and consequently, \ref{a4}, \ref{a5} and \ref{a7} are also satisfied for the conditional halfspace depth.

Now suppose that $ \mu( \cdot \,|\, \mathbf{x} ) $ has a positive probability density on $ \mathbb{R}^p $. Then, the assumptions \ref{a4}, \ref{a5} and \ref{a6} hold for the conditional spatial depth. Note that when assumption \ref{a6} holds, assumption \ref{a7} is also satisfied if the conditional depth has no local maximum. From Theorem 1 in \citet[p.~230]{chowdhury2017nonparametric}, we get that the conditional spatial median is unique. Using arguments similar to those in the proofs of Lemma 2.5 and Lemma 2.6 in the supplement of \cite{chowdhury2019nonparametric}, one can show that the conditional spatial depth $ \rho( \mathbf{y} \,|\, \mathbf{x} ) $ has non-zero Fr\'{e}chet derivative with respect to $ \mathbf{y} $ for $ \mathbf{y} $ not being the conditional spatial median. Consequently, the conditional spatial depth cannot have any local maximum and hence it satisfies assumption \ref{a7}.

From Theorem 3.4 in \cite{zuo2000structural}, it follows that it is sufficient for $ \mu( \cdot \,|\, \mathbf{x} ) $ to have an elliptically symmetric probability density on $ \mathbb{R}^p $ for the associated conditional projection depth to satisfy assumptions \ref{a4}, \ref{a5}, \ref{a6} and \ref{a7}.

In the theorem below, we establish an asymptotic consistency result for the conditional sample central regions.
Note that the Hausdorff distance $ d_H( A, B ) $ between two nonempty closed and bounded subsets of a metric space is defined as $ d_H( A, B ) = \inf\{ \epsilon \,|\, A \subseteq B^\epsilon , B \subseteq A^\epsilon \} $, where $ A^\epsilon $ and $ B^\epsilon $ denote the $ \epsilon $-neighborhoods of $ A $ and $ B $, respectively (see, e.g., \citet[p.~281]{munkres2000topology}).
\begin{theorem} \label{thm2}
Suppose \ref{a4}, \ref{a5}, \eqref{eq1} and \eqref{eq2} are satisfied. Then, for any $ \epsilon > 0 $ and any $ 0 < r < 1 $,
\begin{align*}
\mathbb{P}\left[ D( \alpha( r ) + \epsilon \,|\, \mathbf{x} ) \subseteq D_n( \alpha_n( r ) \,|\, \mathbf{x} ) \subseteq D( \alpha( r ) - \epsilon \,|\, \mathbf{x} ) \right] \to 1
\text{ as } n \to \infty .
\end{align*}
If in addition, conditions \ref{a6} and \ref{a7} hold, then
\begin{align*}
d_H( D_n( \alpha_n( r ) \,|\, \mathbf{x} ) , D( \alpha( r ) \,|\, \mathbf{x} ) ) \stackrel{P}{\longrightarrow} 0
\end{align*}
as $ n \to \infty $ for any $ 0 < r < 1 $.
\end{theorem}

A consequence of \autoref{thm2}, stated below in \autoref{coro1}, is that for any band around the conditional central region contour $ \delta( \alpha( r ) \,|\, \mathbf{x} ) $, which may be of arbitrarily small width, the conditional sample central region contour $ \delta_n( \alpha_n( r ) \,|\, \mathbf{x} ) $ lies inside that band with high probability for large $ n $. Since the contours determine the shapes of the central regions, this implies that the shapes of the sample central regions are good approximations of their population counter-parts in large samples.
\begin{corollary} \label{coro1}
Suppose \ref{a4}, \ref{a5}, \eqref{eq1} and \eqref{eq2} are satisfied. Then, for any $ \epsilon > 0 $,
\begin{align*}
\mathbb{P}\left[ \delta_n( \alpha_n( r ) \,|\, \mathbf{x} ) \subseteq \{ \mathbf{y} \in \mathbb{R}^p \,|\, \alpha( r ) - \epsilon \le \rho( \mathbf{y} \,|\, \mathbf{x} ) < \alpha( r ) + \epsilon \} \right] \to 1
\quad \text{as } n \to \infty .
\end{align*}
\end{corollary}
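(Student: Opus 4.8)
The plan is to derive the corollary from the first (weaker) conclusion of \autoref{thm:1} by a short, entirely deterministic topological argument applied with $ \epsilon/2 $ in place of $ \epsilon $. Fix $ 0 < r < 1 $ and $ \epsilon > 0 $. By \autoref{thm:1} used with $ \epsilon/2 $, there is an event $ E_n $ with $ P( E_n ) \to 1 $ on which
\[
D( \alpha( r ) + \epsilon/2 \midil \mathbf{x} ) \subseteq D_n( \alpha_n( r ) \midil \mathbf{x} ) \subseteq D( \alpha( r ) - \epsilon/2 \midil \mathbf{x} ) .
\]
It then suffices to show that this chain of inclusions forces $ \delta_n( \alpha_n( r ) \midil \mathbf{x} ) $, the boundary of $ D_n( \alpha_n( r ) \midil \mathbf{x} ) $, to lie inside $ \{ \mathbf{y} \in \mathbb{R}^p \midil \alpha( r ) - \epsilon \le \rho( \mathbf{y} \midil \mathbf{x} ) < \alpha( r ) + \epsilon \} $; the stated probability bound is then immediate, since the event in the corollary contains $ E_n $ and therefore has probability at least $ P( E_n ) \to 1 $.

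For the deterministic step, take any $ \mathbf{y}_0 \in \delta_n( \alpha_n( r ) \midil \mathbf{x} ) $. As a boundary point in a metric space, $ \mathbf{y}_0 $ is the limit of a sequence $ \mathbf{y}_k \in D_n( \alpha_n( r ) \midil \mathbf{x} ) $ and also the limit of a sequence $ \mathbf{y}_k' \notin D_n( \alpha_n( r ) \midil \mathbf{x} ) $. The right-hand inclusion above gives $ \rho( \mathbf{y}_k \midil \mathbf{x} ) \ge \alpha( r ) - \epsilon/2 $ for every $ k $, while the left-hand inclusion gives $ \rho( \mathbf{y}_k' \midil \mathbf{x} ) < \alpha( r ) + \epsilon/2 $ for every $ k $. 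Letting $ k \to \infty $ and using continuity of $ \mathbf{y} \mapsto \rho( \mathbf{y} \midil \mathbf{x} ) $ yields $ \alpha( r ) - \epsilon/2 \le \rho( \mathbf{y}_0 \midil \mathbf{x} ) \le \alpha( r ) + \epsilon/2 $, hence in particular $ \alpha( r ) - \epsilon \le \rho( \mathbf{y}_0 \midil \mathbf{x} ) < \alpha( r ) + \epsilon $. Since $ \mathbf{y}_0 $ was arbitrary, on $ E_n $ the whole contour lies in the required band, which finishes the argument.

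The only step needing care is the passage to the limit, i.e., the use of continuity of $ \rho( \cdot \midil \mathbf{x} ) $ in its first argument: this is exactly what keeps the closure of $ D( \alpha( r ) - \epsilon/2 \midil \mathbf{x} ) $ inside $ \{ \mathbf{y} \midil \rho( \mathbf{y} \midil \mathbf{x} ) \ge \alpha( r ) - \epsilon \} $ and the set $ D( \alpha( r ) + \epsilon \midil \mathbf{x} ) $ inside the interior of $ D( \alpha( r ) + \epsilon/2 \midil \mathbf{x} ) $, so one may equivalently phrase the argument directly in terms of these closure/interior relations rather than sequences; in fact only upper semicontinuity of $ \rho( \cdot \midil \mathbf{x} ) $ is needed for the lower bound and lower semicontinuity for the upper bound. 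For each of the four conditional depths of \autoref{subsec:2.1}, this continuity holds under the same density hypotheses used to secure \eqref{a3} in \autoref{thm:1a}, and it is in any case part of condition \ref{a2}. Finally, the degenerate situations in which $ \alpha( r ) + \epsilon/2 > \rho( \mathbf{m}( \mathbf{x} ) \midil \mathbf{x} ) $ or $ \alpha( r ) - \epsilon/2 < 0 $, so that one of the sandwiching sets is empty or all of $ \mathbb{R}^p $, cause no trouble: since depth values lie in $ [0,1] $, the corresponding one of the two inequalities for $ \rho( \mathbf{y}_0 \midil \mathbf{x} ) $ is then automatic.
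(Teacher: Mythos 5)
Your proposal follows the same route as the paper: the paper's entire proof of this corollary is a one-line appeal to the sandwich inclusion \eqref{thm1eq3}, namely $ P[ D( \alpha( r ) + \epsilon \midil \mathbf{x} ) \subseteq D_n( \alpha_n( r ) \midil \mathbf{x} ) \subseteq D( \alpha( r ) - \epsilon \midil \mathbf{x} ) ] \to 1 $, and you simply supply, at level $ \epsilon/2 $, the deterministic step that converts this inclusion into a statement about the topological boundary. The sequence argument for that step is sound, and your treatment of the degenerate cases (one sandwiching set empty or equal to $ \mathbb{R}^p $) is fine.

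The one substantive point is the one you flag yourself: the passage to the limit uses continuity of $ \rho( \cdot \midil \mathbf{x} ) $ in $ \mathbf{y} $ --- essentially condition \ref{a2}, or the two one-sided semicontinuities you identify --- and this is \emph{not} among the hypotheses the corollary lists. This is not a defect of your argument relative to the paper's, because the implication genuinely requires some such regularity. From $ D( \alpha( r ) + \epsilon/2 \midil \mathbf{x} ) \subseteq D_n( \alpha_n( r ) \midil \mathbf{x} ) $ alone one only learns that points \emph{near} a boundary point $ \mathbf{y}_0 $ of $ D_n( \alpha_n( r ) \midil \mathbf{x} ) $ have depth below $ \alpha( r ) + \epsilon/2 $; without lower semicontinuity at $ \mathbf{y}_0 $ nothing prevents $ \rho( \mathbf{y}_0 \midil \mathbf{x} ) \ge \alpha( r ) + \epsilon $ (for instance, if $ D( \alpha( r ) + \epsilon/2 \midil \mathbf{x} ) $ happened to coincide with $ D_n( \alpha_n( r ) \midil \mathbf{x} ) $ and were a closed ball, its entire boundary would lie inside $ D( \alpha( r ) + \epsilon/2 \midil \mathbf{x} ) $, and ruling out depth $ \ge \alpha( r ) + \epsilon $ there requires the depth not to jump). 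The paper's ``follows directly'' glosses over exactly this; your version makes the needed regularity explicit and correctly notes that it holds for the four depths under the density hypotheses of \autoref{thm:1a}. So: same approach, correct, and somewhat more careful than the original --- just be aware that, read literally against the corollary's stated assumptions, continuity of $ \rho( \cdot \midil \mathbf{x} ) $ is an additional input rather than something you may take for granted.
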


Convergence of sample central regions were studied earlier in the unconditional setup by \cite{he1997convergence} and \cite{zuo2000structural}. In \cite{he1997convergence}, the authors assumed that the population depth contours are elliptic in nature in order to establish the convergence of the contours of the central regions. In \cite{zuo2000structural}, the authors restricted themselves to elliptic distributions for proving the convergence of the central regions in the Hausdorff distance.

\section{Measure of conditional spread and study of heteroscedasticity} \label{sec:4}
In this section, we define a measure of conditional spread based on the conditional central region, establish the asymptotic consistency of its estimate and propose a test of heteroscedasticity based on it.

A large conditional central region indicates a high spread of the conditional distribution of the response.
In \cite{liu1999multivariate}, a measure of spread based on central regions in an unconditional setup was considered. The analogous measure of conditional spread in our regression setup is
$ \Lambda( r \,|\, \mathbf{x} ) = \text{volume}( D( \alpha( r ) \,|\, \mathbf{x} ) ) = \lambda( D( \alpha( r ) \,|\, \mathbf{x} ) ) $,
where $ \lambda( \cdot ) $ is the Lebesgue measure in $ \mathbb{R}^p $ and $ 0 < r < 1 $.
We can estimate $ \Lambda( r \,|\, \mathbf{x} ) $ by its sample analogue $ \Lambda_n( r \,|\, \mathbf{x} ) = \text{volume}( D_n( \alpha_n( r ) \,|\, \mathbf{x} ) ) = \lambda( D_n( \alpha_n( r ) \,|\, \mathbf{x} ) ) $. In \autoref{thm3}, we show that $ \Lambda_n( r \,|\, \mathbf{x} ) $ is a consistent estimate of $ \Lambda( r \,|\, \mathbf{x} ) $. In \cite{liu1999multivariate}, a different estimate of $ \Lambda( r \,|\, \mathbf{x} ) $ was considered, which is the volume of the convex hull of the sample observations lying in $ D_n( \alpha_n( r ) \,|\, \mathbf{x} ) $. But this estimate may not be consistent when the conditional central region $ D( \alpha( r ) \,|\, \mathbf{x} ) $ is not convex. The measure $ \Lambda_n( r \,|\, \mathbf{x} ) $ is computationally intensive.
For this reason, we consider an alternative spread measure $ \Delta( r \,|\, \mathbf{x} ) $ in our analyses.

We define the measure $ \Delta( r \,|\, \mathbf{x} ) $ of conditional spread \index{measure of conditional spread} of $ \mathbf{Y} $ given $ \mathbf{X} = \mathbf{x} $ as the diameter of the set $ D( \alpha( r ) \,|\, \mathbf{x} ) $, i.e., $ \Delta( r \,|\, \mathbf{x} ) = \sup\{ \| \mathbf{y}_1 - \mathbf{y}_2 \| \,|\, \mathbf{y}_1, \mathbf{y}_2 \in D( \alpha( r ) \,|\, \mathbf{x} ) \} $.
For a real valued response and for the conditional depth measures described in \autoref{sec:2}, $ \Delta( 0.5 \,|\, \mathbf{x} ) $ coincides with the conditional interquartile range.
We estimate $ \Delta( r \,|\, \mathbf{x} ) $ by
\begin{align*}
\Delta_n( r \,|\, \mathbf{x} ) = \max\{ \| \mathbf{Y}_i - \mathbf{Y}_j \| \,|\, \rho_n( \mathbf{Y}_i \,|\, \mathbf{x} ) , \rho_n( \mathbf{Y}_j \,|\, \mathbf{x} ) \ge \alpha_n( r );\, i,j = 1, \ldots, n \} .
\end{align*}
In \autoref{thm3}, we show that the estimates $ \Lambda_n( r \,|\, \mathbf{x} ) $ and $ \Delta_n( r \,|\, \mathbf{x} ) $ are asymptotically consistent.
\begin{theorem} \label{thm3}
(a) Suppose \ref{a4}, \ref{a5}, \eqref{eq1} and \eqref{eq2} hold, and
$ \mu( \cdot \,|\, \mathbf{x} ) $ has a continuous positive density on $ \mathbb{R}^p $.
Then,
$ \Lambda_n( r \,|\, \mathbf{x} ) \stackrel{P}{\longrightarrow} \Lambda( r \,|\, \mathbf{x} ) $
as $ n \to \infty $.

(b) 
Let $ 0 < r < 1 $, and suppose that for any open set $ G $ with $ G \cap D( \alpha( r ) \,|\, \mathbf{x} ) \neq \emptyset $, we have $ \mathbb{P}\left[ \mathbf{Y} \in G \cap D( \alpha( r ) \,|\, \mathbf{x} ) \right] > 0 $.
Then, under \ref{a4}, \ref{a5}, \ref{a6}, \ref{a7}, \eqref{eq1} and \eqref{eq2}, $ \Delta_n( r \,|\, \mathbf{x} ) \stackrel{P}{\longrightarrow} \Delta( r \,|\, \mathbf{x} ) $ as $ n \to \infty $.
\end{theorem}

The condition in the above theorem that for any open set $ G $ with $ G \cap D( \alpha( r ) \,|\, \mathbf{x} ) \neq \emptyset $, $ \mathbb{P}\left[ \mathbf{Y} \in G \cap D( \alpha( r ) \,|\, \mathbf{x} ) \right] > 0 $ is trivially satisfied for any depth function when the support of the distribution of $ \mathbf{Y} $ is the whole response space.
On the other hand, for particular depths like the conditional halfspace depth, the conditional central regions are bounded convex sets. So, if the support of the distribution of $ \mathbf{Y} $ is a bounded convex set, then also that condition is satisfied for the conditional halfspace depth.

In \cite{dette1998testing}, \cite{dette2002consistent} and \cite{dette2010robust}, some tests of heteroscedasticity were considered in a nonparametric regression setup with real valued response and covariate. Some tests of heteroscedasticity were also proposed in the case of parametric regression with multivariate response (\cite{gupta1984distribution}, \cite{holgersson2004testing}).
We now propose a nonparametric test for heteroscedasticity based on the conditional spread measure $ \Delta( r \,|\, \mathbf{x} ) $. In the presence of heteroscedasticity, $ \Delta( r \,|\, \mathbf{x} ) $ will vary with $ \mathbf{x} $. Our hypotheses are
\begin{center}
\begin{minipage}[c]{.4\textwidth}
\begin{itemize}
\item[$ H_0 : $] $ \Delta( r \,|\, \mathbf{x} ) $ is constant over $ \mathbf{x} $,
\item[$ H_A : $] $ \Delta( r \,|\, \mathbf{x} ) $ varies with $ \mathbf{x} $.
\end{itemize}
\end{minipage}
\end{center}
To capture the variation of $ \Delta( r \,|\, \mathbf{x} ) $ over $ \mathbf{x} $, we compute $ \Delta_n( r \,|\, \mathbf{X}_i ) $ for $ i = 1, \ldots, n $. We define our test-statistic as
\begin{align*}
\mathbf{T}_n = \frac{1}{n} \sum_{i=1}^{n} \left[ \Delta_n( r \,|\, \mathbf{X}_i ) - \left( \frac{1}{n} \sum_{j=1}^{n} \Delta_n( r \,|\, \mathbf{X}_j ) \right) \right]^2 .
\end{align*}
Large values of $ \mathbf{T}_n $ bear evidence against $ H_0 $.

Finding the actual distribution of the test statistic $ \mathbf{T}_n $ is difficult, and we compute the p-value of the test using a permutation procedure. Let $ \mathbf{S}_n $ be the set of all permutations of the integers $ 1, \ldots, n $, defined as $ \mathbf{S}_n = \{ \sigma \,|\, \sigma : \{ 1, \ldots, n \} \to \{ 1, \ldots, n \}, \, \sigma \text{ is one-to-one and onto} \} $. Consider $ \mathbb{S}_n = \{ ( \mathbf{X}_1, \mathbf{Y}_{\sigma( 1 )} ), \ldots, ( \mathbf{X}_n, \mathbf{Y}_{\sigma( n )} ) \allowbreak \,|\, \sigma \in \mathbf{S}_n \} $, the set of all permuted samples, where the response is freely permuted.
We compute the value of $ \mathbf{T}_n $ for all the permuted samples in $ \mathbb{S}_n $, and the empirical distribution of those values can be taken as an approximation of the null distribution of $ \mathbf{T}_n $. The p-value is computed as the proportion of those values of $ \mathbf{T}_n $, which are larger than the actually observed value of $ \mathbf{T}_n $. In practice, the number of all permutations is too large for even moderate sample sizes, and we consider a fixed number of random permutations of $ 1, \ldots, n $ to calculate the p-value based on that.

In \autoref{sec:6}, we demonstrate the conditional spread measure and the test of heteroscedasticity in a real dataset. A size and power study is presented for the test in simulated models in \autoref{supsec:1}.

\section{Measure of conditional skewness and related inference} \label{sec:5}
In this section, we define conditional medians and trimmed means based on statistical depths and establish the asymptotic consistency of their estimates. Based on these measures of the center of the conditional distribution of the response and the measure of conditional spread defined in \autoref{sec:4}, we define measures of conditional skewness of the response. We show that the estimates of these measures are asymptotically consistent and propose a test of conditional skewness based on them.

In the context of linear regression with univariate responses, \cite{godfrey1991testing} developed a test for symmetry of the distribution of the regression error.
In \cite{bai2001consistent} and \cite{delgado2007nonparametric}, procedures were developed to test for conditional symmetry in the case of regression involving time series data.
In \cite{kuosmanen2009neoclassical}, tests for skewness of regression errors were developed for an econometric regression problem.
\cite{harvey1999autoregressive} studied the conditional skewness in asset returns in an autoregressive setup.
In \cite{harvey2000conditional}, the authors investigated the economic importance of accounting for systematic conditional skewness in the distribution of the asset returns, and how this can be employed to improve expected returns.
\cite{bali2008role} investigated the effect of conditional skewness in the estimation of conditional value at risk in an autoregressive setup.
\cite{charoenrook2004conditional} investigated the effect of conditional skewness in aggregate market returns.
\cite{lanne2007modeling} and \cite{grigoletto2009looking} proposed different autoregressive models for conditional skewness in stock returns and financial times series.
In \cite{ghysels2011conditional}, a robust measure of conditional skewness was proposed, and the economic significance of conditional skewness in portfolio allocation was investigated.
In \cite{smith2007conditional}, the implication of conditional skewness in asset pricing was studied.

In the work described above, the authors investigated conditional skewness with either a finite dimensional covariate or in a time series regression setup with a real valued response. For multivariate random variables in an unconditional setup, statistical depths have been employed to investigate skewness in the data.
In \cite{rousseeuw2002depth}, the halfspace depth was used to develop a test for angular symmetry in multivariate data.
In \cite{dyckerhoff2015depth}, an affine invariant robust test for symmetry in bivariate data was constructed based on statistical depths.
We consider the case of nonparametric regression with multivariate response and a covariate taking values in a metric space, and develop methods to investigate conditional skewness using statistical depths.

The depth based conditional median \index{depth based conditional median} $ \mathbf{m}( \mathbf{x} ) $ of $ \mathbf{Y} $ given $ \mathbf{X} = \mathbf{x} $ with respect to the conditional depth $ \rho( \cdot \,|\, \mathbf{x} ) $ is a point such that $ \rho( \mathbf{m}( \mathbf{x} ) \,|\, \mathbf{x} ) \ge \rho( \mathbf{y} \,|\, \mathbf{x} ) $ for every $ \mathbf{y} $. Note that $ \mathbf{m}( \mathbf{x} ) $ may not be unique. For a univariate response and the depth functions described in \autoref{sec:2}, the point $ \mathbf{m}( \mathbf{x} ) $ becomes the usual conditional median, which can be viewed as the center of the conditional boxplot, and the set $ D( \alpha( r ) \,|\, \mathbf{x} ) $ becomes the conditional interquartile interval for $ r = 0.5 $, which corresponds to the box in the conditional boxplot as already noted. $ \mathbf{m}( \mathbf{x} ) $ along with the conditional central region $ D( \alpha( r ) \,|\, \mathbf{x} ) $ yields information about the center and the spread of the conditional distribution of the response.
The sample conditional median $ \mathbf{m}_n( \mathbf{x} ) $ is a point, which may not be unique, such that $ \rho_n( \mathbf{m}_n( \mathbf{x} ) \,|\, \mathbf{x} ) \ge \rho_n( \mathbf{y} \,|\, \mathbf{x} ) $ for every $ \mathbf{y} $.

Trimmed means based on depth functions in an unconditional setup were investigated earlier in \cite{donoho1992breakdown}, \cite{liu1999multivariate}, \cite{zuo2006multidimensional}, \cite{masse2009multivariate}, etc.
The conditional $ 100 r $\% trimmed mean \index{conditional trimmed mean} $ \mathbf{m}( r \,|\, \mathbf{x} ) $ of $ \mathbf{Y} $ given $ \mathbf{X} = \mathbf{x} $ is defined as
\begin{align*}
\mathbf{m}( r \,|\, \mathbf{x} ) = \frac{\int \mathbf{y} \mathbb{I}( \mathbf{y} \in D( \alpha( 1 - r ) \,|\, \mathbf{x} ) ) \mu( \mathrm{d} \mathbf{y} \,|\, \mathbf{x} )}{\mu( D( \alpha( 1 - r ) \,|\, \mathbf{x} ) \,|\, \mathbf{x} )} .
\end{align*}
Unlike the conditional median, the conditional trimmed mean is always unique.
For a real valued response, the depth based conditional trimmed mean coincides with the usual conditional trimmed mean for the depths considered in \autoref{sec:2}.
The sample conditional $ 100 r $\% trimmed mean $ \mathbf{m}_n( r \,|\, \mathbf{x} ) $ is
\begin{align*}
\mathbf{m}_n( r \,|\, \mathbf{x} ) = \frac{\int \mathbf{y} \mathbb{I}( \mathbf{y} \in D_n( \alpha_n( 1 - r ) \,|\, \mathbf{x} ) ) \mu_n( \mathrm{d} \mathbf{y} \,|\, \mathbf{x} )}{\mu_n( D_n( \alpha_n( 1 - r ) \,|\, \mathbf{x} ) \,|\, \mathbf{x} )} .
\end{align*}
We denote the conditional mean of $ \mathbf{Y} $ given $ \mathbf{X} = \mathbf{x} $ as $ \mathbf{M}( \mathbf{x} ) = \int \mathbf{y} \mu( \mathrm{d} \mathbf{y} \,|\, \mathbf{x} ) $, and its estimate is the conditional sample mean $ \mathbf{M}_n( \mathbf{x} ) = \int \mathbf{y} \mu_n( \mathrm{d} \mathbf{y} \,|\, \mathbf{x} ) $.

Conditional means or trimmed means along with conditional medians can be used to detect conditional skewness in the data. When the distribution is symmetric, the mean, trimmed means and the median coincide, whereas for a skewed distribution, the mean and the trimmed means lie away from the median. The distance between the mean or the trimmed mean and the median depends on the spread of the distribution, which needs to be accounted for while defining a measure of conditional skewness based on this distance. One advantage in using the distance between the conditional $ 100 r $\% trimmed mean with an appropriate value of $ r $ and the conditional median over the distance between the conditional mean and the conditional median in constructing a measure of conditional skewness is that the former is resistant to the presence of outliers. However, some information about the conditional distribution is lost in trimming. We define the measure $ \Psi_1( r_1, r_2 \,|\, \mathbf{x} ) $ of conditional skewness \index{measure of conditional skewness} of $ \mathbf{Y} $ given $ \mathbf{X} = \mathbf{x} $ as $ \Psi_1( r_1, r_2 \,|\, \mathbf{x} ) = \| \mathbf{m}( r_1 \,|\, \mathbf{x} ) - \mathbf{m}( \mathbf{x} ) \| / \Delta( r_2 \,|\, \mathbf{x} ) $. Its estimate is the corresponding sample version $ \Psi_{1,n}( r_1, r_2 \,|\, \mathbf{x} ) = \| \mathbf{m}_n( r_1 \,|\, \mathbf{x} ) - \mathbf{m}_n( \mathbf{x} ) \| / \Delta_n( r_2 \,|\, \mathbf{x} ) $. The measure $ \Psi_1( r_1, r_2 \,|\, \mathbf{x} ) $ is based on the distance between the conditional trimmed mean and the conditional median. Another measure of conditional skewness, $ \Psi_2( r \,|\, \mathbf{x} ) $, based on the distance between the conditional mean and the conditional median, is defined as $ \Psi_2( r \,|\, \mathbf{x} ) = \| \mathbf{M}( \mathbf{x} ) - \mathbf{m}( \mathbf{x} ) \| / \Delta( r \,|\, \mathbf{x} ) $. Its estimate is $ \Psi_{2,n}( r \,|\, \mathbf{x} ) = \| \mathbf{M}_n( \mathbf{x} ) - \mathbf{m}_n( \mathbf{x} ) \| / \Delta_n( r \,|\, \mathbf{x} ) $.

The next theorem states the asymptotic consistency of the sample conditional median and the sample conditional trimmed means.
\begin{theorem} \label{thm4}
Let $ \mathcal{M}( \mathbf{x} ) $ be the set of all conditional medians corresponding to the conditional depth function $ \rho( \mathbf{y} \,|\, \mathbf{x} ) $.
Suppose that conditions \ref{a4}, \ref{a5}, \ref{a6}, \ref{a7} are satisfied and \eqref{eq1} and \eqref{eq2} hold. Then,
for any sequence of sample conditional medians $ \{ \mathbf{m}_n( \mathbf{x} ) \} $,
$ \inf_{ \mathbf{m} \in \mathcal{M}( \mathbf{x} ) } \| \mathbf{m}_n( \mathbf{x} ) - \mathbf{m} \|
\stackrel{P}{\longrightarrow} 0 $
as $ n \to \infty $.
Also, $ \mathbf{m}_n( r \,|\, \mathbf{x} ) \stackrel{P}{\longrightarrow} \mathbf{m}( r \,|\, \mathbf{x} ) $ as $ n \to \infty $ for any $ 0 < r < 1 $.
\end{theorem}
The asymptotic consistency of the conditional skewness estimates $ \Psi_{1,n}( r_1, r_2 \,|\, \mathbf{x} ) $ and $ \Psi_{2,n}( r \,|\, \mathbf{x} ) $ follows from \autoref{thm3} and \autoref{thm4}.
\begin{corollary} \label{coro2} 
Let the conditional median $ \mathbf{m}( \mathbf{x} ) $ of $ \mathbf{Y} $ given $ \mathbf{X} = \mathbf{x} $ corresponding to $ \rho( \mathbf{y} \,|\, \mathbf{x} ) $ be unique.
Let $ 0 < r < 1 $, and suppose that for any open set $ G $ with $ G \cap D( \alpha( r ) \,|\, \mathbf{x} ) \neq \emptyset $, we have $ \mathbb{P}\left[ \mathbf{Y} \in G \cap D( \alpha( r ) \,|\, \mathbf{x} ) \right] > 0 $.
Then, under \ref{a4}, \ref{a5}, \ref{a6}, \ref{a7}, \eqref{eq1} and \eqref{eq2}, $ \Psi_{1,n}( r_1, r_2 \,|\, \mathbf{x} ) \stackrel{P}{\longrightarrow} \Psi_1( r_1, r_2 \,|\, \mathbf{x} ) $ as $ n \to \infty $.
If in addition, \ref{a1} holds, $ \mathbb{E}[ \| \mathbf{Y} \|^2 \,|\, \mathbf{X} = \mathbf{z} ] $ is uniformly bounded over $ \mathbf{z} $ and $ \mathbb{E}[ \mathbf{Y} \,|\, \mathbf{X} = \mathbf{z} ] $ is continuous at $ \mathbf{z} = \mathbf{x} $, then
$ \Psi_{2,n}( r \,|\, \mathbf{x} ) \stackrel{P}{\longrightarrow} \Psi_2( r \,|\, \mathbf{x} ) $ as $ n \to \infty $.
\end{corollary}

Using the conditional skewness measures $ \Psi_{1,n}( r_1, r_2 \,|\, \mathbf{x} ) $ and $ \Psi_{2,n}( r \,|\, \mathbf{x} ) $, we propose a test for conditional skewness based on bootstrap.
We describe the procedure of the test based on $ \Psi_{1,n}( r_1, r_2 \,|\, \mathbf{x} ) $. The testing procedure based on $ \Psi_{2,n}( r \,|\, \mathbf{x} ) $ is the same after replacing $ \Psi_{1,n}( r_1, r_2 \,|\, \mathbf{x} ) $ by $ \Psi_{2,n}( r \,|\, \mathbf{x} ) $.
Let $ \mathbf{x} $ be a fixed covariate value. If the conditional distribution of $ \mathbf{Y} $ given $ \mathbf{X} = \mathbf{x} $ is symmetric around a point $ \boldsymbol{\mu} $, then $ \boldsymbol{\mu} $ will be the deepest point, i.e., the conditional median of $ \mathbf{Y} $ given $ \mathbf{X} = \mathbf{x} $, and the conditional distributions of $ \mathbf{Y} - \boldsymbol{\mu} $ and $ \boldsymbol{\mu} - \mathbf{Y} $ given $ \mathbf{X} = \mathbf{x} $ will be identical.
Based on this idea, we devise a test for conditional skewness for the distribution of $ \mathbf{Y} $ given $ \mathbf{X} = \mathbf{x} $.
Our hypotheses for the test of conditional skewness are
\begin{center}
\begin{minipage}[c]{.9\textwidth}
\begin{itemize}
\item[$ H_0 : $] the conditional distribution of $ \mathbf{Y} $ given $ \mathbf{X} = \mathbf{x} $ is symmetric,
\item[$ H_A : $] the conditional distribution of $ \mathbf{Y} $ given $ \mathbf{X} = \mathbf{x} $ is not symmetric.
\end{itemize}
\end{minipage}
\end{center}
Let $ n( \mathbf{x} ) $ be the number of indices $ i $ such that $ W_{i,n}( \mathbf{x} ) > 0 $.
We estimate the conditional median $ \mathbf{m}_n( \mathbf{x} ) $ of $ \mathbf{Y} $ given $ \mathbf{X} = \mathbf{x} $ based on weights $ W_{i,n}( \mathbf{x} ) $, and consider the sample $ \{ \mathbf{Y}_i - \mathbf{m}_n( \mathbf{x} ) \,|\, W_{i,n}( \mathbf{x} ) > 0 \} \cup \{ \mathbf{m}_n( \mathbf{x} ) - \mathbf{Y}_i \,|\, W_{i,n}( \mathbf{x} ) > 0 \} $. From these $ 2 n( \mathbf{x} ) $ observations, we randomly choose $ n( \mathbf{x} ) $ elements $ \mathbf{Z}_1, \ldots, \mathbf{Z}_{n( \mathbf{x} )} $, and consider the collection $ \{ \mathbf{Z}_1 + \mathbf{m}_n( \mathbf{x} ), \ldots, \mathbf{Z}_{n( \mathbf{x} )} + \mathbf{m}_n( \mathbf{x} ) \} $ of $ n( \mathbf{x} ) $ elements as our bootstrap sample. In this bootstrap sample, we calculate the measure of conditional skewness $ \Psi_{1,n}( r_1, r_2 \,|\, \mathbf{x} ) $ from the conditional trimmed mean $ \mathbf{m}_n( r_1 \,|\, \mathbf{x} ) $ and the conditional spread measure $ \Delta_n( r_2 \,|\, \mathbf{x} ) $ of the bootstrap sample and the conditional median $ \mathbf{m}_n( \mathbf{x} ) $ of the original sample. We repeat this procedure a large number of times, say, $ M $ times, and obtain $ M $ values of $ \Psi_{1,n}( r_1, r_2 \,|\, \mathbf{x} ) $.
We reject $ H_0 $ at level $ \alpha $ if the value of $ \Psi_{1,n}( r_1, r_2 \,|\, \mathbf{x} ) $ based on the original sample is higher than the $ ( 1 - \alpha ) $-quantile of the $ M $ values of $ \Psi_{1,n}( r_1, r_2 \,|\, \mathbf{x} ) $ obtained from the bootstrap samples.

The conditional skewness measures and the associated test are demonstrated in a real dataset in \autoref{sec:6}. A size and power study for the test of conditional skewness is presented in simulated models in \autoref{supsec:1}.

\section{Data demonstration} \label{sec:6}
In this section, we demonstrate the conditional central regions, the conditional medians and trimmed means, the measures of conditional spread and skewness, the test of heteroscedasticity and the test of conditional skewness in the Tecator data.
The Tecator data is available in the `caret' package in R. This dataset contains the percentage values of moisture, fat and protein contents of 215 meat samples along with their absorbance spectra in the wavelength range 850--1050 nm. The moisture, the fat and the protein contents were measured by analytical chemistry, while a Tecator Infratec Food and Feed Analyzer was used to record the absorbance spectrum. Being able to predict the nutritional contents of a meat sample from its absorbance spectra is economically beneficial since obtaining the spectra is relatively cheaper.

We first demonstrate the conditional central regions, the conditional medians and the conditional trimmed means.
Here, we consider the pair of fat and protein contents as the response, and the curve of absorbance spectra as the covariate. So, the response is bivariate and the covariate is functional. The covariate is considered to be a random element in the $ L_2 $ space.
We choose the weights $ \{ W_{i,n}( \mathbf{x} ) \} $ in a way which ensures the asymptotic consistency of the estimates. For this, we have employed the nearest neighbor approach, where the integer $ k = k_n $ is
\begin{align*}
k_n = \lfloor ( \log n )^2 \rfloor + 1 ,
\end{align*}
$ \lfloor r \rfloor $ being the largest integer less than or equal to the real number $ r $. So,
\begin{align*}
h( \mathbf{x}, k_n, n ) = \min\left\{ h \middle\arrowvert \sum_{i=1}^{n} \mathbb{I}( d( \mathbf{x}, \mathbf{X}_i ) \le h ) > ( \log n )^2 \right\} .
\end{align*}
This approach is equivalent to the kernel method of choosing the weights when the kernel function is $ K( u ) = \mathbb{I}( 0 \le u \le 1 ) $, and the bandwidth $ h = h( \mathbf{x}, k_n, n ) $.
We noted in \autoref{sec:3} that such a choice ensures the asymptotic consistency of the estimates.
With our choice of the weights, the conditional sample depth $ \rho_n( \cdot \,|\, \mathbf{x} ) $ becomes the corresponding unconditional sample depth based on the local response values of $ \mathbf{x} $. We present the plots of central regions, medians and trimmed means based on the conditional halfspace depth using algorithms by \cite{rousseeuw1996algorithm,rousseeuw1998constructing}, \cite{ruts1996computing}, \cite{rousseeuw1998computing} and \cite{rousseeuw1999bagplot}, which are developed for the unconditional case.
Other depths like the conditional projection depth and the conditional spatial depth produce very similar plots and we do not present them here.
\begin{figure}
\centering
\includegraphics[width=1\linewidth]{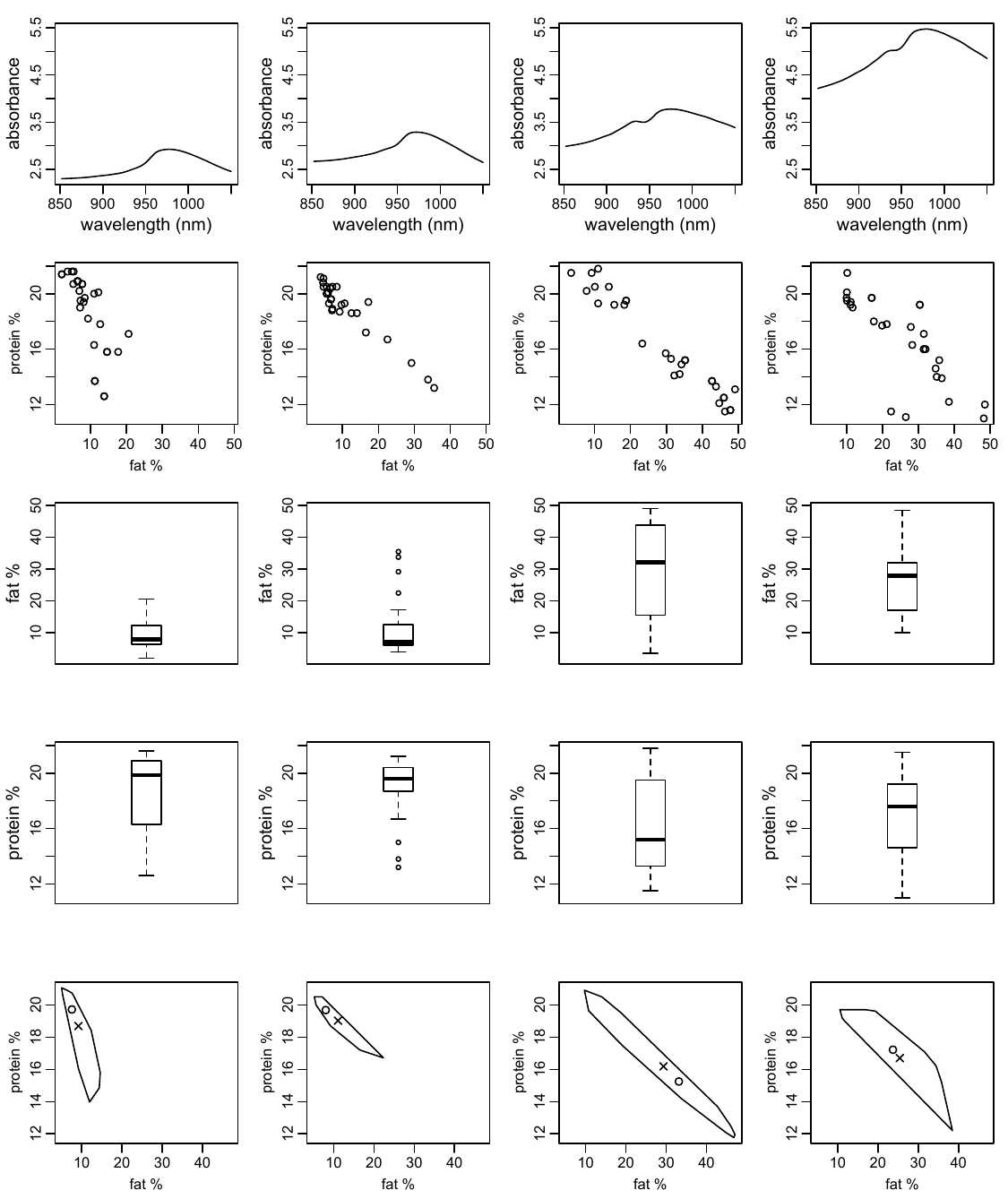}
\caption[Plots of the local response values, local boxplots and the conditional central regions in the Tecator data]{The selected covariate curves ($ 1^\text{st} $ row), the scatter plots of the local response values ($ 2^\text{nd} $ row), the local boxplots for the fat (\%) ($ 3^\text{rd} $ row) and the protein (\%) ($ 4^\text{th} $ row), and the conditional 50\% central regions along with the conditional medians (circle) and the conditional 10\% trimmed means (cross) ($ 5^\text{th} $ row) in the Tecator data.}
\label{fig:sliceplottecator}
\end{figure}
We present the 50\% conditional central regions for the conditional halfspace depth corresponding to four selected covariate values in the Tecator data in \autoref{fig:sliceplottecator}, along with the scatter plots of the local response values and the corresponding local boxplots. The circles and the crosses inside the conditional central regions denote the conditional medians and the 10\% conditional trimmed means, respectively.
The scatter plots of the local response values in \autoref{fig:sliceplottecator} indicate the correlation between the two coordinates of the response variables. The local boxplots and the conditional central regions both demonstrate the heteroscedasticity present in the datasets. The local boxplots, the conditional medians and the conditional trimmed means reflect the variation of conditional skewness of the response over the covariate values in both the datasets.

We now demonstrate the conditional spread measure and the conditional skewness measure using the conditional halfspace depth, the conditional spatial depth and the conditional projection depth.
For the computation of the conditional spatial median, we have used the algorithm given by \cite{chaudhuri1996geometric} for the unconditional case. For conditional projection depth and median computation, we have used the algorithms given in \cite{zuo2011exact} and \cite{liu2013exactly}.
We consider two regression problems here. In the first case, we take the pair of fat and protein contents as a bivariate response, and in the second case, we take the triplet of moisture, fat and protein contents as a trivariate response. The covariate is same in both cases, namely, the curve of absorbance spectra.
For both the datasets, we compute $ \Delta_n( 0.5 \,|\, \mathbf{x} ) $ with $ \mathbf{x} $ varying over the sample covariate values $ \mathbf{X}_i, i = 1, \ldots, n $. We compute the first principal component scores of the observations with respect to the sample dispersion operator of the functional covariate, and denote it as $ P_1 $.
We plot $ \Delta_n( 0.5 \,|\, \mathbf{x} ) $, $ \Psi_{1,n}( 0.1, 0.9 \,|\, \mathbf{x} ) $ and $ \Psi_{2,n}( 0.5 \,|\, \mathbf{x} ) $ for both the bivariate response and the trivariate response against $ P_1 $ in \autoref{fig:spreadtecator} and \autoref{fig:skewtecator}.
We present the plots against the first principal component scores because the first principal component represents the direction of the highest variation in the covariate distribution. Further, the plots of the conditional spread and skewness measures against other principal components do not show clear patterns unlike those visible in the case of the first principal component.
\begin{figure}
\centering
\includegraphics[width=1\linewidth]{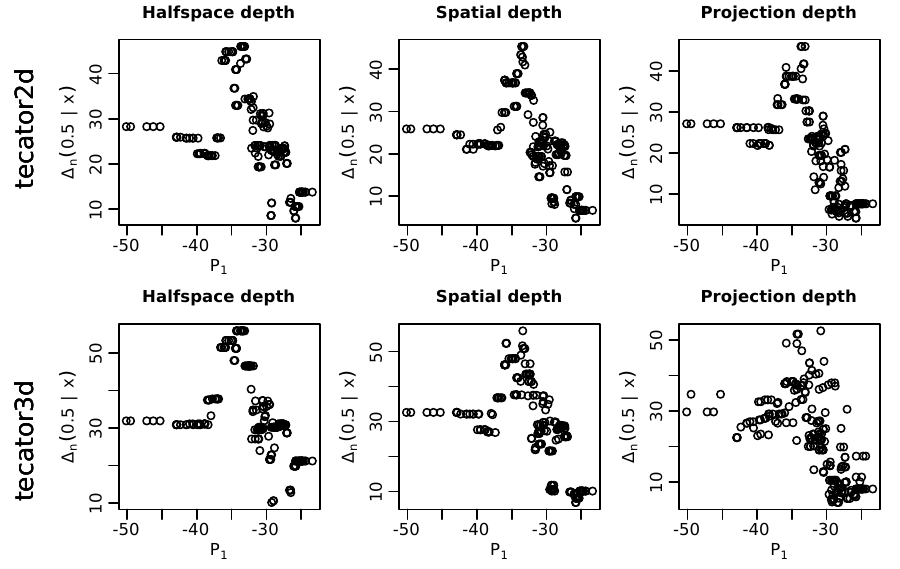}
\caption[Plots of the depth based conditional spread measure in the Tecator data]{Plots of $ \Delta_n( 0.5 \,|\, \mathbf{x} ) $ against $ P_1 $ for bivariate (fat \% and protein \%) and trivariate (moisture \%, fat \% and protein \%) responses in the Tecator data for different depths.}
\label{fig:spreadtecator}
\end{figure}
\begin{figure}
\centering
\includegraphics[width=1\linewidth]{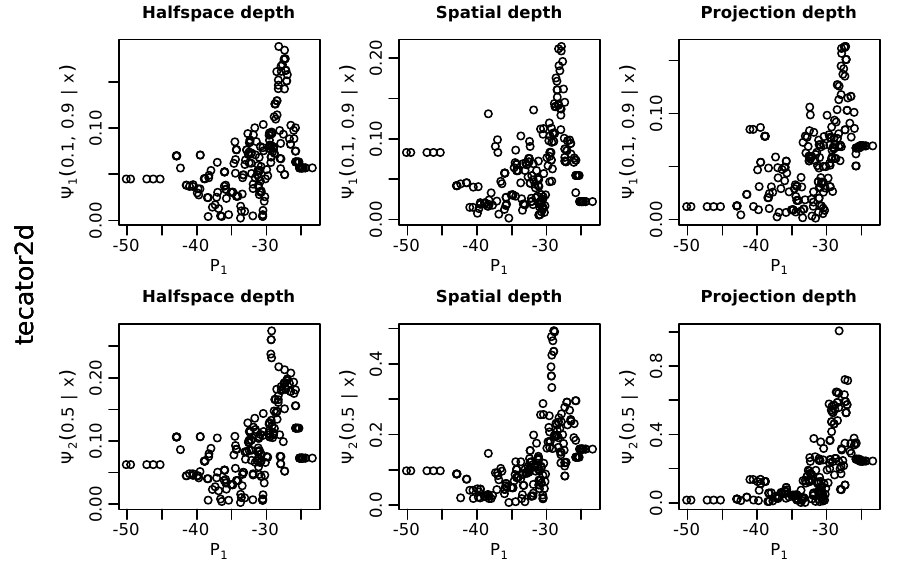}
\includegraphics[width=1\linewidth]{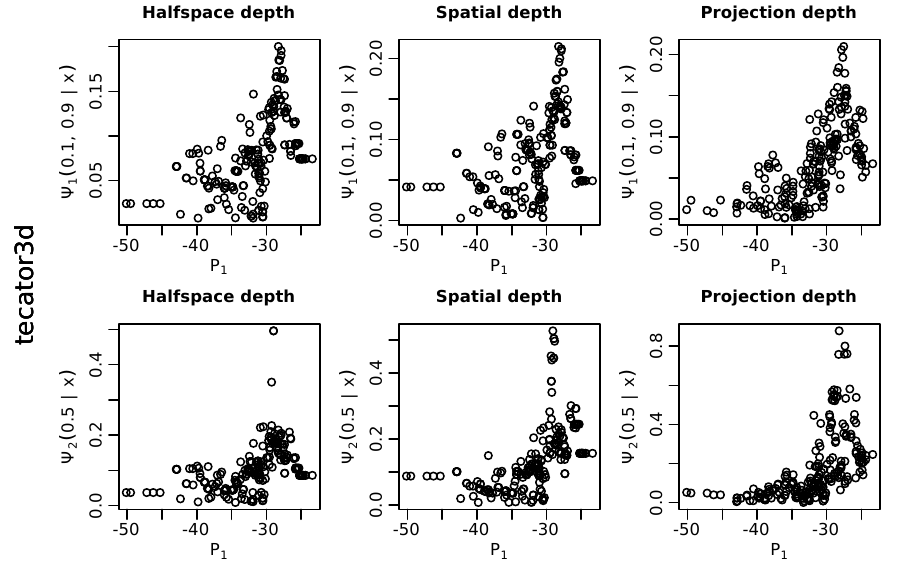}
\caption[Plots of the depth based conditional skewness measures in the Tecator data]{Plots of $ \Psi_{1,n}( 0.1, 0.9 \,|\, \mathbf{x} ) $ and $ \Psi_{2,n}( 0.5 \,|\, \mathbf{x} ) $ against $ P_1 $ for bivariate (fat \% and protein \%) and trivariate (moisture \%, fat \% and protein \%) responses in the Tecator data for different depths.}
\label{fig:skewtecator}
\end{figure}
We can see clear patterns in the plots of $ \Delta_n( 0.5 \,|\, \mathbf{x} ) $ against $ P_1 $ for both the bivariate and the trivariate responses in \autoref{fig:spreadtecator}, which indicates the presence of heteroscedasticity. The conditional skewness measures $ \Psi_{1,n}( 0.1, 0.9 \,|\, \mathbf{x} ) $ and $ \Psi_{2,n}( 0.5 \,|\, \mathbf{x} ) $ also vary with $ P_1 $ in \autoref{fig:skewtecator}, indicating some variation of the conditional skewness present in the sample.

We present the computed p-values for the test of heteroscedasticity based on 500 random permutations and the conditional halfspace depth, the conditional spatial depth and the conditional projection depth in the Tecator data in \autoref{table:hettecator}. The p-values indicate strong presence of heteroscedasticity, and they are consistent with the plots in \autoref{fig:spreadtecator} as both indicate presence of heteroscedasticity.
\begin{table}[h]
\caption{p-values for the test of heteroscedasticity based on several conditional depths in the Tecator data}
\begin{center}
\begin{tabular}{ccccc}
\hline
Data		& Response type		& Halfspace		& Spatial		& Projection	\\\hline
Tecator		& Bivariate			& 0				& 0				& 0				\\
Tecator		& Trivariate		& 0.042			& 0				& 0.002			\\\hline
\end{tabular}
\end{center}
\label{table:hettecator}
\end{table}

To demonstrate the test of conditional skewness in the Tecator data, we randomly select 50 covariate values from the sample, and compute the p-values of the test of conditional skewness at those fixed covariate values for both the bivariate and the trivariate responses. The number of bootstrap samples utilized to compute a p-value is taken as 1000.
We again use the conditional halfspace depth, the conditional spatial depth and the conditional projection depth for the test of conditional skewness.
\begin{figure}
\centering
\includegraphics[width=1\linewidth]{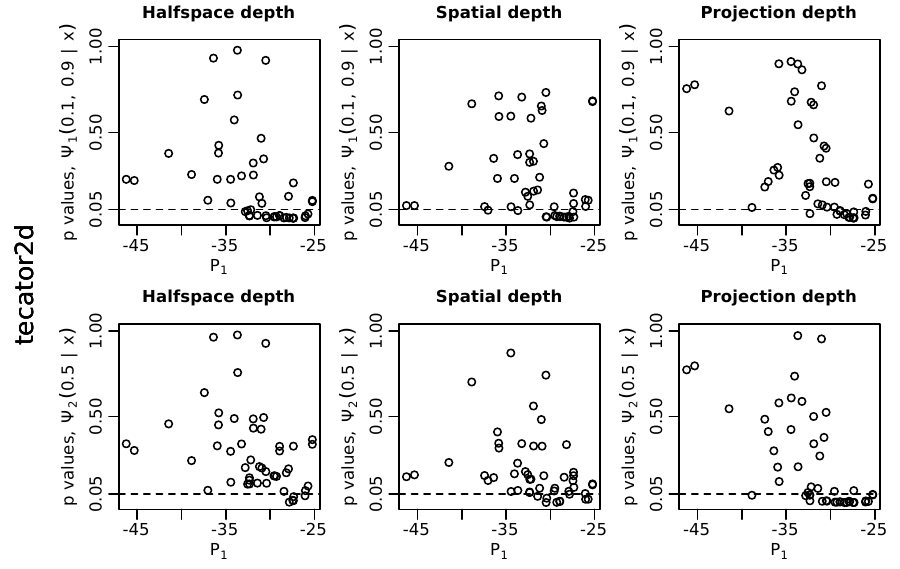}
\includegraphics[width=1\linewidth]{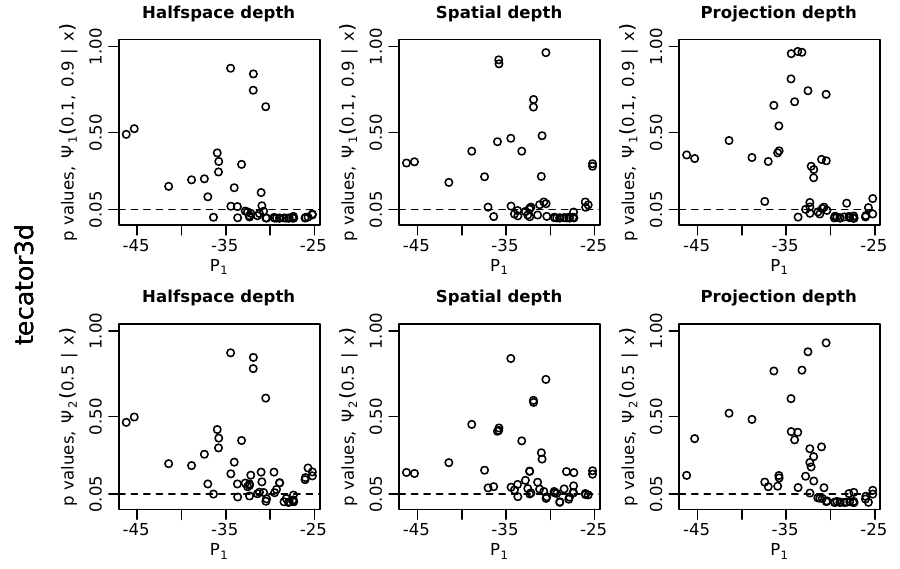}
\caption[Plots of the p-values of the tests of conditional skewness in the Tecator data]{p-values for the test of conditional skewness based on of $ \Psi_{1,n}( 0.1, 0.9 \,|\, \mathbf{x} ) $ and $ \Psi_{2,n}( 0.5 \,|\, \mathbf{x} ) $ against $ P_1 $ for bivariate (fat \% and protein \%) and trivariate (moisture \%, fat \% and protein \%) responses in the Tecator data for different depths.}
\label{fig:skewtesttecator}
\end{figure}
In \autoref{fig:skewtesttecator}, we plot the computed p-values for the test of conditional skewness based on $ \Psi_{1,n}( 0.1, 0.9 \,|\, \mathbf{x} ) $ and $ \Psi_{2,n}( 0.5 \,|\, \mathbf{x} ) $ against the first principal component scores of the corresponding selected covariate curves based on the sample covariance operator of the functional covariate.
In \autoref{fig:skewtesttecator}, we notice a stronger evidence for the presence of conditional skewness in the trivariate response of the Tecator data than in the case of the bivariate response at nominal level 5\%.

\section{Concluding remarks} \label{sec:7}
In this article, nonparametric methods for investigating the conditional distribution of a multivariate response given a covariate are developed based on statistical depths, where the covariate is a random element in a metric space. Unlike traditional mean regression methods, the methods developed here provide information about the center as well as other features of the conditional distribution of the response, like the conditional spread and skewness.

We propose a nonparametric test of heteroscedasticity, where the response is multivariate and the covariate is a random element in a complete separable metric space. To the best of our knowledge, no such test has been investigated in such a setup in the past. Various past studies of heteroscedasticity that are available in the literature consider real valued response and finite dimensional covariate.

We also propose a test of conditional skewness for a multivariate response and a covariate taking values in a metric space. There is an extensive literature on the study of conditional skewness, but almost everywhere, the response considered is real valued and the covariate is finite dimensional.

There are several challenges in extending the depth based regression methods developed here to the case of infinite dimensional responses. The popular finite dimensional depths like the halfspace depth and the projection depth cannot be properly extended for infinite dimensional random elements (see, e.g., Theorem 1 in \cite{chakraborty2014data}). Though, the spatial depth is well-defined for infinite dimensional random elements (see \cite{chakraborty2014spatial}). The asymptotic results proved here use the structure of the finite dimensional space $ \mathbb{R}^p $, and it is difficult to establish them for infinite dimensional spaces.
Some attempts to develop depth based methods for infinite dimensional responses were taken in \cite{chowdhury2019nonparametric}. Only the spatial depth was considered there, and the consistency of the depth based measures like the conditional central regions, conditional medians and trimmed means, measures of conditional spread and skewness were not established. Also, no depth based testing procedure was developed there.

\section{Proofs and mathematical details} \label{sec:8}
Here, we provide the proofs of the theorems in \autoref{sec:3}. The proofs require several lemmas, which are stated in this section, and their proofs are provided in \autoref{supsec:2}. The following lemmas are required for the proof of \autoref{thm1}.
\begin{lemma} \label{lemma1}
Under \ref{a1} and \ref{a3}, for any $ \mu( \cdot \,|\, \mathbf{x} ) $-continuity set $ B $, and given any $ k > 0 $ and $ \epsilon > 0 $, we have
\begin{align*}
& \mathbb{P}\left[ \left| \mu_n( B \,|\, \mathbf{x} ) - \mu( B \,|\, \mathbf{x} ) \right| > \epsilon \middle\arrowvert \mathbf{X}_1, \mathbf{X}_2, \ldots \right]
< 2 n^{-k}
\end{align*}
\emph{almost surely} for all sufficiently large $ n $.
Further, if $ \mathcal{B} $ is a VC class of $ \mu( \cdot \,|\, \mathbf{x} ) $-continuity sets, then
\begin{align*}
\mathbb{P}\left[ \sup\{ \left| \mu_n( B \,|\, \mathbf{x} ) - \mu( B \,|\, \mathbf{x} ) \right| \,|\, B \in \mathcal{B} \} > \epsilon \middle\arrowvert \mathbf{X}_1, \mathbf{X}_2, \ldots \right]
\stackrel{a.s.}{\longrightarrow} 0
\quad
\text{as }
n \to \infty .
\end{align*}
\end{lemma}

\begin{lemma} \label{lemma2}
Let $ \rho( \mathbf{y} \,|\, \mathbf{x} ) $ be the conditional projection depth function described in \autoref{sec:2}. Suppose that there is a neighborhood of $ \mathbf{x} $ such that for all $ \mathbf{z} $ in that neighborhood, the conditional distribution of $ \mathbf{Y} $ given $ \mathbf{X} = \mathbf{z} $ has a continuous positive density $ f( \cdot \,|\, \mathbf{z} ) $ on $ \mathbb{R}^p $, which is continuous in $ \mathbf{z} $.
Then, under \ref{a1}, \ref{a2} and \ref{a3}, we have for all sufficiently large $ n $, $ \sup_{\| \mathbf{y} \| > C} \rho_n( \mathbf{y} \,|\, \mathbf{x} ) \stackrel{a.s.}{\longrightarrow} 0 $ as $ C \to \infty $, and $ \sup_{\| \mathbf{y} \| > C} \rho( \mathbf{y} \,|\, \mathbf{x} ) \to 0 $ as $ C \to \infty $.
Further,
\begin{align*}
& \sup_{\| \mathbf{y} \| \le C} \left| \sup_{\| \mathbf{u} \| = 1} \frac{| \mathbf{u}^t \mathbf{y} - m_n( \mathbf{u}^t \mathbf{Y} \,|\, \mathbf{x} ) |}{m_n( | \mathbf{u}^t \mathbf{Y} - m_n( \mathbf{u}^t \mathbf{Y} \,|\, \mathbf{x} ) | \,|\, \mathbf{x} )} 
- \sup_{\| \mathbf{u} \| = 1} \frac{| \mathbf{u}^t \mathbf{y} - m( \mathbf{u}^t \mathbf{Y} \,|\, \mathbf{x} ) |}{m( | \mathbf{u}^t \mathbf{Y} - m( \mathbf{u}^t \mathbf{Y} \,|\, \mathbf{x} ) | \,|\, \mathbf{x} )} \right| \\
& \stackrel{a.s.}{\longrightarrow} 0
\quad
\text{as } n \to \infty .
\end{align*}
Here, $ m( \mathbf{u}^t \mathbf{Y} \,|\, \mathbf{x} ) $ and $ m( | \mathbf{u}^t \mathbf{Y} - m( \mathbf{u}^t \mathbf{Y} \,|\, \mathbf{x} ) | \,|\, \mathbf{x} ) $ are the conditional medians of $ \mathbf{u}^t \mathbf{Y} $ and $ | \mathbf{u}^t \mathbf{Y} - m( \mathbf{u}^t \mathbf{Y} \,|\, \mathbf{x} ) | $ given $ \mathbf{X} = \mathbf{x} $, respectively. And $ m_n( \mathbf{u}^t \mathbf{Y} \,|\, \mathbf{x} ) $ and $ m_n( | \mathbf{u}^t \mathbf{Y} - m_n( \mathbf{u}^t \mathbf{Y} \,|\, \mathbf{x} ) | \,|\, \mathbf{x} ) $ are the sample analogues of $ m( \mathbf{u}^t \mathbf{Y} \,|\, \mathbf{x} ) $ and $ m( | \mathbf{u}^t \mathbf{Y} - m( \mathbf{u}^t \mathbf{Y} \,|\, \mathbf{x} ) | \,|\, \mathbf{x} ) $, respectively. 
\end{lemma}

\begin{proof}[Proof of \autoref{thm1}]
Let $ \rho( \mathbf{y} \,|\, \mathbf{x} ) $ be the conditional halfspace depth.
Define $ H( \mathbf{u}, \mathbf{y} ) = \{ \mathbf{v} \in \mathbb{R}^p \,|\, \mathbf{u}^t \mathbf{v} \le \mathbf{u}^t \mathbf{y} \} $. From (6.6) in \cite{donoho1992breakdown}, we get
\begin{align*}
\sup_{\mathbf{y} \in \mathbb{R}^p} | \rho_n( \mathbf{y} \,|\, \mathbf{x} ) - \rho( \mathbf{y} \,|\, \mathbf{x} ) | 
\le 
\sup\{ | \mu_n( H( \mathbf{u}, \mathbf{y} ) \,|\, \mathbf{x} ) - \mu( H( \mathbf{u}, \mathbf{y} ) \,|\, \mathbf{x} ) | \allowbreak \,|\, \mathbf{u}, \mathbf{y} \in \mathbb{R}^p \} .
\end{align*}
Also, the class of sets $ \{ H( \mathbf{u}, \mathbf{y} ) \,|\, \mathbf{u}, \mathbf{y} \in \mathbb{R}^p \} $ forms a VC-class \citep[p.~152]{van1996weak}.
Under the conditions in the theorem, every $ H( \mathbf{u}, \mathbf{y} ) $ is a $ \mu( \cdot \,|\, \mathbf{x} ) $-continuity set, and
from \autoref{lemma1}, we get that given any $ \epsilon > 0 $,
\begin{align*}
\mathbb{P}\left[ \sup\{ | \mu_n( H( \mathbf{u}, \mathbf{y} ) \,|\, \mathbf{x} ) - \mu( H( \mathbf{u}, \mathbf{y} ) \,|\, \mathbf{x} ) | \,|\, \mathbf{u}, \mathbf{y} \in \mathbb{R}^p \} > \epsilon \,|\, \mathbf{X}_1, \mathbf{X}_2, \ldots \right]
\stackrel{a.s.}{\longrightarrow} 0
\end{align*}
as $ n \to \infty $. Consequently, from an application of the dominated convergence theorem, we get that \eqref{eq2} is satisfied for the conditional halfspace depth.

Next, let $ \rho( \mathbf{y} \,|\, \mathbf{x} ) $ be the conditional spatial depth.
Define $ \mathbf{Q}( \mathbf{y} \,|\, \mathbf{z} ) = \mathbb{E}[ \| \mathbf{y} - \mathbf{Y} \|^{-1} ( \mathbf{y} - \mathbf{Y} ) \,|\, \mathbf{X} = \mathbf{z} ] $. Under the conditions in the theorem, we have $ \mathbf{Q}( \mathbf{y} \,|\, \mathbf{z} ) $ continuous at $ \mathbf{z} = \mathbf{x} $.
Further,
$ \sup_{ \mathbf{y} \in \mathbb{R}^p } | \rho_n( \mathbf{y} \,|\, \mathbf{x} ) - \rho( \mathbf{y} \,|\, \mathbf{x} ) | 
\le \sup_{ \mathbf{y} \in \mathbb{R}^p } \| \mathbf{Q}_n( \mathbf{y} \,|\, \mathbf{x} ) - \mathbf{Q}( \mathbf{y} \,|\, \mathbf{x} ) \| $, where $ \mathbf{Q}_n( \mathbf{y} \,|\, \mathbf{x} ) = \sum_{i=1}^{n} \| \mathbf{y} - \mathbf{Y}_i \|^{-1} ( \mathbf{y} - \mathbf{Y}_i )W_{i,n}( \mathbf{x} ) $.
Under conditions \ref{a1} and \ref{a3}, the continuity of $ \mathbf{Q}( \mathbf{y} \,|\, \mathbf{z} ) $ at $ \mathbf{z} = \mathbf{x} $, using arguments similar to those in the proof of \autoref{lemma1} in \autoref{supsec:2} and some arguments related to the properties of VC-subgraph classes similar to those in the proof of Theorem 5.5 in \cite{koltchinskii1997m}, we get that given any $ \epsilon > 0 $,
$ \mathbb{P}[ \sup_{ \mathbf{y} \in \mathbb{R}^p } \| \mathbf{Q}_n( \mathbf{y} \,|\, \mathbf{x} ) - \mathbf{Q}( \mathbf{y} \,|\, \mathbf{x} ) \| \allowbreak > \epsilon \,|\, \mathbf{X}_1, \mathbf{X}_2, \ldots ]
\allowbreak
\stackrel{a.s.}{\longrightarrow} 0 $
as $ n \to \infty $. Again, from an application of the dominated convergence theorem, we get that \eqref{eq2} is satisfied for the conditional spatial depth.

Finally, we consider $ \rho( \mathbf{y} \,|\, \mathbf{x} ) $ to be the conditional projection depth.
Under the conditions in the theorem, from \autoref{lemma2} we get that for all sufficiently large $ n $, $ \sup_{\| \mathbf{y} \| > C} \rho_n( \mathbf{y} \,|\, \mathbf{x} ) \stackrel{a.s.}{\longrightarrow} 0 $ as $ C \to \infty $, and $ \sup_{\| \mathbf{y} \| > C} \rho( \mathbf{y} \,|\, \mathbf{x} ) \to 0 $ as $ C \to \infty $. Hence, it is sufficient to show that for any $ C > 0 $, $ \sup_{\| \mathbf{y} \| \le C} | \rho_n( \mathbf{y} \,|\, \mathbf{x} ) - \rho( \mathbf{y} \,|\, \mathbf{x} ) | \stackrel{a.s.}{\longrightarrow} 0 $ as $ n \to \infty $. Under the conditions in the theorem, from \autoref{lemma2} we have
\begin{align*}
& \sup_{\| \mathbf{y} \| \le C} \left| \sup_{\| \mathbf{u} \| = 1} \frac{| \mathbf{u}^t \mathbf{y} - m_n( \mathbf{u}^t \mathbf{Y} \,|\, \mathbf{x} ) |}{m_n( | \mathbf{u}^t \mathbf{Y} - m_n( \mathbf{u}^t \mathbf{Y} \,|\, \mathbf{x} ) | \,|\, \mathbf{x} )} 
- \sup_{\| \mathbf{u} \| = 1} \frac{| \mathbf{u}^t \mathbf{y} - m( \mathbf{u}^t \mathbf{Y} \,|\, \mathbf{x} ) |}{m( | \mathbf{u}^t \mathbf{Y} - m( \mathbf{u}^t \mathbf{Y} \,|\, \mathbf{x} ) | \,|\, \mathbf{x} )} \right| \\
& \stackrel{a.s.}{\longrightarrow} 0
\quad
\text{as } n \to \infty ,
\end{align*}
and this implies $ \sup_{\| \mathbf{y} \| \le C} | \rho_n( \mathbf{y} \,|\, \mathbf{x} ) - \rho( \mathbf{y} \,|\, \mathbf{x} ) | \stackrel{a.s.}{\longrightarrow} 0 $ as $ n \to \infty $. Hence, \eqref{eq2} holds for the conditional projection depth.
\end{proof}

The following results are required for the proof of \autoref{thm2}.
\begin{lemma} \label{lemma3}
Suppose \eqref{eq2} holds. Then, for any $ \epsilon > 0 $, $ \delta > 0 $ and any sequence $ \{ \alpha_n \} $ with $ \alpha_n \stackrel{P}{\longrightarrow} \alpha $ as $ n \to \infty $, where $ \alpha $ is a real number, we have
\begin{align*}
\mathbb{P}\left[ D( \alpha + \delta \,|\, \mathbf{x} ) \subseteq D_n( \alpha_n \,|\, \mathbf{x} ) \subseteq D( \alpha - \delta \,|\, \mathbf{x} ) \right] > 1 - \epsilon
\end{align*}
for all sufficiently large $ n $.
\end{lemma}

\begin{lemma} \label{lemma4}
Suppose \ref{a4}, \eqref{eq1} and \eqref{eq2} hold.
Then, for any $ \beta $, $ \mu_n( D_n( \beta \,|\, \mathbf{x} ) \,|\, \mathbf{x} ) \stackrel{P}{\longrightarrow} \mu( D( \beta \,|\, \mathbf{x} ) \,|\, \mathbf{x} ) $ as $ n \to \infty $.
\end{lemma}

\begin{lemma} \label{lemma5}
Under \ref{a4}, \ref{a5}, \eqref{eq1} and \eqref{eq2}, $ \alpha_n( r ) \stackrel{P}{\longrightarrow} \alpha( r ) $ as $ n \to \infty $ for any $ 0 < r < 1 $.
\end{lemma}

\begin{lemma} \label{lemma6}
Suppose that \ref{a6}, \ref{a7} and \eqref{eq2} hold. Then, for any sequence $ \{ \alpha_n \} $ with $ \alpha_n \stackrel{P}{\longrightarrow} \alpha $ as $ n \to \infty $, $ d_H( D_n( \alpha_n \,|\, \mathbf{x} ), D( \alpha \,|\, \mathbf{x} ) ) \allowbreak \stackrel{P}{\longrightarrow} 0 $ as $ n \to \infty $.
\end{lemma}

\begin{proof}[Proof of \autoref{thm2}]
When \eqref{eq2} holds, from \autoref{lemma3} we get that that for any $ \epsilon, \delta > 0 $ and any sequence $ \{ \alpha_n \} $ with $ \alpha_n \stackrel{P}{\longrightarrow} \alpha $ as $ n \to \infty $, we have
\begin{align}
\mathbb{P}\left[ D( \alpha + \delta \,|\, \mathbf{x} ) \subseteq D_n( \alpha_n \,|\, \mathbf{x} ) \subseteq D( \alpha - \delta \,|\, \mathbf{x} ) \right] > 1 - \epsilon
\label{thm2eq1}
\end{align}
for all sufficiently large $ n $. Also, under \ref{a4}, \ref{a5}, \eqref{eq1}, \eqref{eq2} and for any $ 0 < r < 1 $, from \autoref{lemma5} we get
\begin{align}
\alpha_n( r ) \stackrel{P}{\longrightarrow} \alpha( r )
\quad\text{as } n \to \infty .
\label{thm2eq2}
\end{align}
From \eqref{thm2eq1} and \eqref{thm2eq2}, we get that for any $ \epsilon > 0 $ and any $ 0 < r < 1 $,
\begin{align}
\mathbb{P}\left[ D( \alpha( r ) + \epsilon \,|\, \mathbf{x} ) \subseteq D_n( \alpha_n( r ) \,|\, \mathbf{x} ) \subseteq D( \alpha( r ) - \epsilon \,|\, \mathbf{x} ) \right] \to 1
\quad\text{as } n \to \infty .
\label{thm2eq3}
\end{align}
Next, under \ref{a6}, \ref{a7} and \eqref{eq2}, for any sequence of random variables $ \{ \alpha_n \} $ with $ \alpha_n \stackrel{P}{\longrightarrow} \alpha $ as $ n \to \infty $, from \autoref{lemma6} we get
$ d_H( D_n( \alpha_n \,|\, \mathbf{x} ), D( \alpha \,|\, \mathbf{x} ) ) \allowbreak \stackrel{P}{\longrightarrow} 0 $
as $ n \to \infty $. From this and \eqref{thm2eq2}, we have $ d_H( D_n( \alpha_n( r ) \,|\, \mathbf{x} ) , D( \alpha( r ) \,|\, \mathbf{x} ) ) \stackrel{P}{\longrightarrow} 0 $ as $ n \to \infty $.
\end{proof}

\begin{proof}[Proof of \autoref{coro1}]
The proof follows directly from \eqref{thm2eq3} in the proof of \autoref{thm2}.
\end{proof}

The following results are required for the proofs of \autoref{thm3} and \autoref{thm4}.
\begin{lemma} \label{lemma7}
Let $ \{ \alpha_n \} $ be a sequence of random variables with $ \alpha_n \stackrel{P}{\longrightarrow} \alpha > 0 $ as $ n \to \infty $. Suppose that \ref{a4}, \ref{a6}, \eqref{eq1} and \eqref{eq2} hold. Then,
\begin{align*}
\int \mathbf{y} \mathbb{I}( \mathbf{y} \in D_n( \alpha_n \,|\, \mathbf{x} ) ) \mu_n( \mathrm{d} \mathbf{y} \,|\, \mathbf{x} )
\stackrel{P}{\longrightarrow}
\int \mathbf{y} \mathbb{I}( \mathbf{y} \in D( \alpha \,|\, \mathbf{x} ) ) \mu( \mathrm{d} \mathbf{y} \,|\, \mathbf{x} )
\quad
\text{as } n \to \infty .
\end{align*}
\end{lemma}

\begin{lemma} \label{lemma8}
Suppose \eqref{eq2} and \ref{a4} hold, and
$ \mu( \cdot \,|\, \mathbf{x} ) $ has a continuous density on $ \mathbb{R}^p $, which is everywhere positive.
Then, for any sequence $ \{ \alpha_n \} $ with $ \alpha_n \stackrel{P}{\longrightarrow} \alpha > 0 $ as $ n \to \infty $,
$ \lambda( D_n( \alpha_n \,|\, \mathbf{x} ) )
\stackrel{P}{\longrightarrow} \lambda( D( \alpha \,|\, \mathbf{x} ) ) $
as $ n \to \infty $, where $ \lambda( \cdot ) $ is the Lebesgue measure on $ \mathbb{R}^p $.
\end{lemma}

\begin{lemma} \label{lemma9}
For any pair of sets $ A $ and $ B $, $ | \text{Diameter}( A ) - \text{Diameter}( B ) | \le 2 d_H( A, B ) $.
\end{lemma}

\begin{lemma} \label{lemma10}
Let $ \{ \alpha_n \} $ be a sequence of random variables with $ \alpha_n \stackrel{P}{\longrightarrow} \alpha $ as $ n \to \infty $.
Define $ D_n'( \alpha_n \,|\, \mathbf{x} ) = \{ \mathbf{Y}_i \,|\, \rho_n( \mathbf{Y}_i \,|\, \mathbf{x} ) \ge \alpha_n;\; i = 1, \ldots, n \} $.
Suppose that for any open set $ G $ with $ G \cap D( \alpha \,|\, \mathbf{x} ) \neq \emptyset $, we have $ \mathbb{P}\left[ \mathbf{Y} \in G \cap D( \alpha \,|\, \mathbf{x} ) \right] > 0 $.
Then, under \ref{a6}, \ref{a7} and \eqref{eq2}, $ d_H( D_n'( \alpha_n \,|\, \mathbf{x} ), D( \alpha \,|\, \mathbf{x} ) ) \stackrel{P}{\longrightarrow} 0 $ as $ n \to \infty $.
\end{lemma}

\begin{proof}[Proof of \autoref{thm3}]
\textit{Proof of (a):}
Suppose \ref{a4} and \eqref{eq2} are satisfied, and
$ \mu( \cdot \,|\, \mathbf{x} ) $ has a continuous positive density on $ \mathbb{R}^p $. Then, from \autoref{lemma8}, for any sequence $ \{ \alpha_n \} $ with $ \alpha_n \stackrel{P}{\longrightarrow} \alpha > 0 $ as $ n \to \infty $, we have
$ \lambda( D_n( \alpha_n \,|\, \mathbf{x} ) ) )
\stackrel{P}{\longrightarrow} \lambda( D( \alpha \,|\, \mathbf{x} ) ) $
as $ n \to \infty $, where $ \lambda( \cdot ) $ is the Lebesgue measure on $ \mathbb{R}^p $.
Using this fact and \autoref{lemma5}, the proof of part (a) is complete.

\textit{Proof of (b):}
For any pair of sets $ A $ and $ B $, from \autoref{lemma9} we get $ | \text{Diameter}( A ) - \text{Diameter}( B ) | \le 2 d_H( A, B ) $.
Define
\begin{align*}
D_n'( \alpha_n \,|\, \mathbf{x} ) = \{ \mathbf{Y}_i \,|\, \rho_n( \mathbf{Y}_i \,|\, \mathbf{x} ) \ge \alpha_n;\; i = 1, \ldots, n \} .
\end{align*}
Note that
\begin{align*}
& \Delta_n( r \,|\, \mathbf{x} ) = \text{Diameter}( D_n'( \alpha_n( r ) \,|\, \mathbf{x} ) ) 
\quad\text{and}\quad \Delta( r \,|\, \mathbf{x} ) = \text{Diameter}( D( \alpha( r ) \,|\, \mathbf{x} ) ) .
\end{align*}
Let $ \{ \alpha_n \} $ be a sequence of random variables with $ \alpha_n \stackrel{P}{\longrightarrow} \alpha $ as $ n \to \infty $.
Suppose that for any open set $ G $ with $ G \cap D( \alpha \,|\, \mathbf{x} ) \neq \emptyset $, we have
\begin{align*}
\mathbb{P}\left[ \mathbf{Y} \in G \cap D( \alpha \,|\, \mathbf{x} ) \right] > 0 .
\end{align*}
Then, under \ref{a6}, \ref{a7} and \eqref{eq2}, from \autoref{lemma10}, we get
\begin{align}
d_H\left( D_n'( \alpha_n \,|\, \mathbf{x} ), D( \alpha \,|\, \mathbf{x} ) \right) \stackrel{P}{\longrightarrow} 0
\quad\text{as } n \to \infty .
\label{thm3eq1}
\end{align}
The proof of part (b) follows from \eqref{thm3eq1} and \autoref{lemma5}.
\end{proof}

\begin{proof}[Proof of \autoref{thm4}]
We denote $ \rho( \mathcal{M}( \mathbf{x} ) \,|\, \mathbf{x} ) = \max_\mathbf{y} \rho( \mathbf{y} \,|\, \mathbf{x} ) = \rho( \mathbf{m} \,|\, \mathbf{x} ) $ for any $ \mathbf{m} \in \mathcal{M}( \mathbf{x} ) $.
From \eqref{eq2}, we get that given any $ \epsilon, \delta > 0 $, for $ \mathbf{m} \in \mathcal{M}( \mathbf{x} ) $,
\begin{align*}
& \mathbb{P}[ \rho_n( \mathbf{m}_n( \mathbf{x} ) \,|\, \mathbf{x} ) \le \rho( \mathbf{m}_n( \mathbf{x} ) \,|\, \mathbf{x} ) + \epsilon \le \rho( \mathcal{M}( \mathbf{x} ) \,|\, \mathbf{x} ) + \epsilon ]
> 1 - \frac{\delta}{2} \\
& \text{and}\quad \mathbb{P}[ \rho_n( \mathbf{m}_n( \mathbf{x} ) \,|\, \mathbf{x} ) \ge \rho_n( \mathbf{m} \,|\, \mathbf{x} ) \ge \rho( \mathcal{M}( \mathbf{x} ) \,|\, \mathbf{x} ) - \epsilon ]
> 1 - \frac{\delta}{2}
\end{align*}
for all sufficiently large $ n $, which implies that $ \rho_n( \mathbf{m}_n( \mathbf{x} ) \,|\, \mathbf{x} ) \stackrel{P}{\longrightarrow} \rho( \mathcal{M}( \mathbf{x} ) \,|\, \mathbf{x} ) $ as $ n \to \infty $.
From this fact and \autoref{lemma3}, we have, given any $ \epsilon, \delta > 0 $,
\begin{align*}
\mathbb{P}\left[ D_n( \rho_n( \mathbf{m}_n( \mathbf{x} ) \,|\, \mathbf{x} ) \,|\, \mathbf{x} ) \subseteq D( \rho( \mathcal{M}( \mathbf{x} ) \,|\, \mathbf{x} ) - \delta \,|\, \mathbf{x} ) \right] > 1 - \epsilon
\end{align*}
for all sufficiently large $ n $.
Note that $ D_n( \rho_n( \mathbf{m}_n( \mathbf{x} ) \,|\, \mathbf{x} ) \,|\, \mathbf{x} ) $ is the collection of all sample conditional medians, and $ \mathcal{M}( \mathbf{x} ) = D( \rho( \mathcal{M}( \mathbf{x} ) \,|\, \mathbf{x} ) \,|\, \mathbf{x} ) \subset D( \rho( \mathcal{M}( \mathbf{x} ) \,|\, \mathbf{x} ) - \delta \,|\, \mathbf{x} ) $ for all $ \delta > 0 $.
Whenever $ D_n( \rho_n( \mathbf{m}_n( \mathbf{x} ) \,|\, \mathbf{x} ) \,|\, \mathbf{x} ) \subseteq D( \rho( \mathcal{M}( \mathbf{x} ) \,|\, \mathbf{x} ) - \delta \,|\, \mathbf{x} ) $, we have
\begin{align*}
d_H( D_n( \rho_n( \mathbf{m}_n( \mathbf{x} ) \,|\, \mathbf{x} ) \,|\, \mathbf{x} ) , \mathcal{M}( \mathbf{x} ) )
\le
d_H( D( \rho( \mathcal{M}( \mathbf{x} ) \,|\, \mathbf{x} ) - \delta \,|\, \mathbf{x} ) , \mathcal{M}( \mathbf{x} ) ) .
\end{align*}
From \eqref{lemma6eq2} in the proof of \autoref{lemma6} in \autoref{supsec:2}, we get
\begin{align*}
\lim_{\delta \to 0^+} d_H( \mathcal{M}( \mathbf{x} ) , D( \rho( \mathcal{M}( \mathbf{x} ) \,|\, \mathbf{x} ) - \delta \,|\, \mathbf{x} ) ) = 0 .
\end{align*}
Therefore, $ d_H( D_n( \rho_n( \mathbf{m}_n( \mathbf{x} ) \,|\, \mathbf{x} ) \,|\, \mathbf{x} ) , \mathcal{M}( \mathbf{x} ) )
\stackrel{P}{\longrightarrow}
0 $ as $ n \to \infty $, which implies
$ \inf_{ \mathbf{m} \in \mathcal{M}( \mathbf{x} ) } \| \mathbf{m}_n( \mathbf{x} ) - \mathbf{m} \|
\stackrel{P}{\longrightarrow} 0 $
as $ n \to \infty $ for any sequence of sample conditional medians $ \{ \mathbf{m}_n( \mathbf{x} ) \} $.

We now proceed to prove the consistency of $ \mathbf{m}_n( r \,|\, \mathbf{x} ) $.
Let $ \{ \alpha_n \} $ be a sequence of random variables with $ \alpha_n \stackrel{P}{\longrightarrow} \alpha > 0 $ as $ n \to \infty $. Suppose that \ref{a4}, \ref{a6}, \eqref{eq1} and \eqref{eq2} are satisfied. Then, from \autoref{lemma7}, we have
\begin{align}
\int \mathbf{y} \mathbb{I}( \mathbf{y} \in D_n( \alpha_n \,|\, \mathbf{x} ) ) \mu_n( \mathrm{d} \mathbf{y} \,|\, \mathbf{x} )
\stackrel{P}{\longrightarrow}
\int \mathbf{y} \mathbb{I}( \mathbf{y} \in D( \alpha \,|\, \mathbf{x} ) ) \mu( \mathrm{d} \mathbf{y} \,|\, \mathbf{x} )
\label{thm4eq1}
\end{align}
as $ n \to \infty $.
For any $ 0 < r < 1 $, from conditions \ref{a4} and \ref{a5}, we get that $ \alpha( 1 - r ) > 0 $. So, from \autoref{lemma5} and \eqref{thm4eq1}, we have
$ \int \mathbf{y} \mathbb{I}( \mathbf{y} \in D_n( \alpha_n( 1 - r ) \,|\, \mathbf{x} ) ) \mu_n( \mathrm{d} \mathbf{y} \,|\, \mathbf{x} )
\stackrel{P}{\longrightarrow}
\int \mathbf{y} \mathbb{I}( \mathbf{y} \in D( \alpha( 1 - r ) \,|\, \mathbf{x} ) ) \mu( \mathrm{d} \mathbf{y} \,|\, \mathbf{x} ) $
as $ n \to \infty $.
Also, from \ref{a4}, \eqref{eq1} and \autoref{thm2}, it follows that
$ \mu_n( D_n( \alpha_n( 1 - r ) \,|\, \mathbf{x} ) \,|\, \mathbf{x} ) \stackrel{P}{\longrightarrow} \mu( D( \alpha( 1 - r ) \,|\, \mathbf{x} ) \,|\, \mathbf{x} ) $
as $ n \to \infty $.
Hence, $ \mathbf{m}_n( r \,|\, \mathbf{x} ) \stackrel{P}{\longrightarrow} \mathbf{m}( r \,|\, \mathbf{x} ) $ as $ n \to \infty $.
\end{proof}

\begin{proof}[Proof of \autoref{coro2}]
From \autoref{thm3} and \autoref{thm4}, we get $ \Psi_{1,n}( r_1, r_2 \,|\, \mathbf{x} ) \allowbreak \stackrel{P}{\longrightarrow} \Psi_1( r_1, r_2 \,|\, \mathbf{x} ) $ as $ n \to \infty $.
Next, note that
\begin{align*}
& \mathbb{E}\left[ \left\| \mathbf{M}_n( \mathbf{x} ) - \mathbf{M}( \mathbf{x} ) \right\|^2 \middle\arrowvert \mathbf{X}_1, \mathbf{X}_2, \ldots \right] \\
& = \sum_{i=1}^{n} \mathbb{E}\left[ \left\| \mathbf{Y}_i - \mathbb{E}\left[ \mathbf{Y}_i \mid \mathbf{X}_i \right] \right\|^2 \middle\arrowvert \mathbf{X}_i \right] W_{i,n}^2( \mathbf{x} ) \\
& \quad + \left\| \sum_{i=1}^{n} \mathbb{E}\left[ \mathbf{Y}_i \mid \mathbf{X}_i \right] W_{i,n}( \mathbf{x} ) - \mathbb{E}\left[ \mathbf{Y} \mid \mathbf{X} = \mathbf{x} \right] \right\|^2 .
\end{align*}
Now, from \ref{a1} and using the condition that $ \mathbb{E}\left[ \left\| \mathbf{Y} \right\|^2 \middle\arrowvert \mathbf{X} = \mathbf{z} \right] $ is uniformly bounded over $ \mathbf{z} $, we get $ \sum_{i=1}^{n} \mathbb{E}\left[ \left\| \mathbf{Y}_i - \mathbb{E}\left[ \mathbf{Y}_i \mid \mathbf{X}_i \right] \right\|^2 \middle\arrowvert \mathbf{X}_i \right] W_{i,n}^2( \mathbf{x} ) \stackrel{a.s.}{\longrightarrow} 0 $ as $ n \to \infty $.
Again using \ref{a1} and the continuity of $ \mathbb{E}\left[ \mathbf{Y} \mid \mathbf{X} = \mathbf{z} \right] $ at $ \mathbf{z} = \mathbf{x} $, we get that $ \left\| \sum_{i=1}^{n} \mathbb{E}\left[ \mathbf{Y}_i \mid \mathbf{X}_i \right] W_{i,n}( \mathbf{x} ) - \mathbb{E}\left[ \mathbf{Y} \mid \mathbf{X} = \mathbf{x} \right] \right\|^2 \stackrel{a.s.}{\longrightarrow} 0 $ as $ n \to \infty $.
Therefore, we have $ \mathbb{E}\left[ \left\| \mathbf{M}_n( \mathbf{x} ) - \mathbf{M}( \mathbf{x} ) \right\|^2 \middle\arrowvert \mathbf{X}_1, \mathbf{X}_2, \ldots \right] \stackrel{a.s.}{\longrightarrow} 0 $ as $ n \to \infty $, which implies $ \mathbb{E}\left[ \left\| \mathbf{M}_n( \mathbf{x} ) - \mathbf{M}( \mathbf{x} ) \right\|^2 \right] \to 0 $ as $ n \to \infty $. Consequently, from the Markov inequality, we have
\begin{align}
\mathbf{M}_n( \mathbf{x} ) \stackrel{P}{\longrightarrow} \mathbf{M}( \mathbf{x} )
\quad\text{as } n \to \infty .
\label{coro2eq1}
\end{align}
From \eqref{coro2eq1}, \autoref{thm3} and \autoref{thm4}, we get $ \Psi_{2,n}( r \,|\, \mathbf{x} ) \stackrel{P}{\longrightarrow} \Psi_2( r \,|\, \mathbf{x} ) $ as $ n \to \infty $.
\end{proof}

\appendix

\section{Demonstration in simulated models} \label{supsec:1}
We demonstrate here the performance of the tests of heteroscedasticity and conditional skewness in simulated models.
First, we demonstrate the test of heteroscedasticity in two simulation models.
Let $ \Sigma_p = (( \sigma_{ij} ))_{p \times p} $ with $ \sigma_{ij} = 0.5 + 0.5 \mathbb{I}( i = j ) $, and $ a $ be a positive number.
The covariate $ \mathbf{X} $ is a random function with $ \mathbf{X}( t ) = \mathbf{B} e^{t} $, where $ \mathbf{B} \sim Uniform[ 0, 1 ] $ and $ t \in [ 0, 1 ] $. $ \mathbf{X} $ is considered as a random element in $ L_2[ 0, 1 ] $ and let $ \| \cdot \|_2 $ denote the $ L_2 $-norm. In the first simulation model, we take the conditional distribution of the response $ \mathbf{Y} $ given $ \mathbf{X} $ to be bivariate normal with mean vector $ ( 0, 0 ) $ and dispersion matrix $ ( 1 + a \| \mathbf{X} \|_2 ) \Sigma_2 $, where $ a $ is a positive parameter. In the second simulation model, the conditional distribution of the response $ \mathbf{Y} $ given $ \mathbf{X} $ is trivariate normal with mean vector $ ( 0, 0, 0 ) $ and dispersion matrix $ ( 1 + a \| \mathbf{X} \|_2 ) \Sigma_2 $.
We estimate the power of the test of heteroscedasticity for the conditional halfspace depth, the conditional spatial depth and the conditional projection depth, for different sample sizes and different values of the constant $ a $ based on 500 independent replications, with the number of random permutations in each replication being 1000. The estimated powers for different depths, parameter values and sample sizes are presented in \autoref{table:1}.
\begin{table}[h]
\caption{Estimated powers for the test of heteroscedasticity in simulated models based on different depths.}
\begin{center}
\begin{tabular}{|cc|ccccc|ccccc|}
\hline
\multicolumn{12}{|c|}{Estimated powers for the test based on halfspace depth}					\\\hline
\multirow{2}{*}{$ n $}	& \multirow{2}{*}{level}	& \multicolumn{5}{c|}{Bivariate $ \mathbf{Y} $}			& \multicolumn{5}{c|}{Trivariate $ \mathbf{Y} $}	\\
&							& $ a = 0 $	& $ a = 1 $	& $ a = 2 $	& $ a = 3 $	& $ a = 4 $	& $ a = 0 $	& $ a = 1 $	& $ a = 2 $	& $ a = 3 $	& $ a = 4 $	\\\hline
200						& 5\%						& 0.048    	& 0.334   	& 0.616    	& 0.764    	& 0.838		& 0.046    	& 0.18    	& 0.306  	& 0.43    	& 0.498		\\
200						& 1\%						& 0.012		& 0.142		& 0.348		& 0.504		& 0.622		& 0.002    	& 0.078    	& 0.128    	& 0.21    	& 0.27		\\\hline
500						& 5\%						& 0.058    	& 0.696    	& 0.966    	& 0.996    	& 0.996		& 0.044    	& 0.394    	& 0.808    	& 0.96    	& 0.992		\\
500						& 1\%						& 0.006		& 0.478		& 0.882		& 0.974		& 0.99		& 0.01    	& 0.14    	& 0.512    	& 0.718    	& 0.858		\\\hline
\hline
\multicolumn{12}{|c|}{Estimated powers for the test based on spatial depth}						\\\hline
\multirow{2}{*}{$ n $}	& \multirow{2}{*}{level}	& \multicolumn{5}{c|}{Bivariate $ \mathbf{Y} $}			& \multicolumn{5}{c|}{Trivariate $ \mathbf{Y} $}	\\
&							& $ a = 0 $	& $ a = 1 $	& $ a = 2 $	& $ a = 3 $	& $ a = 4 $	& $ a = 0 $	& $ a = 1 $	& $ a = 2 $	& $ a = 3 $	& $ a = 4 $	\\\hline
200						& 5\%						& 0.068    	& 0.324   	& 0.656    	& 0.822    	& 0.9		& 0.042    	& 0.396    	& 0.7    	& 0.856    	& 0.922		\\
200						& 1\%						& 0.022		& 0.162		& 0.428		& 0.586		& 0.72		& 0.012    	& 0.182    	& 0.458    	& 0.67    	& 0.764		\\\hline
500						& 5\%						& 0.078    	& 0.714    	& 0.968    	& 0.994    	& 0.998		& 0.048    	& 0.756    	& 0.978    	& 0.998    	& 0.996		\\
500						& 1\%						& 0.012		& 0.476		& 0.912		& 0.986		& 0.994		& 0.002    	& 0.502    	& 0.894    	& 0.974    	& 0.994		\\\hline
\hline
\multicolumn{12}{|c|}{Estimated powers for the test based on projection depth}					\\\hline
\multirow{2}{*}{$ n $}	& \multirow{2}{*}{level}	& \multicolumn{5}{c|}{Bivariate $ \mathbf{Y} $}			& \multicolumn{5}{c|}{Trivariate $ \mathbf{Y} $}	\\
&							& $ a = 0 $	& $ a = 1 $	& $ a = 2 $	& $ a = 3 $	& $ a = 4 $	& $ a = 0 $	& $ a = 1 $	& $ a = 2 $	& $ a = 3 $	& $ a = 4 $	\\\hline
200						& 5\%						& 0.044    	& 0.33   	& 0.59    	& 0.714    	& 0.79		& 0.05    	& 0.346    	& 0.618    	& 0.772    	& 0.842		\\
200						& 1\%						& 0.006		& 0.144		& 0.348		& 0.512		& 0.556		& 0.008    	& 0.152    	& 0.398    	& 0.508   	& 0.626		\\\hline
500						& 5\%						& 0.052    	& 0.592    	& 0.918    	& 0.988    	& 0.996		& 0.04    	& 0.638    	& 0.942    	& 0.986    	& 0.99		\\
500						& 1\%						& 0.016		& 0.354		& 0.77		& 0.93		& 0.978		& 0.012    	& 0.404    	& 0.83    	& 0.928    	& 0.966		\\\hline
\end{tabular}
\end{center}
\label{table:1}
\end{table}
Note that the estimated power corresponding to $ a = 0 $ is nothing but the estimated size of the test.

Next, we demonstrate the test of conditional skewness in two simulation models.
In both the models described below, the functional covariate $ \mathbf{X} $ is the same as that in the heteroscedastic simulation model with $ \mathbf{X}( t ) = \mathbf{B} e^{t} $, where $ \mathbf{B} \sim Uniform[ 0, 1 ] $ and $ t \in [ 0, 1 ] $.
We consider mixture distributions with unequal proportions as skewed models.
Let $ \Sigma_p $ and $ a $ be as defined before.
In the first simulation model for conditional skewness, we consider a bivariate normal random vector $ \mathbf{Z} $ with mean $ ( 0, 0 ) $ and dispersion matrix $ \Sigma_2 $. We take the conditional distribution of the response $ \mathbf{Y} $ given $ \mathbf{X} $ to be a mixture of bivariate normal distributions given by $ 5 R \mathbf{Z} + ( 1 - R ) ( ( a \| \mathbf{X} \|_2 ( 1, 1 ) ) + \mathbf{Z} ) $, where $ R $ is a Bernoulli random variable with $ \mathbb{P}[ R = 1 ] = 0.7 $, and $ R $ and $ \mathbf{Z} $ are independent of $ \mathbf{X} $ and $ \mathbf{Y} $. In the second simulation model for conditional skewness, we consider a trivariate normal random vector $ \mathbf{Z} $ following a trivariate normal distribution with mean $ ( 0, 0, 0 ) $ and dispersion matrix $ \Sigma_3 $, and take the conditional distribution of the response $ \mathbf{Y} $ given $ \mathbf{X} $ to be a mixture of trivariate normal distributions given by $ 5 R \mathbf{Z} + ( 1 - R ) ( ( a \| \mathbf{X} \|_2 ( 1, 1, 1 ) ) + \mathbf{Z} ) $, where $ R $ is the Bernoulli random variable described before, and $ R $ and $ \mathbf{Z} $ are again independent of $ \mathbf{X} $ and $ \mathbf{Y} $.

To carry out the test of conditional skewness, we randomly generate 50 values from the covariate distribution and estimate the power of the test at those chosen covariate values. The p-value of the test is computed based on 1000 bootstrap samples, and the estimated powers are computed based on 500 independent replications.
For the first simulation model, we plot the estimated power of the test of conditional skewness based on several depths for different values of the parameter $ a $ and sample sizes in \autoref{fig:skewtest90pcsim2d} and \autoref{fig:skewtestlocalsim2d} against the first principal component scores of the fixed covariate values based on the sample covariance operator of the functional covariate, denoted as $ P_1 $.
Note that the parameter value $ a = 0 $ corresponds to the estimated size of the test.
Similarly, for the second simulation model, we plot the estimated power of the test based on several depths for different values of $ a $ and sample sizes in \autoref{fig:skewtest90pcsim3d} and \autoref{fig:skewtestlocalsim3d} against $ P_1 $.
\begin{figure}
\centering
\includegraphics[width=1\linewidth]{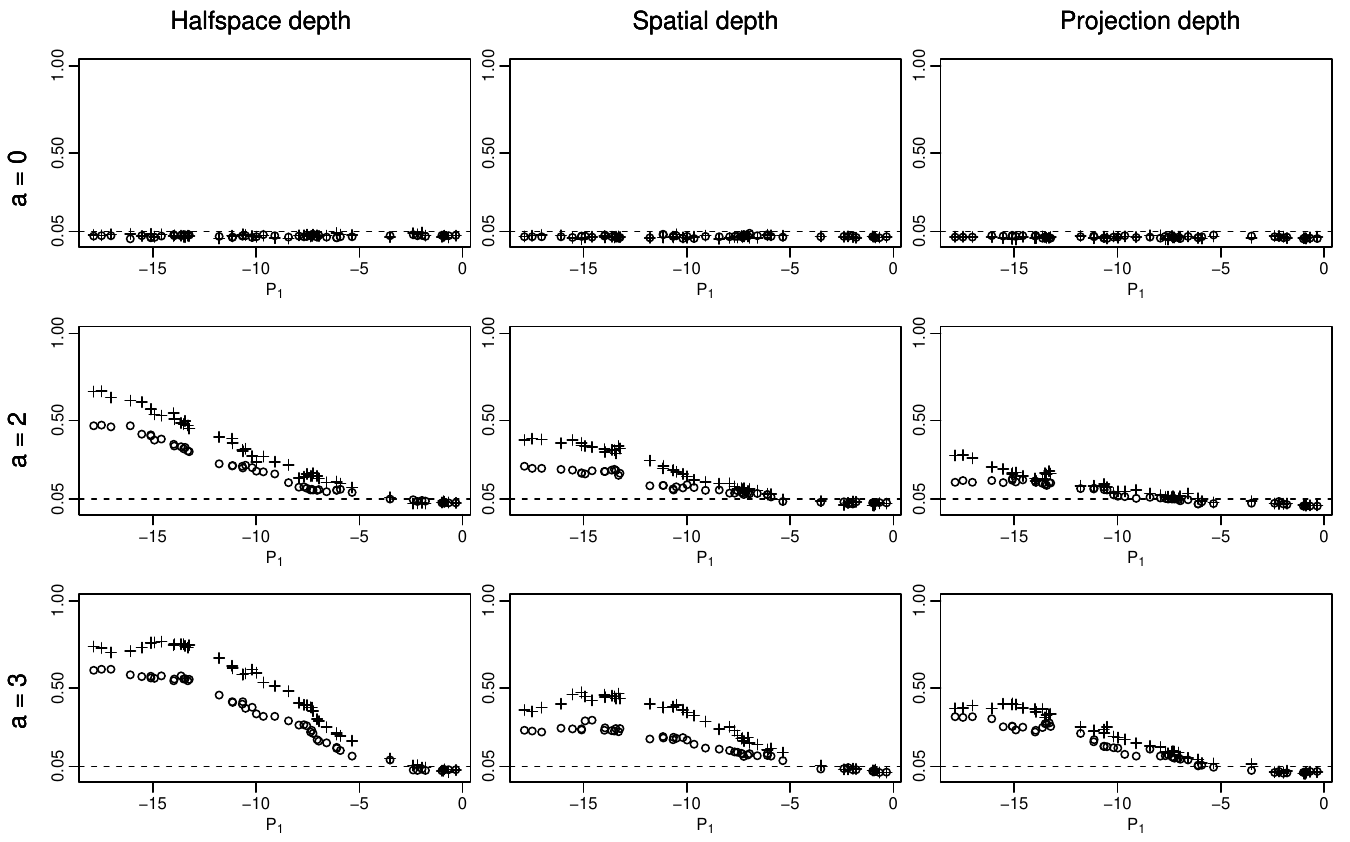}
\caption[Plots of the estimated power for the test of conditional skewness based on $ \Psi_{1,n}( 0.1, 0.9 \,|\, \mathbf{x} ) $ in the simulated data with bivariate response]{Estimated powers at nominal level 5\% at different values of the parameter $ a $ and different depths for the test of conditional skewness based on $ \Psi_{1,n}( 0.1, 0.9 \,|\, \mathbf{x} ) $ for sample size $ 200 $ (`$ \circ $' sign) and $ n = 500 $ (`$ + $' sign) in the simulated data with bivariate response.}
\label{fig:skewtest90pcsim2d}
\end{figure}
\begin{figure}
\centering
\includegraphics[width=1\linewidth]{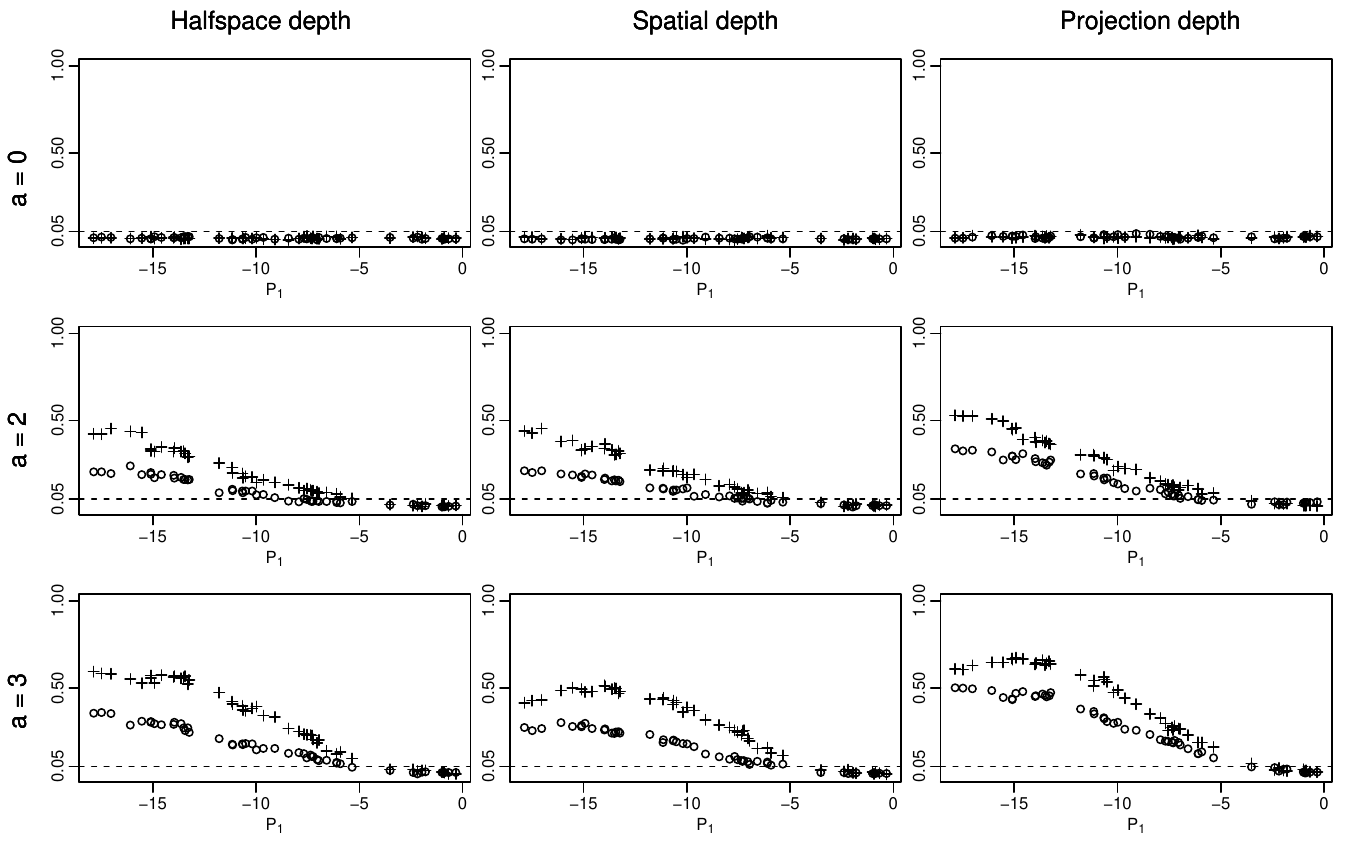}
\caption[Plots of the estimated power for the test of conditional skewness based on $ \Psi_{2,n}( 0.5 \,|\, \mathbf{x} ) $ in the simulated data with bivariate response]{Estimated powers at nominal level 5\% at different values of the parameter $ a $ and different depths for the test of conditional skewness based on $ \Psi_{2,n}( 0.5 \,|\, \mathbf{x} ) $ for sample size $ 200 $ (`$ \circ $' sign) and $ n = 500 $ (`$ + $' sign) in the simulated data with bivariate response.}
\label{fig:skewtestlocalsim2d}
\end{figure}
\begin{figure}
\centering
\includegraphics[width=1\linewidth]{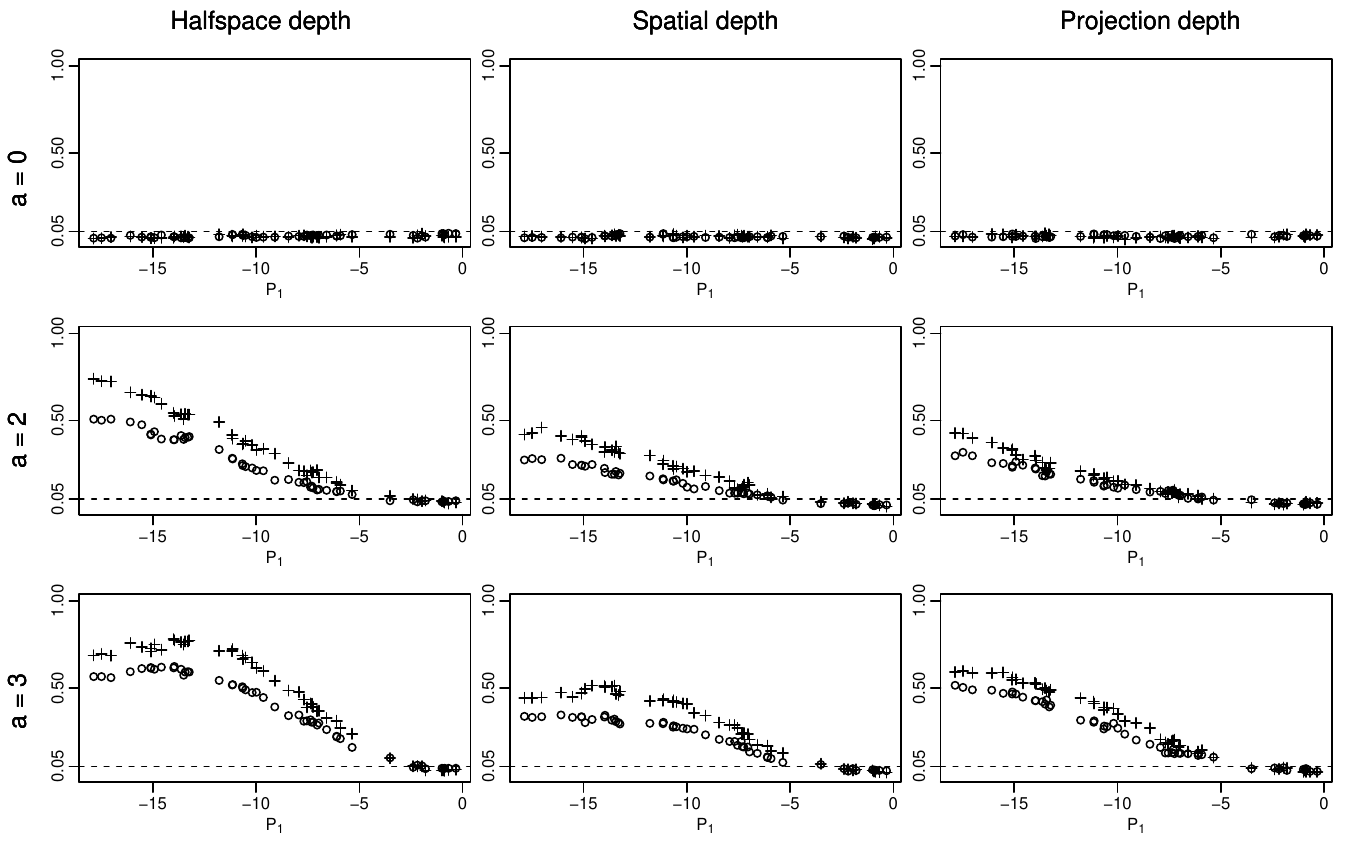}
\caption[Plots of the estimated power for the test of conditional skewness based on $ \Psi_{1,n}( 0.1, 0.9 \,|\, \mathbf{x} ) $ in the simulated data with trivariate response]{Estimated powers at nominal level 5\% at different values of the parameter $ a $ and different depths for the test of conditional skewness based on $ \Psi_{1,n}( 0.1, 0.9 \,|\, \mathbf{x} ) $ for sample size $ 200 $ (`$ \circ $' sign) and $ n = 500 $ (`$ + $' sign) in the simulated data with trivariate response.}
\label{fig:skewtest90pcsim3d}
\end{figure}
\begin{figure}
\centering
\includegraphics[width=1\linewidth]{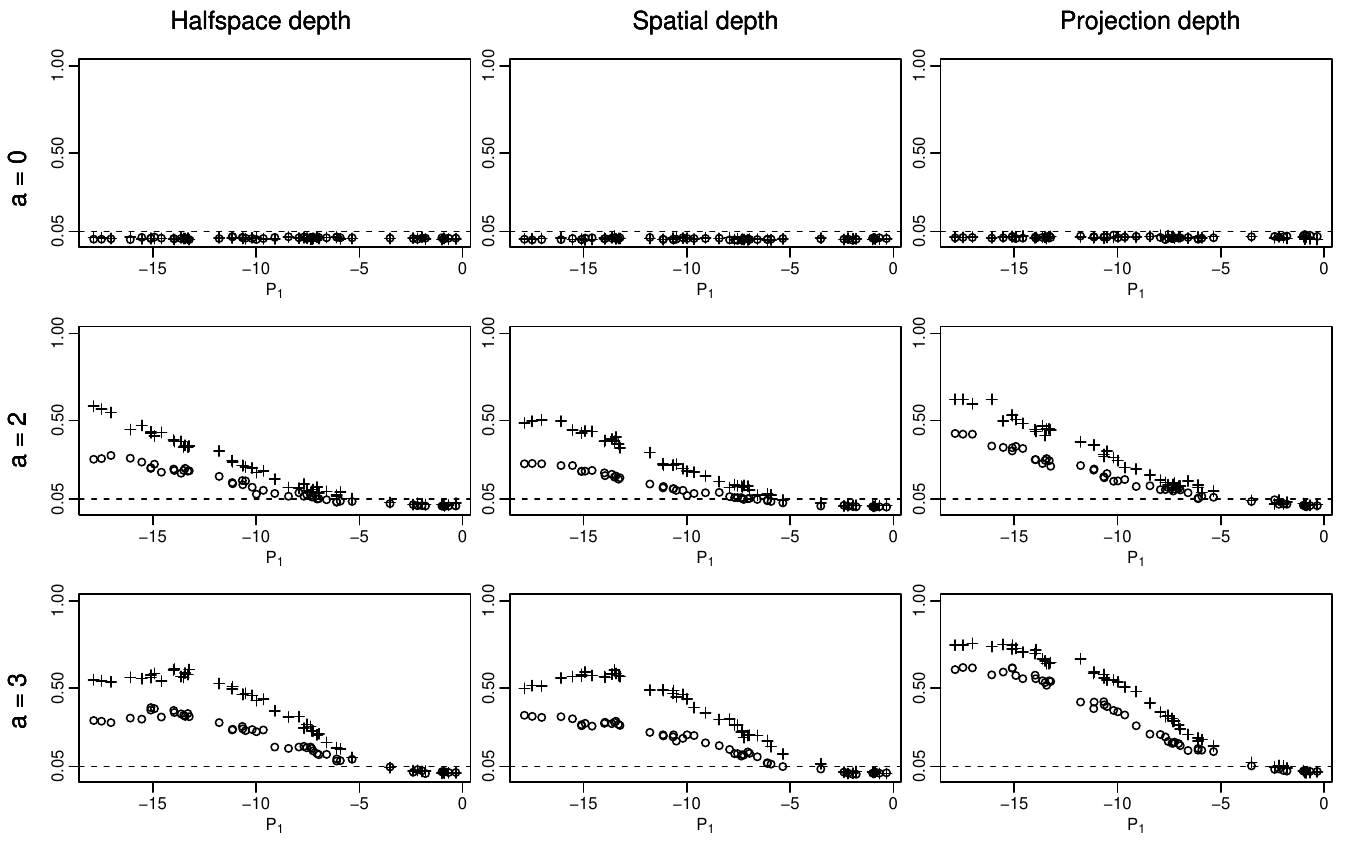}
\caption[Plots of the estimated power for the test of conditional skewness based on $ \Psi_{2,n}( 0.5 \,|\, \mathbf{x} ) $ in the simulated data with trivariate response]{Estimated powers at nominal level 5\% at different values of the parameter $ a $ and different depths for the test of conditional skewness based on $ \Psi_{2,n}( 0.5 \,|\, \mathbf{x} ) $ for sample size $ 200 $ (`$ \circ $' sign) and $ n = 500 $ (`$ + $' sign) in the simulated data with trivariate response.}
\label{fig:skewtestlocalsim3d}
\end{figure}

We plotted the estimated powers against the first principal component scores because the first principal component of the covariate captures the major portion of the variation present in the covariate distribution. Further, we did not find any clear pattern in the plots of the estimated power against the second principal component scores.

From \autoref{fig:skewtest90pcsim2d}, \autoref{fig:skewtestlocalsim2d}, \autoref{fig:skewtest90pcsim3d} and \autoref{fig:skewtestlocalsim3d}, we observe that for the conditional halfspace depth, the test based on $ \Psi_{1,n}( 0.1, 0.9 \,|\, \mathbf{x} ) $ exhibits relatively better performance than the test based on $ \Psi_{2,n}( 0.5 \,|\, \mathbf{x} ) $. However, the reverse is true for conditional projection depth. For conditional spatial depth, the two tests exhibit similar performance.
For both the test of heteroscedasticity and the test of conditional skewness, it appears that the estimated powers of the tests tend to one as the sample size grows. In our choice of the models, the levels of heteroscedasticity and conditional skewness increase with the parameter $ a $, which is also reflected in the growing powers of the tests as the value of $ a $ increases.

\section{Additional mathematical details} \label{supsec:2}
In this section, we provide the proofs of all the lemmas stated in \autoref{sec:8}.
\begin{proof}[Proof of \autoref{lemma1}]
For any $ \epsilon > 0 $,
\begin{align}
& \mathbb{P}\left[ \left| \mu_n( B \,|\, \mathbf{x} ) - \mu( B \,|\, \mathbf{x} ) \right| > \epsilon \middle\arrowvert \mathbf{X}_1, \mathbf{X}_2, \ldots \right]
\nonumber\\
& \le \mathbb{P}\left[ \left| \sum_{i=1}^{n} \left[ \mathbb{I}( \mathbf{Y}_i \in B ) - \mu( B \,|\, \mathbf{X}_i ) \right] W_{i,n}( \mathbf{x} ) \right| > \frac{\epsilon}{2} \middle\arrowvert \mathbf{X}_1, \mathbf{X}_2, \ldots \right] \nonumber\\
& + \mathbb{P}\left[ \left| \sum_{i=1}^{n} \left[ \mu( B \,|\, \mathbf{X}_i ) - \mu( B \,|\, \mathbf{x} ) \right] W_{i,n}( \mathbf{x} ) \right| > \frac{\epsilon}{2} \middle\arrowvert \mathbf{X}_1, \mathbf{X}_2, \ldots \right] .
\label{lemma1eq1}
\end{align}
Now, from the Bernstein inequality (see \citet[p.~95, Lemma A]{serfling2009approximation}), we have
\begin{align*}
& \mathbb{P}\left[ \left| \sum_{i=1}^{n} \left[ \mathbb{I}( \mathbf{Y}_i \in B ) - \mu( B \,|\, \mathbf{X}_i ) \right] W_{i,n}( \mathbf{x} ) \right| > \frac{\epsilon}{2} \middle\arrowvert \mathbf{X}_1, \mathbf{X}_2, \ldots \right]
\le \nonumber\\
& 2 \exp\left[ - \frac{\epsilon^2}{2 \sum_{i=1}^{n} W_{i,n}^2( \mathbf{x} ) + 2 \epsilon \max_{1 \le i \le n} W_{i,n}( \mathbf{x} )} \right] .
\end{align*}
From \ref{a1}, we get that there is an integer $ N_1 $ such that for all $ n \ge N_1 $,
\begin{align*}
& ( \log n ) \sum_{i=1}^{n} W_{i,n}^2( \mathbf{x} ) 
< \frac{\epsilon^2}{4 k} 
\quad \text{and} \quad
( \log n ) \max_{1 \le i \le n} W_{i,n}( \mathbf{x} ) 
< \frac{\epsilon}{4 k} 
\end{align*}
\emph{almost surely}.
$ N_1 $ depends on $ \mathbf{X}_1, \mathbf{X}_2, \ldots $, $ k $ and $ \epsilon $.
So,
\begin{align}
& \mathbb{P}\left[ \left| \sum_{i=1}^{n} \left[ \mathbb{I}( \mathbf{Y}_i \in B ) - \mu( B \,|\, \mathbf{X}_i ) \right] W_{i,n}( \mathbf{x} ) \right| > \frac{\epsilon}{2} \middle\arrowvert \mathbf{X}_1, \mathbf{X}_2, \ldots \right]
< 2 \exp\left[ - k \log n \right]
\label{lemma1eq2}
\end{align}
\emph{almost surely} for all sufficiently large $ n $.
On the other hand, for any $ \delta > 0 $, we have
\begin{align*}
& \mathbb{P}\left[ \left| \sum_{i=1}^{n} \left[ \mu( B \,|\, \mathbf{X}_i ) - \mu( B \,|\, \mathbf{x} ) \right] W_{i,n}( \mathbf{x} ) \right| > \frac{\epsilon}{2} \middle\arrowvert \mathbf{X}_1, \mathbf{X}_2, \ldots \right]
\\
& \le
\mathbb{P}\left[ \left| \sum_{i=1}^{n} \left[ \mu( B \,|\, \mathbf{X}_i ) - \mu( B \,|\, \mathbf{x} ) \right] \mathbb{I}( d( \mathbf{x}, \mathbf{X}_i ) \ge \delta ) W_{i,n}( \mathbf{x} ) \right| > \frac{\epsilon}{4} \middle\arrowvert \mathbf{X}_1, \mathbf{X}_2, \ldots \right] \\
& \quad
+ \mathbb{P}\left[ \left| \sum_{i=1}^{n} \left[ \mu( B \,|\, \mathbf{X}_i ) - \mu( B \,|\, \mathbf{x} ) \right] \mathbb{I}( d( \mathbf{x}, \mathbf{X}_i ) < \delta ) W_{i,n}( \mathbf{x} ) \right| > \frac{\epsilon}{4} \middle\arrowvert \mathbf{X}_1, \mathbf{X}_2, \ldots \right] .
\end{align*}
Since $ B $ is a $ \mu( \cdot \,|\, \mathbf{x} ) $-continuity set, from \ref{a3}, we can find $ \delta > 0 $ such that $ \left| \mu( B \,|\, \mathbf{z} ) - \mu( B \,|\, \mathbf{x} ) \right| < (\epsilon / 5) $ whenever $ d( \mathbf{x}, \mathbf{z} ) < \delta $. From \ref{a1}, we can find an integer $ N_2 $ such that for all $ n \ge N_2 $,
\begin{align*}
\sum_{i=1}^{n} W_{i,n}( \mathbf{x} ) \mathbb{I}( d( \mathbf{x}, \mathbf{X}_i ) \ge \delta )
< \frac{\epsilon}{5}
\end{align*}
\emph{almost surely}.
$ N_2 $ depends on $ \mathbf{X}_1, \mathbf{X}_2, \ldots $, $ \delta $ and $ \epsilon $.
Therefore, for all $ n \ge N_2 $,
\begin{align*}
& \mathbb{P}\left[ \left| \sum_{i=1}^{n} \left[ \mu( B \,|\, \mathbf{X}_i ) - \mu( B \,|\, \mathbf{x} ) \right] \mathbb{I}( d( \mathbf{x}, \mathbf{X}_i ) \ge \delta ) W_{i,n}( \mathbf{x} ) \right| > \frac{\epsilon}{4} \middle\arrowvert \mathbf{X}_1, \mathbf{X}_2, \ldots \right]
\\
& \le
\mathbb{P}\left[ \sum_{i=1}^{n} \mathbb{I}( d( \mathbf{x}, \mathbf{X}_i ) \ge \delta ) W_{i,n}( \mathbf{x} ) > \frac{\epsilon}{4} \middle\arrowvert \mathbf{X}_1, \mathbf{X}_2, \ldots \right]
= 0 
\quad
\text{\emph{almost surely}},
\\
\intertext{and}
& \mathbb{P}\left[ \left| \sum_{i=1}^{n} \left[ \mu( B \,|\, \mathbf{X}_i ) - \mu( B \,|\, \mathbf{x} ) \right] \mathbb{I}( d( \mathbf{x}, \mathbf{X}_i ) < \delta ) W_{i,n}( \mathbf{x} ) \right| > \frac{\epsilon}{4} \middle\arrowvert \mathbf{X}_1, \mathbf{X}_2, \ldots \right]
\\
& \le
\mathbb{P}\left[ \sum_{i=1}^{n} \left| \mu( B \,|\, \mathbf{X}_i ) - \mu( B \,|\, \mathbf{x} ) \right| \mathbb{I}( d( \mathbf{x}, \mathbf{X}_i ) < \delta ) W_{i,n}( \mathbf{x} ) > \frac{\epsilon}{4} \middle\arrowvert \mathbf{X}_1, \mathbf{X}_2, \ldots \right]
\\
& = 0
\quad\text{\emph{almost surely}} .
\end{align*}
Hence,
\begin{align}
\mathbb{P}\left[ \left| \sum_{i=1}^{n} \left[ \mu( B \,|\, \mathbf{X}_i ) - \mu( B \,|\, \mathbf{x} ) \right] W_{i,n}( \mathbf{x} ) \right| > \frac{\epsilon}{2} \middle\arrowvert \mathbf{X}_1, \mathbf{X}_2, \ldots \right]
= 0 
\label{lemma1eq3}
\end{align}
\emph{almost surely} for all sufficiently large $ n $.
Therefore, from \eqref{lemma1eq1}, \eqref{lemma1eq2} and \eqref{lemma1eq3}, we get
\begin{align}
\mathbb{P}\left[ \left| \mu_n( B \,|\, \mathbf{x} ) - \mu( B \,|\, \mathbf{x} ) \right| > \epsilon \middle\arrowvert \mathbf{X}_1, \mathbf{X}_2, \ldots \right]
<
2 n^{-k}
\label{lemma1eq4}
\end{align}
\emph{almost surely} for all sufficiently large $ n $.

Now, if $ \mathcal{B} $ is a VC class of $ \mu( \cdot \,|\, \mathbf{x} ) $-continuity sets, using \eqref{lemma1eq4} and some standard arguments involving the properties of VC-classes (see, e.g., \citet[p.~13--24]{pollard1984convergence}), we get
$ \mathbb{P}\left[ \sup\{ \left| \mu_n( B \,|\, \mathbf{x} ) - \mu( B \,|\, \mathbf{x} ) \right| \,|\, B \in \mathcal{B} \} > \epsilon \middle\arrowvert \mathbf{X}_1, \mathbf{X}_2, \ldots \right]
\stackrel{a.s.}{\longrightarrow} 0 $
as $ n \to \infty $.
\end{proof}

\begin{proof}[Proof of \autoref{lemma2}]
Under the condition in the lemma on the conditional density of $ \mathbf{Y} $ given $ \mathbf{X} = \mathbf{z} $, we get that there is a neighborhood of $ \mathbf{x} $ such that for $ \mathbf{z} $ lying in that neighborhood and for every $ \mathbf{u} $, $ \mathbf{u}^t \mathbf{Y} $ given $ \mathbf{X} = \mathbf{z} $ has a continuous strictly increasing conditional distribution function, which is also continuous as a function of $ \mathbf{u} $.
Consequently, $ m( \mathbf{u}^t \mathbf{Y} \,|\, \mathbf{x} ) $ and $ m( | \mathbf{u}^t \mathbf{Y} - m( \mathbf{u}^t \mathbf{Y} \,|\, \mathbf{x} ) | \,|\, \mathbf{x} ) $ are unique for every $ \mathbf{u} $ and continuous as functions of $ \mathbf{u} $.
Denote the conditional distribution function of $ \mathbf{u}^t \mathbf{Y} $ given $ \mathbf{X} = \mathbf{x} $ as $ F( \cdot \,|\, \mathbf{u}, \mathbf{x} ) $.
Let $ F_n( \cdot \,|\, \mathbf{u}, \mathbf{x} ) $ be the corresponding weighted empirical distribution function of $ \mathbf{u}^t \mathbf{Y} $, which is defined by
\begin{align*}
F_n( v \,|\, \mathbf{u}, \mathbf{x} )
= \sum_{i=1}^{n} \mathbb{I}( \mathbf{u}^t \mathbf{Y}_i \le v ) W_{i,n}( \mathbf{x} ) .
\end{align*}
Under the condition on the conditional density of $ \mathbf{Y} $ given $ \mathbf{X} = \mathbf{z} $, using \ref{a1} and \ref{a3} and arguments similar to those in the proof of \autoref{lemma1}, we get that
$ F_n( v \,|\, \mathbf{u}, \mathbf{x} ) \stackrel{a.s.}{\longrightarrow} F( v \,|\, \mathbf{u}, \mathbf{x} ) $ as $ n \to \infty $ for all $ v \in \mathbb{R} $ and $ \mathbf{u} \in \mathbb{R}^p $.
Recall that a halfspace in $ \mathbb{R}^p $ is defined as $ H( \mathbf{u}, v ) = \{ \mathbf{y} \in \mathbb{R}^p \,|\, \mathbf{u}^t \mathbf{y} \le v \} $, where $ \mathbf{u} \in \mathbb{R}^p $ and $ v \in \mathbb{R} $. Since the class of all halfspaces $ \{ H( \mathbf{u}, v ) \,|\, \mathbf{u} \in \mathbb{R}^p ,\; v \in \mathbb{R} \} $ forms a VC class (see \citet[p.~152]{van1996weak}), using some standard arguments involving the properties of VC-classes (see, e.g., \citet[p.~13--24]{pollard1984convergence}), we get that
\begin{align}
\sup_{ \mathbf{u} \in \mathbb{R}^p ,\; v \in \mathbb{R} } \left| F_n( v \,|\, \mathbf{u}, \mathbf{x} ) - F( v \,|\, \mathbf{u}, \mathbf{x} ) \right|
\stackrel{a.s.}{\longrightarrow}
0
\quad \text{as } n \to \infty .
\label{lemma2eq1}
\end{align}
Now, from \eqref{lemma2eq1} and \ref{a1}, we have
\begin{align}
& \sup_{ \| \mathbf{u} \| = 1 } \left| F( m_n( \mathbf{u}^t \mathbf{Y} \,|\, \mathbf{x} ) \,|\, \mathbf{u}, \mathbf{x} ) - F( m( \mathbf{u}^t \mathbf{Y} \,|\, \mathbf{x} ) \,|\, \mathbf{u}, \mathbf{x} ) \right|
\nonumber\\
& \le
\sup_{ \| \mathbf{u} \| = 1 } \left| F_n( m_n( \mathbf{u}^t \mathbf{Y} \,|\, \mathbf{x} ) \,|\, \mathbf{u}, \mathbf{x} ) - 0.5 \right| \nonumber\\
& \quad
+ \sup_{ \| \mathbf{u} \| = 1 } \left| F_n( m_n( \mathbf{u}^t \mathbf{Y} \,|\, \mathbf{x} ) \,|\, \mathbf{u}, \mathbf{x} ) - F( m_n( \mathbf{u}^t \mathbf{Y} \,|\, \mathbf{x} ) \,|\, \mathbf{u}, \mathbf{x} ) \right|
\nonumber\\
& \le
\max_{1 \le i \le n} W_{i,n}( \mathbf{x} ) 
+ \sup_{ \mathbf{u} \in \mathbb{R}^p ,\; v \in \mathbb{R} } \left| F_n( v \,|\, \mathbf{u}, \mathbf{x} ) - F( v \,|\, \mathbf{u}, \mathbf{x} ) \right|
\stackrel{a.s.}{\longrightarrow} 0
\label{lemma2eq2}
\end{align}
as $ n \to \infty $.
Since $ F( \cdot \,|\, \mathbf{u}, \mathbf{x} ) $ is a continuous and strictly increasing distribution function, its inverse $ F^{-1}( \cdot \,|\, \mathbf{u}, \mathbf{x} ) $ is well-defined. Further, under the condition in the lemma on the conditional density of $ \mathbf{Y} $ given $ \mathbf{X} = \mathbf{z} $, $ F^{-1}( \cdot \,|\, \mathbf{u}, \mathbf{x} ) $ is also continuous in $ \mathbf{u} $.
Since $ \{ \mathbf{u} \in \mathbb{R}^p \,|\, \| \mathbf{u} \| = 1 \} $ is compact, $ F^{-1}( \cdot \,|\, \mathbf{u}, \mathbf{x} ) $ is uniformly continuous in $ \mathbf{u} \in \{ \mathbf{u} \in \mathbb{R}^p \,|\, \| \mathbf{u} \| = 1 \} $. Hence, from \eqref{lemma2eq2}, we get
\begin{align}
& \sup_{ \| \mathbf{u} \| = 1 } \left| m_n( \mathbf{u}^t \mathbf{Y} \,|\, \mathbf{x} ) - m( \mathbf{u}^t \mathbf{Y} \,|\, \mathbf{x} ) \right|
\nonumber\\
& = \sup_{ \| \mathbf{u} \| = 1 } \left| F^{-1}( F( m_n( \mathbf{u}^t \mathbf{Y} \,|\, \mathbf{x} ) \,|\, \mathbf{u}, \mathbf{x} ) \,|\, \mathbf{u}, \mathbf{x} ) - F^{-1}( F( m( \mathbf{u}^t \mathbf{Y} \,|\, \mathbf{x} ) \,|\, \mathbf{u}, \mathbf{x} ) \,|\, \mathbf{u}, \mathbf{x} ) \right|
\nonumber\\
& \stackrel{a.s.}{\longrightarrow} 0
\quad \text{as } n \to \infty .
\label{lemma2eq3}
\end{align}
One can similarly show that
\begin{align}
\sup_{ \| \mathbf{u} \| = 1 } \left| m_n( | \mathbf{u}^t \mathbf{Y} - m_n( \mathbf{u}^t \mathbf{Y} \,|\, \mathbf{x} ) | \,|\, \mathbf{x} ) - m( | \mathbf{u}^t \mathbf{Y} - m( \mathbf{u}^t \mathbf{Y} \,|\, \mathbf{x} ) | \,|\, \mathbf{x} ) \right|
\stackrel{a.s.}{\longrightarrow} 0
\label{lemma2eq4}
\end{align}
as $ n \to \infty $.
Now, from \eqref{lemma2eq3} and \eqref{lemma2eq4}, we get that for all sufficiently large $ n $,
\begin{align*}
& \sup_{\| \mathbf{u} \| = 1} \frac{| \mathbf{u}^t \mathbf{y} - m_n( \mathbf{u}^t \mathbf{Y} \,|\, \mathbf{x} ) |}{m_n( | \mathbf{u}^t \mathbf{Y} - m_n( \mathbf{u}^t \mathbf{Y} \,|\, \mathbf{x} ) | \,|\, \mathbf{x} )} \\
& \ge \sup_{\| \mathbf{u} \| = 1} \frac{| | \mathbf{u}^t \mathbf{y} | - m_n( \mathbf{u}^t \mathbf{Y} \,|\, \mathbf{x} ) |}{m_n( | \mathbf{u}^t \mathbf{Y} - m_n( \mathbf{u}^t \mathbf{Y} \,|\, \mathbf{x} ) | \,|\, \mathbf{x} )} \\
& \ge \frac{\| \mathbf{y} \| - \sup_{\| \mathbf{u} \| = 1} m_n( \mathbf{u}^t \mathbf{Y} \,|\, \mathbf{x} )}{\sup_{\| \mathbf{u} \| = 1} m_n( | \mathbf{u}^t \mathbf{Y} - m_n( \mathbf{u}^t \mathbf{Y} \,|\, \mathbf{x} ) | \,|\, \mathbf{x} )}
\stackrel{a.s.}{\longrightarrow} \infty \text{ as } \| \mathbf{y} \| \to \infty .
\end{align*}
So, for all sufficiently large $ n $, $ \sup_{\| \mathbf{y} \| > C} \rho_n( \mathbf{y} \,|\, \mathbf{x} ) \stackrel{a.s.}{\longrightarrow} 0 $ as $ C \to \infty $, and $ \sup_{\| \mathbf{y} \| > C} \rho( \mathbf{y} \,|\, \mathbf{x} ) \allowbreak \to 0 $ as $ C \to \infty $.

Next, we have
\begin{align*}
& \sup_{\| \mathbf{y} \| \le C} \left| \sup_{\| \mathbf{u} \| = 1} \frac{| \mathbf{u}^t \mathbf{y} - m_n( \mathbf{u}^t \mathbf{Y} \,|\, \mathbf{x} ) |}{m_n( | \mathbf{u}^t \mathbf{Y} - m_n( \mathbf{u}^t \mathbf{Y} \,|\, \mathbf{x} ) | \,|\, \mathbf{x} )} 
- \sup_{\| \mathbf{u} \| = 1} \frac{| \mathbf{u}^t \mathbf{y} - m( \mathbf{u}^t \mathbf{Y} \,|\, \mathbf{x} ) |}{m( | \mathbf{u}^t \mathbf{Y} - m( \mathbf{u}^t \mathbf{Y} \,|\, \mathbf{x} ) | \,|\, \mathbf{x} )} \right| \\
& \le \sup_{\| \mathbf{y} \| \le C} \sup_{\| \mathbf{u} \| = 1} \left| \frac{\mathbf{u}^t \mathbf{y}}{m_n( | \mathbf{u}^t \mathbf{Y} - m_n( \mathbf{u}^t \mathbf{Y} \,|\, \mathbf{x} ) | \,|\, \mathbf{x} )} 
- \frac{\mathbf{u}^t \mathbf{y}}{m( | \mathbf{u}^t \mathbf{Y} - m( \mathbf{u}^t \mathbf{Y} \,|\, \mathbf{x} ) | \,|\, \mathbf{x} )} \right| \\
& \quad + \sup_{\| \mathbf{y} \| \le C} \sup_{\| \mathbf{u} \| = 1} \left| \frac{m_n( \mathbf{u}^t \mathbf{Y} \,|\, \mathbf{x} )}{m_n( | \mathbf{u}^t \mathbf{Y} - m_n( \mathbf{u}^t \mathbf{Y} \,|\, \mathbf{x} ) | \,|\, \mathbf{x} )} 
- \frac{m( \mathbf{u}^t \mathbf{Y} \,|\, \mathbf{x} )}{m( | \mathbf{u}^t \mathbf{Y} - m( \mathbf{u}^t \mathbf{Y} \,|\, \mathbf{x} ) | \,|\, \mathbf{x} )} \right| .
\end{align*}
Since $ m_n( \mathbf{u}^t \mathbf{Y} \,|\, \mathbf{x} ) \stackrel{a.s.}{\longrightarrow} m( \mathbf{u}^t \mathbf{Y} \,|\, \mathbf{x} ) $ and $ m_n( | \mathbf{u}^t \mathbf{Y} - m_n( \mathbf{u}^t \mathbf{Y} \,|\, \mathbf{x} ) | \,|\, \mathbf{x} ) \stackrel{a.s.}{\longrightarrow} m( | \mathbf{u}^t \mathbf{Y} - m( \mathbf{u}^t \mathbf{Y} \,|\, \mathbf{x} ) | \,|\, \mathbf{x} ) $ as $ n \to \infty $ uniformly in $ \{ \mathbf{u} \in \mathbb{R}^p \,|\, \| \mathbf{u} \| = 1 \} $, it follows that
\begin{align*}
& \sup_{\| \mathbf{y} \| \le C} \sup_{\| \mathbf{u} \| = 1} \left| \frac{\mathbf{u}^t \mathbf{y}}{m_n( | \mathbf{u}^t \mathbf{Y} - m_n( \mathbf{u}^t \mathbf{Y} \,|\, \mathbf{x} ) | \,|\, \mathbf{x} )} 
- \frac{\mathbf{u}^t \mathbf{y}}{m( | \mathbf{u}^t \mathbf{Y} - m( \mathbf{u}^t \mathbf{Y} \,|\, \mathbf{x} ) | \,|\, \mathbf{x} )} \right| \stackrel{a.s.}{\longrightarrow} 0 , \\
& \sup_{\| \mathbf{y} \| \le C} \sup_{\| \mathbf{u} \| = 1} \left| \frac{m_n( \mathbf{u}^t \mathbf{Y} \,|\, \mathbf{x} )}{m_n( | \mathbf{u}^t \mathbf{Y} - m_n( \mathbf{u}^t \mathbf{Y} \,|\, \mathbf{x} ) | \,|\, \mathbf{x} )} 
- \frac{m( \mathbf{u}^t \mathbf{Y} \,|\, \mathbf{x} )}{m( | \mathbf{u}^t \mathbf{Y} - m( \mathbf{u}^t \mathbf{Y} \,|\, \mathbf{x} ) | \,|\, \mathbf{x} )} \right| \stackrel{a.s.}{\longrightarrow} 0
\end{align*}
as $ n \to \infty $.
Hence,
\begin{align*}
& \sup_{\| \mathbf{y} \| \le C} \left| \sup_{\| \mathbf{u} \| = 1} \frac{| \mathbf{u}^t \mathbf{y} - m_n( \mathbf{u}^t \mathbf{Y} \,|\, \mathbf{x} ) |}{m_n( | \mathbf{u}^t \mathbf{Y} - m_n( \mathbf{u}^t \mathbf{Y} \,|\, \mathbf{x} ) | \,|\, \mathbf{x} )} 
- \sup_{\| \mathbf{u} \| = 1} \frac{| \mathbf{u}^t \mathbf{y} - m( \mathbf{u}^t \mathbf{Y} \,|\, \mathbf{x} ) |}{m( | \mathbf{u}^t \mathbf{Y} - m( \mathbf{u}^t \mathbf{Y} \,|\, \mathbf{x} ) | \,|\, \mathbf{x} )} \right| 
\stackrel{a.s.}{\longrightarrow} 0
\end{align*}
as $ n \to \infty $.
\end{proof}

\begin{proof}[Proof of \autoref{lemma3}]
If $ \alpha + \delta > \max_\mathbf{y} \rho( \mathbf{y} \,|\, \mathbf{x} ) $, then $ D( \alpha + \delta \,|\, \mathbf{x} ) = \emptyset \subset D_n( \alpha_n \,|\, \mathbf{x} ) $ with probability 1. On the other hand, if $ \alpha - \delta < \min_\mathbf{y} \rho( \mathbf{y} \,|\, \mathbf{x} ) $, then $ D( \alpha - \delta \,|\, \mathbf{x} ) $ is the entire response space, and as a result $ D_n( \alpha_n \,|\, \mathbf{x} ) \subset D( \alpha - \delta \,|\, \mathbf{x} ) $ with probability 1. So we assume
\begin{align}
\delta \le \min\left\{ \max_\mathbf{y} \rho( \mathbf{y} \,|\, \mathbf{x} ) - \alpha, \, \alpha - \min_\mathbf{y} \rho( \mathbf{y} \,|\, \mathbf{x} ) \right\} .
\label{lemma3eq1}
\end{align}
Given $ \epsilon > 0 $ and $ \delta > 0 $, we take
\begin{align*}
\epsilon_1 = \epsilon_2 = \frac{\epsilon}{4} ,\; \delta_1 = \delta_2 = \frac{\delta}{4} .
\end{align*}
There exists $ N_1( \epsilon_1, \delta_1 ) $ such that for all $ n \ge N_1( \epsilon_1, \delta_1 ) $,
\begin{align}
\mathbb{P}[ \alpha - \delta_1 < \alpha_n < \alpha + \delta_1 ] > 1 - \epsilon_1 .
\label{lemma3eq2}
\end{align}
Similarly, under \eqref{eq2}, there exists $ N_2( \epsilon_2, \delta_2 ) $ such that for all $ n \ge N_2( \epsilon_2, \delta_2 ) $,
\begin{align}
\mathbb{P}[ \rho( \mathbf{y} \,|\, \mathbf{x} ) - \delta_2 < \rho_n( \mathbf{y} \,|\, \mathbf{x} ) < \rho( \mathbf{y} \,|\, \mathbf{x} ) + \delta_2 \text{ for all } \mathbf{y} ] > 1 - \epsilon_2 .
\label{lemma3eq3}
\end{align}

Using \eqref{lemma3eq1}, \eqref{lemma3eq2}, and \eqref{lemma3eq3}, we have for all $ n \ge \max\{ N_1( \epsilon_1, \delta_1 ), \allowbreak N_2( \epsilon_2, \delta_2 ) \} $,
\begin{align}
& \mathbb{P}\left[ \alpha - \delta_1 < \alpha_n < \alpha + \delta_1 \text{ and }
\rho( \mathbf{y} \,|\, \mathbf{x} ) - \delta_2 < \rho_n( \mathbf{y} \,|\, \mathbf{x} ) < \rho( \mathbf{y} \,|\, \mathbf{x} ) + \delta_2 \text{ for all } \mathbf{y} \right] \nonumber\\
& \quad
> 1 - ( \epsilon_1 + \epsilon_2 ) 
\nonumber\\
& \Rightarrow
\mathbb{P}\left[ \alpha_n - \delta_1 < \alpha \text{ and }
\alpha + \delta - \delta_2 \le \rho( \mathbf{y} \,|\, \mathbf{x} ) - \delta_2 < \rho_n( \mathbf{y} \,|\, \mathbf{x} ) \text{ for all } \mathbf{y} \text{ such that } \right. \nonumber\\
& \qquad\quad\left. \rho( \mathbf{y} \,|\, \mathbf{x} ) \ge \alpha + \delta \right] \nonumber\\
& \quad > 1 - ( \epsilon_1 + \epsilon_2 ) 
\nonumber\\
& \Rightarrow
\mathbb{P}\left[ D( \alpha + \delta \,|\, \mathbf{x} ) \subseteq D_n( \alpha_n + \delta - (\delta_1 + \delta_2) \,|\, \mathbf{x} ) \right]
> 1 - ( \epsilon_1 + \epsilon_2 ) .
\label{lemma3eq4}
\end{align}
Also, again using \eqref{lemma3eq1}, \eqref{lemma3eq2}, and \eqref{lemma3eq3}, we have for all $ n \ge \max\{ N_1( \epsilon_1, \delta_1 ),\allowbreak N_2( \epsilon_2, \delta_2 ) \} $,
\begin{align}
& \mathbb{P}\left[ \alpha - \delta_1 < \alpha_n < \alpha + \delta_1 \text{ and }
\rho( \mathbf{y} \,|\, \mathbf{x} ) - \delta_2 < \rho_n( \mathbf{y} \,|\, \mathbf{x} ) < \rho( \mathbf{y} \,|\, \mathbf{x} ) + \delta_2 \text{ for all } \mathbf{y} \right] \nonumber\\
& \quad
> 1 - ( \epsilon_1 + \epsilon_2 ) 
\nonumber\\
& \Rightarrow
\mathbb{P}\left[ \alpha < \alpha_n + \delta_1 \text{ and }
\rho_n( \mathbf{y} \,|\, \mathbf{x} ) < \rho( \mathbf{y} \,|\, \mathbf{x} ) + \delta_2 < \alpha - \delta + \delta_2 \text{ for all } \mathbf{y} \text{ such that } \right. \nonumber\\
& \qquad\quad\left. \rho( \mathbf{y} \,|\, \mathbf{x} ) < \alpha - \delta \right] \nonumber\\
& \quad
> 1 - ( \epsilon_1 + \epsilon_2 ) 
\nonumber\\
& \Rightarrow
\mathbb{P}\left[ \rho_n( \mathbf{y} \,|\, \mathbf{x} ) < \alpha_n + (\delta_1 + \delta_2) - \delta \text{ for all } \mathbf{y} \text{ such that } \rho( \mathbf{y} \,|\, \mathbf{x} ) < \alpha - \delta \right] \nonumber\\
& \qquad
> 1 - ( \epsilon_1 + \epsilon_2 ) 
\nonumber\\
& \Leftrightarrow
\mathbb{P}\left[ D_n( \alpha_n - ( \delta - (\delta_1 + \delta_2) ) \,|\, \mathbf{x} ) \subseteq D( \alpha - \delta \,|\, \mathbf{x} ) \right] 
> 1 - ( \epsilon_1 + \epsilon_2 ) .
\label{lemma3eq5}
\end{align}
So, from \eqref{lemma3eq4} and \eqref{lemma3eq5} we get that, for all $ n \ge \max\{ N_1( \epsilon_1, \delta_1 ), N_2( \epsilon_2, \delta_2 ) \} $,
\begin{align*}
& \mathbb{P}\left[ D( \alpha + \delta \,|\, \mathbf{x} ) \subseteq D_n( \alpha_n \,|\, \mathbf{x} ) \subseteq D( \alpha - \delta \,|\, \mathbf{x} ) \right]
\nonumber\\
& \ge
\mathbb{P}\left[ D( \alpha + \delta \,|\, \mathbf{x} ) \subseteq D_n( \alpha_n + ( \delta - (\delta_1 + \delta_2) ) \,|\, \mathbf{x} ) \; \text{and} \right. \nonumber\\
& \qquad\quad\left. \; D_n( \alpha_n - ( \delta - (\delta_1 + \delta_2) ) \,|\, \mathbf{x} ) \subseteq D( \alpha - \delta \,|\, \mathbf{x} ) \right]
\nonumber\\
& > 1 - 2 ( \epsilon_1 + \epsilon_2 ) = 1 - \epsilon .
\end{align*}
\end{proof}

\begin{proof}[Proof of \autoref{lemma4}]
Note that
\begin{align}
& | \mu_n( D_n( \beta \,|\, \mathbf{x} ) \,|\, \mathbf{x} ) - \mu( D( \beta \,|\, \mathbf{x} ) \,|\, \mathbf{x} ) | \nonumber\\
& \le
| \mu_n( D_n( \beta \,|\, \mathbf{x} ) \,|\, \mathbf{x} ) - \mu_n( D( \beta \,|\, \mathbf{x} ) \,|\, \mathbf{x} ) | \nonumber\\
& \quad
+ | \mu_n( D( \beta \,|\, \mathbf{x} ) \,|\, \mathbf{x} ) - \mu( D( \beta \,|\, \mathbf{x} ) \,|\, \mathbf{x} ) | . 
\label{lemma4eq1}
\end{align}
From \ref{a4} and \eqref{eq1}, we get that
\begin{align}
| \mu_n( D( \beta \,|\, \mathbf{x} ) \,|\, \mathbf{x} ) - \mu( D( \beta \,|\, \mathbf{x} ) \,|\, \mathbf{x} ) | \stackrel{P}{\longrightarrow} 0
\quad
\text{as } n \to \infty .
\label{lemma4eq2}
\end{align}
We shall show that $ | \mu_n( D_n( \beta \,|\, \mathbf{x} ) \,|\, \mathbf{x} ) - \mu_n( D( \beta \,|\, \mathbf{x} ) \,|\, \mathbf{x} ) | \stackrel{P}{\longrightarrow} 0 $ as $ n \to \infty $.

Given any $ \epsilon , \delta > 0 $, take $ \epsilon_1 = \epsilon / 2 $ and $ \delta_1 = \delta / 2 $.
Since $ \mu( \{ \mathbf{y} \,|\, \rho( \mathbf{y} \,|\, \mathbf{x} ) = \beta \} \,|\, \mathbf{x} ) = 0 $ under \ref{a4}, we can find $ \delta_2 > 0 $ such that
\begin{align}
\mu( \{ \mathbf{y} \,|\, \beta - \delta_2 \le \rho( \mathbf{y} \,|\, \mathbf{x} ) < \beta + \delta_2 \} \,|\, \mathbf{x} ) < \delta_1 .
\label{lemma4eq3}
\end{align}
Denote $ B( \beta, \delta_2 ) = \{ \mathbf{y} \,|\, \beta - \delta_2 \le \rho( \mathbf{y} \,|\, \mathbf{x} ) < \beta + \delta_2 \} $. Under \eqref{eq1} and from \eqref{lemma4eq3}, we get that there exists $ N_1( \delta_1, \epsilon_1 ) $ such that for all $ n \ge N_1( \delta_1, \epsilon_1 ) $,
\begin{align}
& \mathbb{P}[ \mu_n( B( \beta, \delta_2 ) \,|\, \mathbf{x} ) < 2 \delta_1 ] \nonumber\\
& \ge \mathbb{P}[ | \mu_n( B( \beta, \delta_2 ) \,|\, \mathbf{x} ) - \mu( B( \beta, \delta_2 ) \,|\, \mathbf{x} ) | < \delta_1 ] 
> 1 - \epsilon_1 .
\label{lemma4eq4}
\end{align}
Denote the events
\begin{align*}
& A_n^{(1)}( \beta, \delta_1, \delta_2 )
= \{ \mu_n( B( \beta, \delta_2 ) \,|\, \mathbf{x} ) < 2 \delta_1 \} ,
\\
& A_n^{(2)}( \beta, \delta_2 ) = \{ D( \beta + \delta_2 \,|\, \mathbf{x} ) \subseteq D_n( \beta \,|\, \mathbf{x} ) \subseteq D( \beta - \delta_2 \,|\, \mathbf{x} ) \} .
\end{align*}
From \autoref{lemma3}, we get that there exists $ N_2( \epsilon_1, \delta_2 ) $ such that for all $ n \ge N_2( \epsilon_1, \delta_2 ) $,
\begin{align}
\mathbb{P}[ A_n^{(2)}( \beta, \delta_2 ) ] > 1 - \epsilon_1 .
\label{lemma4eq5}
\end{align}
Since $ D( \beta + \delta_1 \,|\, \mathbf{x} ) \subseteq D( \beta \,|\, \mathbf{x} ) \subseteq D( \beta - \delta_1 \,|\, \mathbf{x} ) $, when $ A_n^{(2)}( \beta, \delta_2 ) $ occurs, we have
\begin{align}
| \mu_n( D_n( \beta \,|\, \mathbf{x} ) \,|\, \mathbf{x} ) - \mu_n( D( \beta \,|\, \mathbf{x} ) \,|\, \mathbf{x} ) |
& \le \mu_n( B( \beta, \delta_2 ) \,|\, \mathbf{x} ) .
\label{lemma4eq6}
\end{align}
So, from \eqref{lemma4eq4}, \eqref{lemma4eq5} and \eqref{lemma4eq6}, we get that for all $ n \ge \max\{ N_1( \delta_1, \epsilon_1 ), \allowbreak N_2( \epsilon_1, \delta_2 ) \} $,
\begin{align}
& \mathbb{P}[ | \mu_n( D_n( \beta \,|\, \mathbf{x} ) \,|\, \mathbf{x} ) - \mu_n( D( \beta \,|\, \mathbf{x} ) \,|\, \mathbf{x} ) | < \delta ]
\nonumber\\
& \ge
\mathbb{P}[ | \mu_n( D_n( \beta \,|\, \mathbf{x} ) \,|\, \mathbf{x} ) - \mu_n( D( \beta \,|\, \mathbf{x} ) \,|\, \mathbf{x} ) | \le \mu_n( B( \beta, \delta_2 ) \,|\, \mathbf{x} )
\text{ and } \nonumber\\
& \qquad\quad \mu_n( B( \beta, \delta_2 ) \,|\, \mathbf{x} ) < 2 \delta_1 ]
\nonumber\\
& \ge
\mathbb{P}\left[ A_n^{(2)}( \beta, \delta_2 ) \cap A_n^{(1)}( \beta, \delta_1, \delta_2 ) \right]
> 1 - \epsilon .
\label{lemma4eq7}
\end{align}
The proof follows from \eqref{lemma4eq1}, \eqref{lemma4eq2} and \eqref{lemma4eq7}.
\end{proof}

\begin{proof}[Proof of \autoref{lemma5}]
Given $ \epsilon > 0 $ and $ \delta > 0 $, denote
\begin{align*}
& \delta_1 = \frac{r - \mu( D( \alpha( r ) + \delta \,|\, \mathbf{x} ) \,|\, \mathbf{x} )}{2} ,
\quad \delta_2 = \frac{\mu( D( \alpha( r ) - \delta \,|\, \mathbf{x} ) \,|\, \mathbf{x} ) - r}{2} .
\end{align*}
By definition of $ \alpha( r ) $, $ \delta_1 > 0 $. From \ref{a5}, we have $ \mu( \{ \mathbf{y} \,|\, \alpha( r ) - \delta \le \rho( \mathbf{y} \,|\, \mathbf{x} ) < \alpha( r ) \} \,|\, \mathbf{x} ) > 0 $, which ensures $ \delta_2 > 0 $. Define the events
\begin{align*}
& A_n( \delta ) = \{ \alpha_n( r ) > \alpha( r ) + \delta \} , \\
& B_n( \delta ) = \{ \alpha_n( r ) < \alpha( r ) - \delta \} , \\
& C_n^{(1)}( \delta_1 ) = \{ | \mu_n( D_n( \alpha( r ) + \delta \,|\, \mathbf{x} ) \,|\, \mathbf{x} ) - \mu( D( \alpha( r ) + \delta \,|\, \mathbf{x} ) \,|\, \mathbf{x} ) | < \delta_1 \} , \\
& C_n^{(2)}( \delta_2 ) = \{ | \mu_n( D_n( \alpha( r ) - \delta \,|\, \mathbf{x} ) \,|\, \mathbf{x} ) - \mu( D( \alpha( r ) - \delta \,|\, \mathbf{x} ) \,|\, \mathbf{x} ) | < \delta_2 \} .
\end{align*}
From \autoref{lemma4}, we get that there exists $ N( \epsilon, \delta_1, \delta_2 ) $ such that for all $ n \ge N( \epsilon, \delta_1, \delta_2 ) $,
\begin{align}
& \mathbb{P}\left[ C_n^{(1)}( \delta_1 ) \right] > 1 - (\epsilon / 2) 
\, \text{ and } \,
\mathbb{P}\left[ C_n^{(2)}( \delta_2 ) \right] > 1 - (\epsilon / 2) .
\label{lemma5eq1}
\end{align}
Note that,
\begin{align*}
\text{when } A_n( \delta ) \text{ occurs, }
& \mu_n( D_n( \alpha( r ) + \delta \,|\, \mathbf{x} ) \,|\, \mathbf{x} )
\ge r , 
\\
\text{when } B_n( \delta ) \text{ occurs, }
& \mu_n( D_n( \alpha( r ) - \delta \,|\, \mathbf{x} ) \,|\, \mathbf{x} )
< r , 
\\
\text{when } C_n^{(1)}( \delta_1 ) \text{ occurs, }
& \mu_n( D_n( \alpha( r ) + \delta \,|\, \mathbf{x} ) \,|\, \mathbf{x} )
< \mu( D( \alpha( r ) + \delta \,|\, \mathbf{x} ) \,|\, \mathbf{x} ) + \delta_1
< r , 
\\
\text{when } C_n^{(2)}( \delta_2 ) \text{ occurs, }
& \mu_n( D_n( \alpha( r ) - \delta \,|\, \mathbf{x} ) \,|\, \mathbf{x} )
> \mu( D( \alpha( r ) - \delta \,|\, \mathbf{x} ) \,|\, \mathbf{x} ) - \delta_2
> r .
\end{align*}
So, $ A_n( \delta ) \cap C_n^{(1)}( \delta_1 ) = \emptyset $ and $ B_n( \delta ) \cap C_n^{(2)}( \delta_2 ) = \emptyset $. Consequently, from \eqref{lemma5eq1}, it follows that for all $ n \ge N( \epsilon, \delta_1, \delta_2 ) $,
\begin{align*}
& \mathbb{P}[ | \alpha_n( r ) - \alpha( r ) | > \delta ]
= \mathbb{P}\left[ A_n( \delta ) \cup B_n( \delta ) \right]
\le 1 - \mathbb{P}\left[ C_n^{(1)}( \delta_1 ) \cap C_n^{(2)}( \delta_2 ) \right]
\le \epsilon ,
\end{align*}
which completes the proof.
\end{proof}

\begin{proof}[Proof of \autoref{lemma6}]
Recall that for any pair of sets $ A $ and $ B $,
\begin{align*}
d_H( A, B ) = \max\left\{ \sup_{\mathbf{u} \in A} \inf_{\mathbf{v} \in B} \| \mathbf{u} - \mathbf{v} \|, \, \sup_{\mathbf{v} \in B} \inf_{\mathbf{u} \in A} \| \mathbf{u} - \mathbf{v} \| \right\} .
\end{align*}
Since $ D( \alpha + \delta \,|\, \mathbf{x} ) \subseteq D( \alpha \,|\, \mathbf{x} ) \subseteq D( \alpha - \delta \,|\, \mathbf{x} ) $ for any $ \delta > 0 $, we have
\begin{align}
& d_H( D( \alpha + \delta \,|\, \mathbf{x} ), D( \alpha - \delta \,|\, \mathbf{x} ) ) \nonumber\\
& \le
d_H( D( \alpha \,|\, \mathbf{x} ) , D( \alpha - \delta \,|\, \mathbf{x} ) )
+ d_H( D( \alpha + \delta \,|\, \mathbf{x} ) , D( \alpha \,|\, \mathbf{x} ) ) .
\label{lemma6eq1}
\end{align}
Denote $ d_0( \mathbf{u}, A ) = \inf\{ \| \mathbf{u} - \mathbf{v} \| \,|\, \mathbf{v} \in A \} $ for a point $ \mathbf{u} $ and a set $ A $. So, $ d_H( D( \alpha \,|\, \mathbf{x} ) , D( \alpha - \delta \,|\, \mathbf{x} ) ) = \sup\{ d_0( \mathbf{y}, D( \alpha \,|\, \mathbf{x} ) ) \,|\, \mathbf{y} \in D( \alpha - \delta \,|\, \mathbf{x} ) \} $. Suppose, if possible, $ \lim_{\delta \to 0^+} d_H( D( \alpha \,|\, \mathbf{x} ) , \allowbreak D( \alpha - \delta \,|\, \mathbf{x} ) ) > 0 $. Then, there exists a constant $ \delta_1 > 0 $ and a sequence $ \{ \mathbf{y}_n \} $ such that $ \rho( \mathbf{y}_n \,|\, \mathbf{x} ) \ge \alpha( 1 - 0.5 (1 / n) ) $ and $ d_0( \mathbf{y}_n, D( \alpha \,|\, \mathbf{x} ) ) \ge \delta_1 $ for all $ n $. Under \ref{a6}, $ D( ( \alpha / 2 ) \,|\, \mathbf{x} ) $ is bounded, and so $ \{ \mathbf{y}_n \} $ is a bounded sequence in $ \mathbb{R}^p $, the response space. Consequently, $ \{ \mathbf{y}_n \} $ has convergent subsequence $ \{ \mathbf{y}_{n_k} \} $, with, say, $ \mathbf{y}_{n_k} \to \mathbf{y}_0 $ as $ k \to \infty $. From \ref{a6}, we have $ \rho( \mathbf{y}_0 \,|\, \mathbf{x} ) = \lim_k \rho( \mathbf{y}_{n_k} \,|\, \mathbf{x} ) \ge \alpha $ and $ d_0( \mathbf{y}_0, D( \alpha \,|\, \mathbf{x} ) ) = \lim_k d_0( \mathbf{y}_{n_k}, D( \alpha \,|\, \mathbf{x} ) ) \ge \delta_1 > 0 $, which contradict themselves. Therefore,
\begin{align}
\lim_{\delta \to 0^+} d_H( D( \alpha \,|\, \mathbf{x} ) , D( \alpha - \delta \,|\, \mathbf{x} ) ) = 0 .
\label{lemma6eq2}
\end{align}
Next, we look into the term $ d_H( D( \alpha + \delta \,|\, \mathbf{x} ) , D( \alpha \,|\, \mathbf{x} ) ) $. We have $ d_H( D( \alpha + \delta \,|\, \mathbf{x} ) , \allowbreak D( \alpha \,|\, \mathbf{x} ) ) = \sup\{ d_0( \mathbf{y}, D( \alpha + \delta \,|\, \mathbf{x} ) ) \,|\, \mathbf{y} \in D( \alpha \,|\, \mathbf{x} ) \} $. Suppose, if possible, $ \lim_{\delta \to 0^+} d_H( D( \alpha + \delta \,|\, \mathbf{x} ) , D( \alpha \,|\, \mathbf{x} ) ) \allowbreak > d_H( D_0( \alpha \,|\, \mathbf{x} ) , D( \alpha \,|\, \mathbf{x} ) ) $. Then, there are $ \delta_2, \delta_3 > 0 $ and a sequence $ \{ \mathbf{y}_n \} \subset D( \alpha \,|\, \mathbf{x} ) $ such that $ d_0( \mathbf{y}_n, D( \alpha ( 1 + (1/n) ) \,|\, \mathbf{x} ) )
> \delta_2 > \delta_3
> d_H( D_0( \alpha \,|\, \mathbf{x} ) , D( \alpha \,|\, \mathbf{x} ) ) $ for all sufficiently large $ n $. From \ref{a6}, we have $ D( \alpha \,|\, \mathbf{x} ) $ closed and bounded. So, $ \{ \mathbf{y}_n \} $ has a convergent subsequence $ \{ \mathbf{y}_{n_k} \} $ with $ \lim_k \mathbf{y}_{n_k} = \mathbf{y}_0 \in D( \alpha \,|\, \mathbf{x} ) $. Since $ d_0( \mathbf{y}_0, D( \alpha ( 1 + (1/n_k) ) \,|\, \mathbf{x} ) ) \ge d_0( \mathbf{y}_{n_k}, D( \alpha ( 1 + (1/n_k) ) \,|\, \mathbf{x} ) ) - \| \mathbf{y}_{n_k} - \mathbf{y}_0 \| $, it follows that $ d_0( \mathbf{y}_0, D( \alpha ( 1 + (1/n_k) ) \,|\, \mathbf{x} ) ) > \delta_3 > d_0( \mathbf{y}_0 , D_0( \alpha \,|\, \mathbf{x} ) ) $ for all sufficiently large $ k $. This implies that there is $ \mathbf{y}' \in D_0( \alpha \,|\, \mathbf{x} ) $ with $ \delta_3 > \| \mathbf{y}_0 - \mathbf{y}' \| $. Since $ \rho( \mathbf{y}' \,|\, \mathbf{x} ) > \alpha $, we have $ \mathbf{y}' \in D( \alpha ( 1 + (1/n_k) ) \,|\, \mathbf{x} ) $ for all sufficiently large $ k $, but this leads to a contradiction as then $ \| \mathbf{y}_0 - \mathbf{y}' \| \ge d_0( \mathbf{y}_0, D( \alpha ( 1 + (1/n_k) ) \,|\, \mathbf{x} ) ) > \delta_3 $. Hence, we have
\begin{align}
\lim_{\delta \to 0^+} d_H( D( \alpha + \delta \,|\, \mathbf{x} ) , D( \alpha \,|\, \mathbf{x} ) ) = d_H( D_0( \alpha \,|\, \mathbf{x} ) , D( \alpha \,|\, \mathbf{x} ) ) .
\label{lemma6eq3}
\end{align}
From \ref{a7} and \eqref{lemma6eq3}, we get
\begin{align}
\lim_{\delta \to 0^+} d_H( D( \alpha + \delta \,|\, \mathbf{x} ) , D( \alpha \,|\, \mathbf{x} ) ) = 0 .
\label{lemma6eq4}
\end{align}
Therefore, from \eqref{lemma6eq1}, \eqref{lemma6eq2} and \eqref{lemma6eq4}, we get
$ d_H( D( \alpha + \delta \,|\, \mathbf{x} ), D( \alpha - \delta \,|\, \mathbf{x} ) ) \to 0 $ as $ \delta \to 0^+ $.
Given any $ \epsilon > 0 $ and $ \delta > 0 $, take $ \delta_1 > 0 $ such that
\begin{align*}
& d_H( D( \alpha + \delta_1 \,|\, \mathbf{x} ), D( \alpha - \delta_1 \,|\, \mathbf{x} ) )
< \delta .
\end{align*}
Denote the event
$ A_n( \delta_1 ) = \left\{ D( \alpha + \delta_1 \,|\, \mathbf{x} ) \subseteq D_n( \alpha_n \,|\, \mathbf{x} ) \subseteq D( \alpha - \delta_1 \,|\, \mathbf{x} ) \right\} $.
From \autoref{lemma3}, we get that there exists $ N( \epsilon, \delta_1 ) $ such that for all $ n \ge N( \epsilon, \delta_1 ) $, $ \mathbb{P}\left[ A_n( \delta_1 ) \right] > 1 - \epsilon $.
Also, when the event $ A_n( \delta_1 ) $ occurs, we have
\begin{align*}
d_H( D_n( \alpha_n \,|\, \mathbf{x} ) , D( \alpha \,|\, \mathbf{x} ) ) 
\le 
d_H( D( \alpha + \delta_1 \,|\, \mathbf{x} ), D( \alpha - \delta_1 \,|\, \mathbf{x} ) )
< \delta ,
\end{align*}
which implies
\begin{align*}
\mathbb{P}[ d_H( D_n( \alpha_n \,|\, \mathbf{x} ) , D( \alpha \,|\, \mathbf{x} ) ) 
\le \delta ]
\ge \mathbb{P}\left[ A_n( \delta_1 ) \right] > 1 - \epsilon
\end{align*}
for all $ n \ge N( \epsilon, \delta_1 ) $.
Hence, $ d_H( D_n( \alpha_n \,|\, \mathbf{x} ), D( \alpha \,|\, \mathbf{x} ) ) \stackrel{P}{\longrightarrow} 0 $ as $ n \to \infty $
\end{proof}

\begin{proof}[Proof of \autoref{lemma7}]
We have
\begin{align}
& \left\| \int \mathbf{y} \mathbb{I}( \mathbf{y} \in D_n( \alpha_n \,|\, \mathbf{x} ) ) \mu_n( \mathrm{d} \mathbf{y} \,|\, \mathbf{x} )
- \int \mathbf{y} \mathbb{I}( \mathbf{y} \in D( \alpha \,|\, \mathbf{x} ) ) \mu( \mathrm{d} \mathbf{y} \,|\, \mathbf{x} ) \right\| \nonumber\\
& \le \left\| \int \mathbf{y} \mathbb{I}( \mathbf{y} \in D_n( \alpha_n \,|\, \mathbf{x} ) ) \mu_n( \mathrm{d} \mathbf{y} \,|\, \mathbf{x} )
- \int \mathbf{y} \mathbb{I}( \mathbf{y} \in D( \alpha \,|\, \mathbf{x} ) ) \mu_n( \mathrm{d} \mathbf{y} \,|\, \mathbf{x} ) \right\| \nonumber\\
& \quad + \left\| \int \mathbf{y} \mathbb{I}( \mathbf{y} \in D( \alpha \,|\, \mathbf{x} ) ) \mu_n( \mathrm{d} \mathbf{y} \,|\, \mathbf{x} )
- \int \mathbf{y} \mathbb{I}( \mathbf{y} \in D( \alpha \,|\, \mathbf{x} ) ) \mu( \mathrm{d} \mathbf{y} \,|\, \mathbf{x} ) \right\| .
\label{lemma7eq1}
\end{align}

Define the event $ A_n( \delta ) = \{ D( \alpha + \delta \,|\, \mathbf{x} ) \subseteq D_n( \alpha_n \,|\, \mathbf{x} ) \subseteq D( \alpha - \delta \,|\, \mathbf{x} ) \} $, where $ \delta > 0 $.
Since $ \alpha > 0 $, from \ref{a6}, we get that there is $ \delta_1 > 0 $ such that for all $ 0 < \delta \le \delta_1 $, $ D( \alpha - \delta \,|\, \mathbf{x} ) $ is a bounded set. For any $ 0 < \delta \le \delta_1 $, when the event $ A_n( \delta ) $ occurs, we have
\begin{align}
& \left\| \int \mathbf{y} \mathbb{I}( \mathbf{y} \in D_n( \alpha_n \,|\, \mathbf{x} ) ) \mu_n( \mathrm{d} \mathbf{y} \,|\, \mathbf{x} )
- \int \mathbf{y} \mathbb{I}( \mathbf{y} \in D( \alpha \,|\, \mathbf{x} ) ) \mu_n( \mathrm{d} \mathbf{y} \,|\, \mathbf{x} ) \right\|
\nonumber\\
& \le
\int \left\| \mathbf{y} \right\| | \mathbb{I}( \mathbf{y} \in D_n( \alpha_n \,|\, \mathbf{x} ) ) - \mathbb{I}( \mathbf{y} \in D( \alpha \,|\, \mathbf{x} ) ) | \mu_n( \mathrm{d} \mathbf{y} \,|\, \mathbf{x} )
\nonumber\\
& \le
\int \left\| \mathbf{y} \right\| \mathbb{I}( \mathbf{y} \in D( \alpha - \delta \,|\, \mathbf{x} ) \cap ( D( \alpha + \delta \,|\, \mathbf{x} ) )^c ) \mu_n( \mathrm{d} \mathbf{y} \,|\, \mathbf{x} )
\nonumber\\
& \le
\sup\left\{ \left\| \mathbf{y} \right\| \,|\, \mathbf{y} \in D( \alpha - \delta \,|\, \mathbf{x} ) \right\}
\mu_n\left( \{ \mathbf{y} \,|\, \alpha - \delta \le \rho( \mathbf{y} \,|\, \mathbf{x} ) < \alpha + \delta \} \,|\, \mathbf{x} \right)
\nonumber\\
& \le
\sup\left\{ \left\| \mathbf{y} \right\| \,|\, \mathbf{y} \in D( \alpha - \delta_1 \,|\, \mathbf{x} ) \right\}
\mu_n\left( \{ \mathbf{y} \,|\, \alpha - \delta \le \rho( \mathbf{y} \,|\, \mathbf{x} ) < \alpha + \delta \} \,|\, \mathbf{x} \right) .
\label{lemma7eq2}
\end{align}
Since $ \sup\left\{ \left\| \mathbf{y} \right\| \,|\, \mathbf{y} \in D( \alpha - \delta_1 \,|\, \mathbf{x} ) \right\} < \infty $, from \ref{a4} and \eqref{eq1}, \autoref{lemma3} and \eqref{lemma7eq2}, we get
\begin{align}
& \left\| \int \mathbf{y} \mathbb{I}( \mathbf{y} \in D_n( \alpha_n \,|\, \mathbf{x} ) ) \mu_n( \mathrm{d} \mathbf{y} \,|\, \mathbf{x} )
- \int \mathbf{y} \mathbb{I}( \mathbf{y} \in D( \alpha \,|\, \mathbf{x} ) ) \mu_n( \mathrm{d} \mathbf{y} \,|\, \mathbf{x} ) \right\| \nonumber\\
& \stackrel{P}{\longrightarrow} 0
\quad
\text{as } n \to \infty .
\label{lemma7eq3}
\end{align}

Next, recall that $ d_0( \mathbf{u}, A ) = \inf\{ \| \mathbf{u} - \mathbf{v} \| \,|\, \mathbf{v} \in A \} $ for a point $ \mathbf{u} $ and a set $ A $. Denote $ r_+ = r \mathbb{I}( r > 0 ) $ for $ r \in \mathbb{R} $. Define the functions $ f_1( \cdot ) : \mathbb{R}^p \to \mathbb{R}^p $ and $ f_2( \cdot ) : \mathbb{R}^p \to \mathbb{R}^p $ by
\begin{align*}
& f_1( \mathbf{y} ) = \mathbf{y} \mathbb{I}( \mathbf{y} \in D( \alpha \,|\, \mathbf{x} ) )  
\quad \text{and} \quad
f_2( \mathbf{y} ) = \mathbf{y} \left( 1 - \delta^{-1} d_0( \mathbf{y}, D( \alpha \,|\, \mathbf{x} ) ) \right)_+ ,
\end{align*}
where $ \delta > 0 $.
From \ref{a6}, we get that the set of discontinuity points of $ f_1( \cdot ) $ is $ \{ \mathbf{y} \,|\, \rho( \mathbf{y} \,|\, \mathbf{x} ) = \alpha \} $. Therefore, from \ref{a4} and \eqref{eq1} and Theorem 2.7 in \citet[p.~21]{billingsley2013convergence}, we get
\begin{align}
\mu_n\left( f_1^{-1}( \cdot ) \middle\arrowvert \mathbf{x} \right) \stackrel{w}{\longrightarrow} \mu\left( f_1^{-1}( \cdot ) \middle\arrowvert \mathbf{x} \right)
\quad
\text{\emph{almost surely} as }
n \to \infty .
\label{lemma7eq4}
\end{align}
Since $ \alpha > 0 $, from \ref{a6}, we get that $ D( \alpha \,|\, \mathbf{x} ) $ is a bounded set, and hence, $ f_2( \cdot ) $ is a bounded continuous function on $ \mathbb{R}^p $. Therefore, from \eqref{lemma7eq4}, we have
\begin{align}
& \left\| \int f_2 \mathrm{d} \mu_n\left( f_1^{-1} \middle\arrowvert \mathbf{x} \right)
- \int f_2 \mathrm{d} \mu\left( f_1^{-1} \middle\arrowvert \mathbf{x} \right) \right\|
\stackrel{a.s.}{\longrightarrow} 0
\quad
\text{as } n \to \infty .
\label{lemma7eq5}
\end{align}
Note that
\begin{align*}
& \int f_2 \mathrm{d} \mu_n\left( f_1^{-1} \middle\arrowvert \mathbf{x} \right)
= \int \mathbf{y} \mathbb{I}( \mathbf{y} \in D( \alpha \,|\, \mathbf{x} ) ) \mu_n( \mathrm{d} \mathbf{y} \,|\, \mathbf{x} ) \nonumber\\
& \text{and}\quad
\int f_2 \mathrm{d} \mu\left( f_1^{-1} \middle\arrowvert \mathbf{x} \right)
= \int \mathbf{y} \mathbb{I}( \mathbf{y} \in D( \alpha \,|\, \mathbf{x} ) ) \mu( \mathrm{d} \mathbf{y} \,|\, \mathbf{x} ) .
\end{align*}
Hence, from \eqref{lemma7eq5}, we have
\begin{align}
& \left\| \int \mathbf{y} \mathbb{I}( \mathbf{y} \in D( \alpha \,|\, \mathbf{x} ) ) \mu_n( \mathrm{d} \mathbf{y} \,|\, \mathbf{x} )
- \int \mathbf{y} \mathbb{I}( \mathbf{y} \in D( \alpha \,|\, \mathbf{x} ) ) \mu( \mathrm{d} \mathbf{y} \,|\, \mathbf{x} ) \right\| \nonumber\\
& \stackrel{a.s.}{\longrightarrow} 0
\quad\text{as } n \to \infty .
\label{lemma7eq6}
\end{align}
From \eqref{lemma7eq1}, \eqref{lemma7eq3} and \eqref{lemma7eq6}, we get
\begin{align*}
\int \mathbf{y} \mathbb{I}( \mathbf{y} \in D_n( \alpha_n \,|\, \mathbf{x} ) ) \mu_n( \mathrm{d} \mathbf{y} \,|\, \mathbf{x} )
\stackrel{P}{\longrightarrow}
\int \mathbf{y} \mathbb{I}( \mathbf{y} \in D( \alpha \,|\, \mathbf{x} ) ) \mu( \mathrm{d} \mathbf{y} \,|\, \mathbf{x} )
\end{align*}
as $ n \to \infty $.
\end{proof}

\begin{proof}[Proof of \autoref{lemma8}]
Given any $ \delta > 0 $, define the event
\begin{align*}
A_n( \delta ) = \{ D( \alpha + \delta \,|\, \mathbf{x} ) \subseteq D_n( \alpha_n \,|\, \mathbf{x} ) \subseteq D( \alpha - \delta \,|\, \mathbf{x} ) \} .
\end{align*}
When the event $ A_n( \delta ) $ occurs, we have
\begin{align}
\left| \lambda( D_n( \alpha_n \,|\, \mathbf{x} ) )
- \lambda( D( \alpha \,|\, \mathbf{x} ) ) \right|
\le
\lambda( \{ \mathbf{y} \,|\, \alpha - \delta \le \rho( \mathbf{y} \,|\, \mathbf{x} ) < \alpha + \delta \} ) .
\label{lemma8eq1}
\end{align}
From the assumptions in the lemma and using the Radon-Nikodym theorem, we get that $ \lambda( \cdot ) $ has a Radon-Nikodym derivative $ f( \cdot ) $ with respect to $ \mu( \cdot \,|\, \mathbf{x} ) $, which is continuous and bounded over bounded sets. From \ref{a4}, we have $ \mu( \{ \mathbf{y} \,|\, \alpha - \delta \le \rho( \mathbf{y} \,|\, \mathbf{x} ) < \alpha + \delta \} \,|\, \mathbf{x} )
\to 0 $ as $ \delta \to 0^+ $. So,
\begin{align}
\lambda( \{ \mathbf{y} \,|\, \alpha - \delta \le \rho( \mathbf{y} \,|\, \mathbf{x} ) < \alpha + \delta \} )
\to 0
\quad \text{as } \delta \to 0^+ .
\label{lemma8eq2}
\end{align}
Therefore, from \autoref{lemma3}, \eqref{lemma8eq1} and \eqref{lemma8eq2}, we have
$ \lambda( D_n( \alpha_n \,|\, \mathbf{x} ) ) )
\stackrel{P}{\longrightarrow} \lambda( D( \alpha \,|\, \mathbf{x} ) ) $
as $ n \to \infty $.
\end{proof}

\begin{proof}[Proof of \autoref{lemma9}]
Recall that $ d_H( A, B ) = \inf\{ \epsilon \,|\, A \subseteq B^\epsilon , B \subseteq A^\epsilon \} $, where $ A^\epsilon $ and $ B^\epsilon $ denote the $ \epsilon $-neighborhoods of $ A $ and $ B $, respectively.
Take any $ \mathbf{u}, \mathbf{v} \in A $. From the definition of $ d_H( A, B ) $, we get that given any $ \epsilon > 0 $, there are $ \mathbf{u}', \mathbf{v}' \in B $ such that
\begin{align*}
\| \mathbf{u} - \mathbf{u}' \| < d_H( A, B ) + \epsilon 
\quad \text{and} \quad
\| \mathbf{v} - \mathbf{v}' \| < d_H( A, B ) + \epsilon .
\end{align*}
Since $ \| \mathbf{u}' - \mathbf{v}' \| \le \text{Diameter}( B ) $, we have
\begin{align*}
\| \mathbf{u} - \mathbf{v} \|
\le 
\| \mathbf{u} - \mathbf{u}' \|
+ \| \mathbf{v} - \mathbf{v}' \|
+ \| \mathbf{u}' - \mathbf{v}' \|
\le
2 d_H( A, B ) + 2 \epsilon + \text{Diameter}( B ) .
\end{align*}
Since $ \mathbf{u}, \mathbf{v} $ are arbitrary points in $ A $, we have
\begin{align}
\text{Diameter}( A ) - \text{Diameter}( B )
\le 2 d_H( A, B ) + 2 \epsilon .
\label{lemma9eq1}
\end{align}
Similarly, we can show that
\begin{align}
\text{Diameter}( B ) - \text{Diameter}( A )
\le 2 d_H( A, B ) + 2 \epsilon .
\label{lemma9eq2}
\end{align}
Since $ \epsilon > 0 $ is arbitrary, from \eqref{lemma9eq1} and \eqref{lemma9eq2}, we have $ | \text{Diameter}( A ) - \text{Diameter}( B ) | \le 2 d_H( A, B ) $.
\end{proof}

\begin{proof}[Proof of \autoref{lemma10}]
Note that
\begin{align}
& d_H( D_n'( \alpha_n \,|\, \mathbf{x} ), D( \alpha \,|\, \mathbf{x} ) ) \nonumber\\
& = \max\left\{ \sup_{ \mathbf{u} \in D_n'( \alpha_n \,|\, \mathbf{x} ) } d_0( \mathbf{u}, D( \alpha \,|\, \mathbf{x} ) ), \sup_{ \mathbf{v} \in D( \alpha \,|\, \mathbf{x} ) } d_0( \mathbf{v}, D_n'( \alpha_n \,|\, \mathbf{x} ) ) \right\} .
\label{lemma10eq1}
\end{align}
Now, given any $ \epsilon , \delta > 0 $, it follows from \autoref{lemma3} that
\begin{align*}
\mathbb{P}[ D_n'( \alpha_n \,|\, \mathbf{x} ) \subset D_n( \alpha_n \,|\, \mathbf{x} ) \subseteq D( \alpha - \delta \,|\, \mathbf{x} ) ] > 1 - \epsilon
\end{align*}
for all sufficiently large $ n $.
Denote the event
\begin{align*}
A_n( \delta ) = \{ D_n'( \alpha_n \,|\, \mathbf{x} ) \subseteq D( \alpha - \delta \,|\, \mathbf{x} ) \} .
\end{align*}
When the event $ A_n( \delta ) $ occurs, we have
\begin{align*}
& \sup\{ d_0( \mathbf{u}, D( \alpha \,|\, \mathbf{x} ) ) \,|\, \mathbf{u} \in D_n'( \alpha_n \,|\, \mathbf{x} ) \} 
\le \sup\{ d_0( \mathbf{u}, D( \alpha \,|\, \mathbf{x} ) ) \,|\, \mathbf{u} \in D( \alpha - \delta \,|\, \mathbf{x} ) \} .
\end{align*}
Since $ \sup\{ d_0( \mathbf{u}, D( \alpha \,|\, \mathbf{x} ) ) \,|\, \mathbf{u} \in D( \alpha - \delta \,|\, \mathbf{x} ) \} = d_H( D( \alpha \,|\, \mathbf{x} ) , D( \alpha - \delta \,|\, \mathbf{x} ) ) $, from \eqref{lemma6eq2} it follows that
\begin{align}
\sup_{ \mathbf{u} \in D_n'( \alpha_n \,|\, \mathbf{x} ) } d_0( \mathbf{u}, D( \alpha \,|\, \mathbf{x} ) ) \stackrel{P}{\longrightarrow} 0
\quad
\text{as }
n \to \infty .
\label{lemma10eq2}
\end{align}

Next, we consider the term $ \sup\{ d_0( \mathbf{v}, D_n'( \alpha_n \,|\, \mathbf{x} ) ) \,|\, \mathbf{v} \in D( \alpha \,|\, \mathbf{x} ) \} $. Let $ \epsilon , \delta > 0 $ be any given numbers.
From \ref{a6}, we get that $ D( \alpha \,|\, \mathbf{x} ) $ is a compact subset of $ \mathbb{R}^p $. So, we can cover $ D( \alpha \,|\, \mathbf{x} ) $ by a finite number of pairwise disjoint semi-open hypercubes of long-diagonal length $ (\delta / 2) $, such that the interior of each hypercube has a non-empty intersection with $ D( \alpha \,|\, \mathbf{x} ) $. Let the number of such hypercubes be $ n_1 $ and the collection of hypercubes be denoted as $ \{ C_1, \ldots, C_{n_1} \} $. So, under the assumption of the lemma, we have
\begin{align}
& \min\{ \mathbb{P}[ \mathbf{Y} \in C_i \cap D( \alpha \,|\, \mathbf{x} ) ] \,|\, i = 1, \ldots, n_1 \} \nonumber\\
& \ge \min\{ \mathbb{P}[ \mathbf{Y} \in \text{int}( C_i ) \cap D( \alpha \,|\, \mathbf{x} ) ] \,|\, i = 1, \ldots, n_1 \}
> 0 ,
\label{lemma10eq3}
\end{align}
where $ \text{int}( C ) $ denotes the interior of a set $ C $.
Consider the event
\begin{align*}
A_n = \{ \text{Each } ( C_i \cap D( \alpha \,|\, \mathbf{x} ) ) \text{ contains at least one } \mathbf{Y}_j, j = 1, \ldots, n \}.
\end{align*}
Since the hypercubes are pairwise disjoint, from \eqref{lemma10eq3}, we have $ \mathbb{P}[ A_n ] \to 1 $ as $ n \to \infty $. Let $ n_2 $ be an integer such that for all $ n \ge n_2 $,
\begin{align}
\mathbb{P}[ A_n ] > 1 - (\epsilon / 2) .
\label{lemma10eq4}
\end{align}
From \eqref{lemma6eq4}, we can find $ \delta_1 > 0 $ sufficiently small such that
\begin{align}
\sup\{ d_0( \mathbf{v}, D( \alpha + 2 \delta_1 \,|\, \mathbf{x} ) ) \,|\, \mathbf{v} \in D( \alpha \,|\, \mathbf{x} ) \}
< (\delta / 2) .
\label{lemma10eq5}
\end{align}
So, using the triangle inequality and \eqref{lemma10eq5}, we have
\begin{align}
& \sup\{ d_0( \mathbf{v}, D_n'( \alpha_n \,|\, \mathbf{x} ) ) \,|\, \mathbf{v} \in D( \alpha \,|\, \mathbf{x} ) \} \nonumber\\
& \le \sup\{ d_0( \mathbf{v}, D( \alpha + 2 \delta_1 \,|\, \mathbf{x} ) ) \,|\, \mathbf{v} \in D( \alpha \,|\, \mathbf{x} ) \} \nonumber\\
& \quad
+ \sup\{ d_0( \mathbf{v}, D_n'( \alpha_n \,|\, \mathbf{x} ) ) \,|\, \mathbf{v} \in D( \alpha + 2 \delta_1 \,|\, \mathbf{x} ) \}
\nonumber\\
& < (\delta / 2)
+ \sup\{ d_0( \mathbf{v}, D_n'( \alpha_n \,|\, \mathbf{x} ) ) \,|\, \mathbf{v} \in D( \alpha + 2 \delta_1 \,|\, \mathbf{x} ) \} .
\label{lemma10eq6}
\end{align}
Define the event
\begin{align*}
B_n = \left\{ | \alpha_n - \alpha | < \delta_1 \text{ and } \sup_{ \mathbf{y} \in \mathbb{R}^p } | \rho_n( \mathbf{y} \,|\, \mathbf{x} ) - \rho( \mathbf{y} \,|\, \mathbf{x} ) | < \delta_1 \right\} .
\end{align*}
Since $ \alpha_n \stackrel{P}{\longrightarrow} \alpha $ as $ n \to \infty $, and \eqref{eq2} is assumed to be satisfied, there is an integer $ n_3 $ such that for all $ n \ge n_3 $,
\begin{align}
\mathbb{P}[ B_n ] > 1 - (\epsilon / 2) .
\label{lemma10eq7}
\end{align}
When the event $ B_n $ occurs, we have $ \rho_n( \mathbf{Y}_j \,|\, \mathbf{x} ) \ge \alpha_n $ for any sample observation $ \mathbf{Y}_j \in D( \alpha + 2 \delta_1 \,|\, \mathbf{x} ) $. Consequently, when the event $ ( A_n \cap B_n ) $ occurs, we have
\begin{align}
\sup\{ d_0( \mathbf{v}, D_n'( \alpha_n \,|\, \mathbf{x} ) ) \,|\, \mathbf{v} \in D( \alpha + 2 \delta_1 \,|\, \mathbf{x} ) \} 
< (\delta / 2)
\label{lemma10eq8}
\end{align}
since $ D( \alpha + 2 \delta_1 \,|\, \mathbf{x} ) \subset D( \alpha \,|\, \mathbf{x} ) $.
Therefore, from \eqref{lemma10eq4}, \eqref{lemma10eq6}, \eqref{lemma10eq7} and \eqref{lemma10eq8}, we have for all $ n \ge \max\{ n_2, n_3 \} $,
\begin{align}
& \mathbb{P}\left[ \sup\{ d_0( \mathbf{v}, D_n'( \alpha_n \,|\, \mathbf{x} ) ) \,|\, \mathbf{v} \in D( \alpha \,|\, \mathbf{x} ) \} < \delta \right] 
\nonumber\\
& \ge
\mathbb{P}\left[ \sup\{ d_0( \mathbf{v}, D_n'( \alpha_n \,|\, \mathbf{x} ) ) \,|\, \mathbf{v} \in D( \alpha + 2 \delta_1 \,|\, \mathbf{x} ) \} < (\delta / 2) \right] 
\nonumber\\
& \ge
\mathbb{P}\left[ \sup\{ d_0( \mathbf{v}, D_n'( \alpha_n \,|\, \mathbf{x} ) ) \,|\, \mathbf{v} \in D( \alpha + 2 \delta_1 \,|\, \mathbf{x} ) \} < (\delta / 2) \middle\arrowvert A_n \cap B_n \right] \mathbb{P}\left[ A_n \cap B_n \right] \nonumber\\
& = \mathbb{P}\left[ A_n \cap B_n \right]
> 1 - \epsilon .
\label{lemma10eq9}
\end{align}
From \eqref{lemma10eq9}, we get
\begin{align}
\sup_{\mathbf{v} \in D( \alpha \,|\, \mathbf{x} )} d_0( \mathbf{v}, D_n'( \alpha_n \,|\, \mathbf{x} ) )
\stackrel{P}{\longrightarrow} 0
\quad
\text{as }
n \to \infty .
\label{lemma10eq10}
\end{align}
The proof is complete from \eqref{lemma10eq1}, \eqref{lemma10eq2} and \eqref{lemma10eq10}.
\end{proof}

\bibliographystyle{apa}
\bibliography{bibliography_database}

\end{document}